\newtheorem{theorem}{Theorem}
\newtheorem{definition}{Definition}
\newtheorem{question}{Question}
\newtheorem{lemma}{Lemma}
\DeclareMathOperator{\dom}{dom}
\DeclareMathOperator{\seq}{seq}
\DeclareMathOperator{\lh}{lh}
\DeclareMathOperator{\app}{app}
\DeclareMathOperator{\proj}{proj}
\DeclareMathOperator{\Input}{Input}
\DeclareMathOperator{\Output}{Output}
\DeclareMathOperator{\Com}{Com}
\DeclareMathOperator{\Rec}{Rec}
\DeclareMathOperator{\Min}{Min}
\DeclareMathOperator{\isRat}{isRat}
\DeclareMathOperator{\rat}{rat}
\DeclareMathOperator{\isCom}{isCom}
\DeclareMathOperator{\com}{com}
\DeclareMathOperator{\FindCode}{FindCode}
\DeclareMathOperator{\AchievesError}{AchievesError}
\DeclareMathOperator{\MessageNumber}{MessageNumber}
\DeclareMathOperator{\BLB}{BLB}
\DeclareMathOperator{\Codes}{Codes}
\DeclareMathOperator{\Code}{Code}
\DeclareMathOperator{\RowNumber}{RowNumber}
\DeclareMathOperator{\ColumnNumber}{ColumnNumber}
\DeclareMathOperator{\Element}{Element}
\DeclareMathOperator{\isCode}{isCode}
\DeclareMathOperator{\ParCodes}{ParCodes}
\DeclareMathOperator{\notInside}{notInside}
\DeclareMathOperator{\Kron}{Kron}
\DeclareMathOperator{\Row}{Row}
\DeclareMathOperator{\Col}{Col}
\DeclareMathOperator{\CodeLength}{CodeLength}
\DeclareMathOperator{\MinLength}{MinLength}
\DeclareMathOperator{\RLB}{RLB}
\DeclareMathOperator{\FindCodeExt}{FindCodeExt}
\DeclareMathOperator{\MessageNumberExt}{MessageNumberExt}
\DeclareMathOperator{\Capacity}{Capacity}
\DeclareMathOperator{\RatInterpolation}{RatInterpolation}
\DeclareMathOperator{\even}{even}
\DeclareMathOperator{\odd}{odd}
\definecolor{c11}{rgb}{0.2, 0.4, 0.6}  
\definecolor{c12}{rgb}{0.6, 0.4, 0.2}  
\definecolor{c13}{rgb}{0.4, 0.6, 0.2}  
\definecolor{c14}{rgb}{0.5, 0.3, 0.5}  
\definecolor{c15}{rgb}{0.6, 0.2, 0.2}  
\definecolor{c21}{rgb}{0.2, 0.4, 0.6}  
\definecolor{c22}{rgb}{0.6, 0.4, 0.2}  
\definecolor{c23}{rgb}{0.4, 0.6, 0.2}  
\definecolor{c24}{rgb}{0.5, 0.3, 0.5}  
\definecolor{c25}{rgb}{0.6, 0.2, 0.2}  
\definecolor{c31}{rgb}{0.0, 0.6, 1.0}
\definecolor{c32}{rgb}{0.1, 0.4, 0.7}
\definecolor{c33}{rgb}{0.2, 0.2, 0.4}
\begin{document}

\title{On the Computability of Finding Capacity-Achieving Codes}

\author{Angelos Gkekas, Nikos A. Mitsiou,~\IEEEmembership{Graduate Member,~IEEE}, \\Ioannis Souldatos, and George K. Karagiannidis,~\IEEEmembership{Fellow,~IEEE}

\thanks{Angelos Gkekas, Nikos A. Mitsiou, and George K. Karagiannidis are with the Department of Electrical and Computer Engineering, Aristotle University of Thessaloniki, 54124 Thessaloniki, Greece (e-mails: gkekasaa@ece.auth.gr, nmitsiou@auth.gr, geokarag@auth.gr).}
\thanks{Ioannis Souldatos is with the Department of Mathematics, Aristotle University of Thessaloniki, 54124 Thessaloniki, Greece (e-mail: souldatos@math.auth.gr).}

}
\vspace{-9mm}

\maketitle

\begin{abstract}
This work studies the problem of constructing capacity-achieving codes from an algorithmic perspective.
Specifically, we prove that there exists a Turing machine which, given a discrete memoryless channel \(p_{Y\mid X}\), a target rate \(R\) less than the channel capacity \(C(p_{Y\mid X})\), and an error tolerance \(\epsilon > 0\), outputs a block code \(\mathcal{C}\) achieving a rate at least \(R\) and a maximum block error probability below \(\epsilon\). The machine operates in the general case where all transition probabilities of \(p_{Y\mid X}\) are computable real numbers, and the parameters \(R\) and \(\epsilon\) are rational. The proof builds on Shannon’s channel coding theorem and relies on an exhaustive search approach that systematically enumerates all codes of increasing block length until a valid code is found. This construction is formalized using the theory of recursive functions, yielding a \(\mu\)-recursive function \(\FindCode : \mathbb{N}^3 \rightharpoonup \mathbb{N}\) that takes as input appropriate encodings of \(p_{Y\mid X}\), \(R\), and \(\epsilon\), and, whenever \(R < C(p_{Y\mid X})\), outputs an encoding of a valid code. By Kleene’s normal form theorem, which establishes the computational equivalence between Turing machines and \(\mu\)-recursive functions, we conclude that the problem is solvable by a Turing machine. This result can also be extended to the case where \(\epsilon\) is a computable real number, while we further discuss an analogous generalization of our analysis when \(R\) is computable as well. We note that the assumptions that the probabilities of \(p_{Y\mid X}\), as well as \(\epsilon\) and \(R\) are computable real numbers cannot be further weakened, since computable reals constitute the largest subset of \(\mathbb{R}\) representable by algorithmic means.
\end{abstract}

\begin{IEEEkeywords}
Capacity-achieving codes, Turing machines, recursive functions, channel coding theorem, discrete memoryless channels (DMCs).
\end{IEEEkeywords}





\section{Introduction}

Capacity-achieving codes constitute a cornerstone of modern communication theory. They enable reliable information transmission over noisy environments by ensuring that the probability of decoding error can be made arbitrarily small, while incurring only a bounded increase in codeword length. Specifically, for any discrete memoryless channel (DMC) without feedback, characterized by the conditional distribution \(p_{Y \mid X}\),  {there exists a fundamental limit \(C(p_{Y \mid X})\), referred to as the channel capacity.} Shannon’s channel coding theorem establishes that for any desired error tolerance \(\epsilon > 0\) and any coding rate \(R \in (0, C)\), there exists a coding scheme \(\mathcal{C}\) that simultaneously achieves error probability smaller than \(\epsilon\) and rate at least \(R\). This seminal result, first established in Shannon’s groundbreaking work \cite{shannon1948mathematical}, laid the foundation for the entire field of modern digital communications.

However, Shannon’s original proof did not provide a formal approach for designing such codes. Since then, a variety of explicit code constructions that approach or achieve capacity have been developed. Among the most widely studied are low-density parity-check (LDPC) codes \cite{gallager2003low}, Reed–Solomon codes \cite{reed1960polynomial}, turbo codes \cite{berrou1993near}, and polar codes \cite{arikan2009channel}. These codes have been extensively analyzed and successfully implemented in practical communication systems and standards, where they have demonstrated exceptional performance, underscoring the lasting importance of Shannon's original result. Nonetheless, each of these constructions is typically tailored to specific families of channel models. Naturally, this raises the question of whether constructing capacity-achieving codes for arbitrary DMCs is feasible from a computability perspective.

Among the most well-known tools for studying computability are Turing machines and \(\mu\)-recursive functions. Turing machines, introduced by Turing in his seminal work \cite{turing1936computable}, constitute the earliest and most widely used formal model of computation, and are regarded as the foundation of theoretical computer science. The class of \(\mu\)-recursive functions provides an alternative model, defined as a collection of partial functions from tuples over \(\mathbb{N}\) to \(\mathbb{N}\). A notable subclass is that of the primitive recursive functions, first introduced in Gödel’s proof of the incompleteness theorems \cite{godel1931formal}. Their first complete formulation was later developed in Kleene’s studies on recursion theory \cite{kleene1936general}. For a comprehensive exposition of \(\mu\)-recursive functions, see \cite{kleene1952introduction, odifreddi1992classical, rogers1987theory, soare1999recursively}. These two models of computation are equivalent, in the sense that any computational task performed in one model can be simulated in the other. This equivalence is established by Kleene’s normal form theorem \cite{kleene1952introduction}, which allows us to freely adopt either model in our analysis. 

\subsection{Literature Review}

There are only a handful of relevant contributions that examine the computability of constructing channel codes for DMCs. First, in \cite{boche2020computability}, the computability of the zero-error capacity  for DMCs was studied using Turing machines and Kolmogorov oracles and it was proved that the zero-error capacity function is semi-computable in this setting, while \cite{cheraghchi2009capacity} employed a probabilistic channel coding construction to obtain capacity-achieving linear codes for a broad class of channels. However, the most fundamental work on the computability of constructing channel codes is \cite{boche2022turing}. In \cite{boche2022turing}, the following question was addressed
\begin{question}\label{q:stronger_problem}
For a given computable family of channels \(\mathcal{W}\) and an error tolerance \(\epsilon \in (0,1) \cap \mathbb{Q}\), does there exist a Turing machine \(M_{\mathcal{W}, \epsilon}\) such that, for every DMC \(p_{Y \mid X} \in \mathcal{W}\) and every blocklength \(n\), the machine outputs a block code \(\mathcal{C}_n\) of length \(n\), whose maximum block error probability \(\lambda_n\) and rate \(R_n\) satisfy \(\lambda_n  \to 0\) and \(R_n \to C(p_{Y \mid X})\) as \(n \to \infty\)?
\end{question}
A negative answer to this question was provided using tools from the theory of Turing machines, recursive functions and computable real analysis. Moreover, this result also implies that that such an algorithmic construction remains impossible even when the optimality condition is dropped and codes only need to achieve a fraction of the capacity.


\subsection{Motivation and Contribution}

The negative answer to the problem formulation stated in \cite{boche2022turing} motivates the exploration of weaker assumptions for the formulation of the problem of constructing capacity-achieving codes, under which the problem is computable. Ideally, one is interested in finding the least weak assumptions for which this problem remains computable. As a first step in this direction, Question \ref{q:problem_informal} arises

\begin{question}\label{q:problem_informal}
Does there exist a Turing machine \(M\) such that, when given as input a DMC \(p_{Y \mid X}\), a rate \(R < C(p_{Y \mid X})\) and an error tolerance \(\epsilon > 0\), it outputs a block code \(\mathcal{C}\) with rate at least \(R\) and with maximum block error probability \(\lambda_{\max} < \epsilon\)?
\end{question}
We prove that the answer to this question is positive. Specifically, we show that the above problem is solvable by a Turing machine in the general setting where all parameters of the channel \(p_{Y \mid X}\) are arbitrary computable real numbers, while the error tolerance \(\epsilon\) and the rate \(R\) are rational. We also extend this result by allowing \(\epsilon\) to be any computable real number, while we further provide an informal discussion of a similar extension of \(R\) to computable real numbers too. To prove this, first, we describe a solution to the capacity-achieving code construction problem via an exhaustive search algorithm that relies on appropriately predefined \(\mu\)-recursive functions. We then formally construct a \(\mu\)-recursive function that implements this algorithm. Finally, we invoke Kleene's normal form theorem to conclude that the problem is solvable by a Turing machine. We note that the intermediate step of converting the algorithm into a \(\mu\)-recursive function ensures full rigor, since it removes ambiguities in the implementation of certain operations, such as the representation of computable reals or rationals of arbitrary precision and the arithmetic performed on them. By contrast, the final conversion from a \(\mu\)-recursive function to a Turing machine is primarily aesthetic, as Turing machines are the standard formalism in which computability results are typically expressed.

Interestingly, the positive answer to Question \ref{q:problem_informal} also implies the existence of a Turing machine that takes as input a DMC \(p_{Y \mid X}\) and a parameter \(k \in \mathbb{N}\), and computes a code \(\mathcal{C}_k\) that achieves a rate of at least \(C(p_{Y \mid X}) - \tfrac{1}{k}\) with a maximum block error probability below \(\tfrac{1}{k}\). Hence, the sequence of codes \(\{\mathcal{C}_k\}\) satisfies the desired asymptotic properties of the rate approaching the channel capacity and the error probability tending to zero. The detailed construction of this Turing machine is discussed in Appendix \ref{app:extension3}. 

At first glance, this may seem to contradict the negative answer to Question 1 by \cite{boche2022turing}. However, this is not the case for two main reasons. First, the machine proven impossible in \cite{boche2022turing} takes as input a DMC \(p_{Y \mid X}\) and a number \(n\), and outputs a code \(\mathcal{C}_n\) with blocklength exactly \(n\). In contrast, the machine proposed in this work takes as input a DMC \(p_{Y \mid X}\) and a number \(k\), and produces a code \(\mathcal{C}_k\) without any restriction on its blocklength. Second, \cite{boche2022turing} defines the general notion of \emph{computable families} of DMCs and prove their results under the assumption that the DMC input space is one such family. In our case, we consider arbitrary DMCs of any dimension, but we require that all transition probabilities \(p_{Y \mid X}\) are computable real numbers.
\section{Outline}
The remainder of this paper is organized as follows. Section \ref{sec:recursion_framework} introduces some key concepts and results from the theory of recursive functions. In Section \ref{sec:computability_framework}, we present the model of computable real numbers used throughout the paper. Section \ref{sec:solution} formulates the main problem and develops a recursive function that solves it. Finally, Section \ref{sec:conclusion} offers concluding remarks.

\section{Recursion Framework}\label{sec:recursion_framework}

In this section, we introduce some concepts from the theory of recursive functions that are fundamental to our analysis. These include the classes of primitive recursive and \(\mu\)-recursive functions, primitive recursive encodings, Turing machines, Kleene's normal form theorem, and a recursion framework for functions and relations of rational numbers. These concepts can be traced back to the works of Gödel \cite{godel1931formal}, Turing \cite{turing1936computable}, and Kleene \cite{kleene1936general, kleene1943recursive, kleene1938notation}. For a modern exposition, we refer the reader to \cite{rogers1987theory, soare1999recursively, odifreddi1992classical}.

\subsection{Primitive and \(\mu-\)Recursion}

We begin by introducing the concepts of primitive recursiveness, \(\mu\)-recursiveness and primitive recursive encodings. As a first step, we define the notion of a \emph{partial function}.

\begin{definition}[Partial Function]
A partial function \(f : X \rightharpoonup Y\) is a function \(f : S \to Y\) with \(S \subseteq X\). The set \(S\) is called the domain of \(f\) and it is denoted by \(\dom(f)\). If \(x \in S\), we write \(f(x) \downarrow\) and if \(x \in X \setminus S\) we write \(f(x) \uparrow\) or \(f(x) = \bot\). If \(S = X\), then \(f\) is called a total function.
\end{definition}

We will mainly work with partial functions of the form \(f : \mathbb{N}^n \rightharpoonup \mathbb{N}\) for \(n \in \mathbb{N}\). We now define the operations of \emph{composition}, \emph{primitive recursion} and \emph{minimization} for such functions.

\begin{definition}[Composition of Partial Functions]\label{def:composition}
Let \(f : \mathbb{N}^m \rightharpoonup \mathbb{N}\) and \(g_1, g_2, \dots g_m : \mathbb{N}^n \rightharpoonup \mathbb{N}\) be partial functions. The composition \(h(\bar{x}) = f(g_1(\bar{x}), g_2(\bar{x}), \dots g_m(\bar{x}))\) is a partial function \(h : \mathbb{N}^n \rightharpoonup \mathbb{N}\) defined as:
\begin{equation}
    h(\bar{x}) = \begin{cases}
    f(c_1, c_2, \dots c_m), &\text{if } g_i(\bar{x}) = c_i \neq \bot, \ \forall i \\ &\text{and } f(c_1, c_2, \dots c_m) \downarrow \\
    \bot, &\text{otherwise}
    \end{cases}
\end{equation}
\end{definition}

\begin{definition}[Primitive Recursion]
Let \(g : \mathbb{N}^n \rightharpoonup \mathbb{N}\) and \(h : \mathbb{N}^{n+2} \rightharpoonup \mathbb{N}\) be partial functions. It can be shown that there exists a unique partial function \(f : \mathbb{N}^{n+1} \rightharpoonup \mathbb{N}\) such that:
\begin{equation}
    \begin{cases}
    f(0, \bar{x}) = g(\bar{x}), &\forall{\bar{x}} \in \mathbb{N}^n \\
    f(y + 1, \bar{x}) = h(f(y, \bar{x}), y, \bar{x}), &\forall y \in \mathbb{N}, \forall\bar{x} \in \mathbb{N}^n
    \end{cases}
\end{equation}
The compositions in the above expression are interpreted in accordance with definition \ref{def:composition}. This means that:
\begin{equation}
    \begin{cases}
    f(0, \bar{x}) \downarrow \Leftrightarrow g(\bar{x}) \downarrow \\
    f(y + 1, \bar{x}) \downarrow \Leftrightarrow f(y, \bar{x}) = c \neq \bot \text{ and } h(c, y, \bar{x}) \downarrow
    \end{cases}
\end{equation}
We say that this function \(f\) is defined by primitive recursion on \(h\) with base case \(g\).
\end{definition}
\begin{definition}[Minimization]
Let \(g : \mathbb{N}^{n+1} \rightharpoonup \mathbb{N}\) be a partial function. Define \(D_{x, \bar{y}} = \{i \in \mathbb{N} \mid i \geq x,\ g(i, \bar{y}) = 0 \text{ and } g(j, \bar{y}) \downarrow \text{ for all } j \text{ with } x \leq j \leq i\} \). The minimization of \(g\) is defined as the partial function \(f : \mathbb{N}^{n+1} \rightharpoonup \mathbb{N}\) with \(f(x, \bar{y}) = \mu i \geq x : (g(i, \bar{y}) = 0)\) where:
\begin{equation}
    \mu i \geq x : (g(i, \bar{y}) = 0) = \begin{cases}
    \min D_{x, \bar{y}}, &\text{if } D_{x, \bar{y}} \neq \varnothing \\
    \bot, &\text{otherwise} 
    \end{cases}
\end{equation}
If \(x=0\) we write for simplicity \(\mu i :(g(i, \bar{y}) = 0)\) instead of \(\mu i \geq 0 :(g(i, \bar{y}) = 0)\).
\end{definition}

We now proceed to define the classes of \emph{primitive recursive} and \emph{\(\mu-\)recursive functions}. To do so, we first introduce the set of \emph{basic functions} from which these classes are constructed.

\begin{definition}[Basic Functions]
The set \(B\) of basic functions is the set that includes exactly the following functions:
\begin{enumerate}
    \item For \(n,k \in \mathbb{N}\) the constant functions \(C^n_k : \mathbb{N}^n \rightarrow \mathbb{N}\) with \(C^n_k(\bar{x}) = k\)
    \item For \(n \in \mathbb{N}\) and \(0 \leq i < n\) the projections \(P^n_i : \mathbb{N}^n \rightarrow \mathbb{N}\) with \(P^n_i(x_0, x_1, \dots x_{n-1}) = x_i\)
    \item The successor function \(S(x) = x+1\)
\end{enumerate}
\end{definition}

\begin{definition}[Primitive Recursive and \(\mu-\)Recursive Functions]
The class \(R_p\) of primitive recursive functions is the closure of \(B\) for composition and primitive recursion. The class \(R_\mu\) of \(\mu-\)recursive functions is the closure of \(B\) for composition, primitive recursion and minimization.
\end{definition}

The notions of primitive and \(\mu-\)recursiveness also extend to relations.

\begin{definition}[Primitive Recursive and \(\mu-\)Recursive Relations] The characteristic function of a relation \(R \subseteq \mathbb{N}^n\) is the function \(\chi_R : \mathbb{N}^n \rightarrow \mathbb{N}\) defined as:
\begin{equation}
\chi_R(\bar{x}) = \begin{cases}
1, &\text{if } R(\bar{x}) \\
0, &\text{otherwise}
\end{cases}
\end{equation}
A relation \(R\) is called primitive recursive (\(\mu-\)recursive) iff \(\chi_R \in R_p\) (iff \(\chi_R \in R_{\mu}\)).
\end{definition}

Furthermore, we introduce the concept of \emph{primitive recursive encodings}.

\begin{definition}[Primitive Recursive Encoding]
We denote by \(\mathbb{N}^* = \bigcup_{n \in \mathbb{N}} \mathbb{N}^n\) the set of all finite sequences of natural numbers, including the empty sequence \(\varepsilon\). A function \(\langle \rangle : \mathbb{N}^* \rightarrow \mathbb{N}\) is called a primitive recursive encoding iff it satisfies the following conditions:
\begin{enumerate}
    \item \(\langle \rangle\) is injective 
    \item The relation \(\seq \subseteq \mathbb{N}\) defined by:
    \begin{equation}
        \seq(u) \Leftrightarrow \exists \bar{x} \in \mathbb{N}^* : \langle \bar{x} \rangle = u
    \end{equation}
    is primitive recursive
    \item The functions \(F_n : \mathbb{N}^n \rightarrow \mathbb{N}\) defined by:
    \begin{equation}
        F_n(\bar{x}) = \langle \bar{x} \rangle
    \end{equation}
    are primitive recursive for all \(n \in \mathbb{N}\)
    \item \(\langle x_0, x_1, \dots x_{n-1} \rangle > x_i\) for all \((x_0, x_1, \dots x_{n-1}) \in \mathbb{N}^*\) and \(0 \leq i < n\)
    \item There exist primitive recursive functions \(\lh : \mathbb{N} \rightarrow \mathbb{N}\) and \(\app\), \(\proj : \mathbb{N}^2 \rightarrow \mathbb{N}\) such that for all \((x_0, x_1, \dots x_{n-1}) \in \mathbb{N}^*\), \(0 \leq i < n\) and \(y \in \mathbb{N}\):
    \begin{itemize}
        \item \(\lh(\langle x_0, x_1, \dots x_{n-1} \rangle) = n\)
        \item \(\app(\langle x_0, x_1, \dots x_{n-1} \rangle, y) = \langle x_0, x_1, \dots x_{n-1}, y \rangle\)
        \item \(\proj(\langle x_0, x_1, \dots x_{n-1} \rangle, i) = x_i\)
    \end{itemize}
\end{enumerate}
\end{definition}

In general, we are only concerned with the values of the functions \(\lh\), \(\app\) and \(\proj\) when their arguments are of the form specified in condition (5) of the preceding definition. Although these functions are also well-defined for inputs not of this form, their values in these cases are not relevant.

For a given primitive recursive encoding we will use the notation \((u)_{i_1, i_2, \dots i_k}\) to denote nested application of the function \(\proj\) as:
\begin{equation}
    (u)_{i_1, i_2, \dots i_k} = \proj(\dots \proj(\proj(u, i_1), i_2) \dots , i_k)
\end{equation}

An example of a primitive recursive encoding is the \emph{classical encoding} defined by:
\begin{equation}
    \begin{cases}
        \langle \varepsilon \rangle = 1\\
        \langle x_0, x_1, \dots x_{n-1} \rangle = p_0^{x_0+1}\cdot p_1^{x_1 + 1} \cdot \dots p_{n-1}^{x_{n-1} +1}
    \end{cases}
\end{equation}
where \(p_i\) denotes the \(i-\)th prime number, starting with \(p_0 = 2\).

We list some standard lemmata concerning the closure properties of the classes \(R_p\) and \(R_\mu\), which will be used in our analysis. Proofs of these results can be found in \cite{rogers1987theory, soare1999recursively, odifreddi1992classical}.

\begin{lemma}[Standard Recursive Functions]
The standard addition, multiplication and exponentiation over \(\mathbb{N}\) are primitive recursive functions, as is the subtraction \(\dotdiv : \mathbb{N}^2 \rightarrow \mathbb{N}\) over \(\mathbb{N}\) defined by:
\begin{equation}
    x \dotdiv y = \begin{cases}
        x - y, &\text{if } x \geq y \\
        0, &\text{otherwise}
    \end{cases}
\end{equation}
\end{lemma}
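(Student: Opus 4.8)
The plan is to exhibit each of these functions explicitly as an element of $R_p$, built from the basic functions $B$ using only composition and primitive recursion, so that membership follows immediately from the closure definition of $R_p$. I would begin with addition: the equations $\mathrm{add}(0,x) = P^1_0(x)$ and $\mathrm{add}(y+1,x) = S(\mathrm{add}(y,x))$ exhibit $\mathrm{add}$ as defined by primitive recursion with base $g = P^1_0 \in B$ and step $h(z,y,x) = S(P^3_0(z,y,x))$, which is a composition of the successor with a projection and hence primitive recursive. Multiplication then follows the same pattern, $\mathrm{mult}(0,x) = C^1_0(x)$ and $\mathrm{mult}(y+1,x) = \mathrm{add}(\mathrm{mult}(y,x),x)$, where the step $h(z,y,x) = \mathrm{add}(P^3_0(z,y,x),P^3_2(z,y,x))$ is a composition of the already-established $\mathrm{add}$ with projections; exponentiation follows identically with base $C^1_1$ and step built from $\mathrm{mult}$. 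Since every basic function is total and both composition and primitive recursion preserve totality, each of these functions is total, matching the statement of the lemma (up to the order of arguments, which can always be permuted by composing with projections).

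For truncated subtraction I would first introduce the predecessor function $\mathrm{pred} : \mathbb{N} \to \mathbb{N}$ via $\mathrm{pred}(0) = 0$ and $\mathrm{pred}(y+1) = P^2_1(\mathrm{pred}(y),y)$; here the base case is the arity-zero constant $C^0_0$ and the step is the projection $P^2_1$, so $\mathrm{pred} \in R_p$. Because the primitive recursion scheme of the excerpt recurses on the \emph{first} argument, I would define the auxiliary function $s(y,x) = x \dotdiv y$ by $s(0,x) = P^1_0(x)$ and $s(y+1,x) = \mathrm{pred}(s(y,x))$, whose step $h(z,y,x) = \mathrm{pred}(P^3_0(z,y,x))$ is a composition of $\mathrm{pred}$ with a projection; hence $s \in R_p$. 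Finally $x \dotdiv y = s\bigl(P^2_1(x,y),\,P^2_0(x,y)\bigr)$ is a composition of $s$ with projections that merely swaps the arguments, so $\dotdiv \in R_p$. A short induction on $y$ confirms that $s(y,x)$ equals $x - y$ when $x \ge y$ and $0$ otherwise, which is the asserted closed form.

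The argument is entirely routine; the only points requiring a little care are bookkeeping matters rather than genuine obstacles — namely, respecting the convention that primitive recursion in the excerpt is performed on the leading argument (handled by the argument-swapping composition above) and permitting an arity-zero base case $C^0_0$ for the predecessor. Everything else is a direct unwinding of the closure definition of $R_p$, and these constructions are the ones recorded in the references cited before the lemma.
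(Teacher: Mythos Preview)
Your proposal is correct and is precisely the standard textbook construction; the paper itself does not give a proof of this lemma but simply refers the reader to \cite{rogers1987theory, soare1999recursively, odifreddi1992classical}, where exactly the definitions you wrote out (addition, multiplication, exponentiation by iterated recursion, predecessor, then truncated subtraction via predecessor) are the ones recorded.
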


\begin{lemma}[Definition by Cases]
Let \(R_1, R_2, \dots R_m \subseteq \mathbb{N}^n\) be primitive recursive (\(\mu-\)recursive) relations that partition \(\mathbb{N}^n\) and \(g_1, g_2, \dots g_m : \mathbb{N}^n \rightharpoonup \mathbb{N}\) be primitive recursive (\(\mu-\)recursive) functions. Then the function \(f : \mathbb{N}^n \rightharpoonup \mathbb{N}\) defined by:
\begin{equation}
    f(\bar{x}) = \begin{cases}
        g_1(\bar{x}), &\text{if } R_1(\bar{x}) \\
        g_2(\bar{x}), &\text{if } R_2(\bar{x}) \\
        \quad\vdots \\
        g_m(\bar{x}), &\text{if } R_m(\bar{x})
    \end{cases}
\end{equation}
is also primitive recursive (\(\mu-\)recursive).
\end{lemma}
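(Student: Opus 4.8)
The plan is to exhibit $f$ explicitly as a composition of maps already known to lie in the relevant class, using the partition hypothesis: for every $\bar x$ exactly one of the values $\chi_{R_1}(\bar x),\dots,\chi_{R_m}(\bar x)$ equals $1$ and the rest equal $0$, so if $k^\ast=k^\ast(\bar x)$ denotes the unique index with $R_{k^\ast}(\bar x)$ then the desired value is $f(\bar x)=g_{k^\ast}(\bar x)$.

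For the primitive recursive case I would first note that every function in $R_p$ is total — the basic functions are total and both composition and primitive recursion preserve totality — so in particular each $g_k$ is total. Then I put
\begin{equation*}
 f(\bar x)=\sum_{k=1}^{m} g_k(\bar x)\cdot\chi_{R_k}(\bar x).
\end{equation*}
Addition and multiplication are primitive recursive by the preceding lemma, each $\chi_{R_k}\in R_p$ by hypothesis, and $R_p$ is closed under composition, so $f\in R_p$; since exactly one $\chi_{R_k}(\bar x)$ equals $1$, the sum collapses to $g_{k^\ast}(\bar x)$, as wanted.

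For the $\mu$-recursive case the same formula works whenever the $g_k$ are total, but the real content is that a $\mu$-recursive $g_k$ may be partial. Then the weighted sum is wrong: under the composition semantics of Definition~\ref{def:composition} every term $g_k(\bar x)$ gets evaluated, so $f(\bar x)$ would be undefined as soon as $g_k(\bar x)\uparrow$ for some unselected $k$, whereas the intended $f$ must be defined exactly when $g_{k^\ast}(\bar x)\downarrow$. No arithmetic combination of the values $g_1(\bar x),\dots,g_m(\bar x)$ can cure this, since composition propagates $\bot$ from every argument (in particular $0\cdot\bot=\bot$), so the selection must be performed at the level of \emph{indices}. Concretely, by Kleene's normal form theorem each $g_k$ has a G\"odel index $\widehat g_k\in\mathbb{N}$, so the map $\bar x\mapsto\sum_{k=1}^{m}\widehat g_k\cdot\chi_{R_k}(\bar x)$ is total and $\mu$-recursive (only the fixed numbers $\widehat g_k$, never the partial values $g_k(\bar x)$, enter the sum), and composing it with the universal $\mu$-recursive function $\Psi$ gives
\begin{equation*}
 f(\bar x)=\Psi\!\Big(\textstyle\sum_{k=1}^{m}\widehat g_k\cdot\chi_{R_k}(\bar x),\ \bar x\Big),
\end{equation*}
which invokes only the branch $g_{k^\ast}$ and hence returns $g_{k^\ast}(\bar x)$ with precisely the intended domain. (Guards $e_k(\bar x)=\mu i:(1\dotdiv\chi_{R_k}(\bar x)=0)$ — equal to $0$ on $R_k$ and undefined off it — realize the same effect branch by branch, but merging the guarded pieces again needs the index-selection step.)

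I expect the partiality in the $\mu$-recursive case to be the only real obstacle: finitely many partial functions with pairwise disjoint decidable domains cannot be combined by any value-level operation, which forces the argument through the normal form theorem / a universal function — and, notably, the primitive recursive case cannot be treated that way, since $R_p$ admits no universal function, so it is exactly the totality of primitive recursive functions that makes the elementary sum formula legitimate there. The remaining verifications — that the displayed formulas indeed compute $g_{k^\ast}(\bar x)$ and that every ingredient belongs to the stated class — are routine consequences of the partition hypothesis and the listed closure properties.
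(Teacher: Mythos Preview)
Your argument is correct. The paper does not actually prove this lemma: it is listed among several ``standard lemmata concerning the closure properties of the classes \(R_p\) and \(R_\mu\)'' whose proofs are deferred to the cited references (Rogers, Soare, Odifreddi). Your treatment of the primitive recursive case via the weighted sum \(\sum_k g_k(\bar x)\chi_{R_k}(\bar x)\) is exactly the standard textbook argument, and your observation that it relies essentially on totality is the right diagnosis. For the \(\mu\)-recursive case you correctly identify the obstruction---strict composition semantics make any value-level combination of the \(g_k(\bar x)\) propagate \(\bot\) from unselected branches---and your resolution through index selection and the universal function \(\varphi^n\) is a clean and standard way to handle it. One minor remark: in the paper's ordering the normal form theorem appears after this lemma, so if you were writing this up inside the paper you would want to note that the dependency is only presentational, not logical (the normal form theorem is established from primitive recursive machinery alone).
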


\begin{lemma}[Bounded Minimization]
Let \(R \subseteq \mathbb{N}^{n+1}\) be a relation. Define \(D_{x,\bar{y}} = \{i \in \mathbb{N} \mid i \leq x \text{ and } R(i, \bar{y})\}\). The bounded minimization of \(R\) is defined as the function \(f : \mathbb{N}^{n+1} \rightharpoonup \mathbb{N}\) with \(f(x, \bar{y}) = \mu i \leq x : (R(i, \bar{y}))\) where:
\begin{equation}
    \mu i \leq x : (R(i, \bar{y})) = \begin{cases}
        \min D_{x, \bar{y}}, &\text{if } D_{x,\bar{y}} \neq \varnothing \\
        x+1, &\text{otherwise}
    \end{cases}
\end{equation}
If the relation \(R\) is primitive recursive (\(\mu\)-recursive), then the function \(f(x, \bar{y}) = \mu i \leq x : (R(i, \bar{y}))\) is also primitive recursive (\(\mu\)-recursive).
\end{lemma}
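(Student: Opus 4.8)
The plan is to exhibit an explicit closed form for the bounded-minimization function and to observe that it is assembled only from operations already known to preserve primitive recursiveness (and $\mu$-recursiveness). Concretely, I would first build the ``no witness so far'' indicator by primitive recursion on its first argument: put $\pi(0,\bar y)=1\dotdiv\chi_R(0,\bar y)$ and $\pi(i+1,\bar y)=\pi(i,\bar y)\cdot\bigl(1\dotdiv\chi_R(i+1,\bar y)\bigr)$, so that $\pi(i,\bar y)=1$ precisely when $R(j,\bar y)$ fails for every $j\le i$ and $\pi(i,\bar y)=0$ otherwise. I would then define
\begin{equation}
f(x,\bar y)=\sum_{i=0}^{x}\pi(i,\bar y),
\end{equation}
which is itself a primitive recursion on $x$ (namely $f(0,\bar y)=\pi(0,\bar y)$ and $f(x+1,\bar y)=f(x,\bar y)+\pi(x+1,\bar y)$), using that addition, multiplication and $\dotdiv$ are primitive recursive as recorded above, and that $\chi_R(i+1,\bar y)$ is primitive recursive by composition with the successor whenever $\chi_R$ is.

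The verification is a short induction on $x$. Writing $i_0$ for the least $i$ with $R(i,\bar y)$ when such an $i$ exists: if $i_0$ exists and $i_0\le x$, then $\pi(i,\bar y)=1$ for $i<i_0$ and $\pi(i,\bar y)=0$ for $i_0\le i\le x$, so the sum equals $i_0$; and if no $i\le x$ satisfies $R(i,\bar y)$, then $\pi(i,\bar y)=1$ for all $i\le x$, so the sum equals $x+1$. This matches the value prescribed for $\mu i\le x:(R(i,\bar y))$ in both regimes, so the displayed $f$ is exactly the bounded-minimization function. The hypothesis that $R$ is primitive recursive enters only through $\chi_R\in R_p$; everything else is closure of $R_p$ under composition and primitive recursion, which gives first $\pi\in R_p$ and then $f\in R_p$.

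For the $\mu$-recursive statement the same construction works verbatim: if $\chi_R\in R_\mu$, then composition with the basic functions and primitive recursion keep every intermediate function inside $R_\mu$, and all of these functions are total, so $f\in R_\mu$. An equivalent alternative, for those who prefer to avoid the closed form, is to define $f$ directly by primitive recursion on the bound, letting $f(x+1,\bar y)$ be given by the three-way split on whether $f(x,\bar y)\le x$, or $f(x,\bar y)>x$ and $R(x+1,\bar y)$, or $f(x,\bar y)>x$ and $\lnot R(x+1,\bar y)$, and invoking the Definition by Cases lemma; one then checks, by the same induction, that the reachable values of $f(x,\bar y)$ never exceed $x+1$ so that the split behaves as intended. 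Either way there is essentially no obstacle; the only mildly delicate point is the bookkeeping that the case analysis is exhaustive, and that is settled by the induction above.
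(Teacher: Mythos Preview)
Your argument is correct: the construction of the ``no witness yet'' indicator $\pi$ by primitive recursion, followed by the summation $f(x,\bar y)=\sum_{i\le x}\pi(i,\bar y)$, is a clean and standard way to realise bounded minimization, and your case analysis verifying that the sum equals $i_0$ or $x+1$ in the two regimes is accurate. The paper itself does not supply a proof of this lemma; it simply lists it among the ``standard lemmata'' and refers the reader to Rogers, Soare, and Odifreddi, so there is no in-paper argument to compare against. Your closed-form construction is in fact one of the textbook proofs (essentially the one in Odifreddi), so you have filled in exactly what the paper left to the references.
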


\begin{lemma}[Standard Recursive Relations]
The relations \(=, \leq, \geq, <, > \subseteq \mathbb{N}^2\) are primitive recursive.
\end{lemma}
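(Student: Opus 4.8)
The plan is to verify, for each of the five relations, that its characteristic function lies in $R_p$ by exhibiting it explicitly as a composition of the basic functions with the arithmetic operations already shown to be primitive recursive — in particular the truncated subtraction $\dotdiv$, together with addition and multiplication from the preceding lemma — and then invoking the closure of $R_p$ under composition (Definition \ref{def:composition}).

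I would start with $\leq$, since all the other relations reduce to it. The key identity is that $x \leq y$ holds exactly when $x \dotdiv y = 0$, so I would set $\chi_{\leq}(x,y) = 1 \dotdiv (x \dotdiv y)$: if $x \leq y$ the inner term vanishes and the outer term equals $1$, while if $x > y$ the inner term is positive and the outer term is $0$. Formally this is a composition of $\dotdiv$ with the constant $C^2_1$ and with $\dotdiv$ applied to $P^2_0, P^2_1$, so $\chi_{\leq} \in R_p$. From here the remaining order relations follow by permuting arguments through projections and, where convenient, negating a Boolean value via the map $t \mapsto 1 \dotdiv t$: take $\chi_{\geq}(x,y) = \chi_{\leq}(y,x)$, take $\chi_{<}(x,y) = \chi_{\leq}(S(x), y) = 1 \dotdiv (S(x) \dotdiv y)$ using $x < y \iff x+1 \leq y$, and take $\chi_{>}(x,y) = \chi_{<}(y,x)$. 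For equality I would use $x = y \iff (x \leq y \text{ and } y \leq x)$ and set $\chi_{=}(x,y) = \chi_{\leq}(x,y)\cdot\chi_{\leq}(y,x)$, a product of primitive recursive functions and hence itself in $R_p$; equivalently one could take $\chi_{=}(x,y) = 1 \dotdiv\big((x\dotdiv y)+(y\dotdiv x)\big)$.

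I do not expect a genuine obstacle here: the whole argument is a short, finite chain of compositions, with no use of primitive recursion or minimization beyond what is already packaged inside $\dotdiv$. The only point demanding a little care is purely syntactic — checking that each composition respects the arity conventions of Definition \ref{def:composition} and that the projections route the two inputs to the correct argument slots. Once $\chi_{\leq} \in R_p$ is secured, every other case is immediate.
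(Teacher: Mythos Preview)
Your argument is correct and is the standard textbook derivation: build $\chi_{\leq}$ from truncated subtraction and obtain the remaining four relations by projection-swapping, successor, and Boolean multiplication. The paper itself does not prove this lemma at all---it lists it among several ``standard lemmata'' and defers the proofs to the references \cite{rogers1987theory, soare1999recursively, odifreddi1992classical}---so there is nothing to compare against beyond noting that your proof is precisely the one found in those sources.
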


\begin{lemma}[Closure for Logical Connectives]
If the relations \(P, Q \subseteq \mathbb{N}^2\) are primitive recursive (\(\mu-\)recursive), then so are the relations \(P \land Q\), \(P \lor Q\), \(P \rightarrow Q\) and \(\neg P\).
\end{lemma}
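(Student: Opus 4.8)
The plan is to reduce everything to arithmetic on characteristic functions. By definition, a relation $R \subseteq \mathbb{N}^2$ is primitive recursive ($\mu$-recursive) exactly when $\chi_R \in R_p$ (resp. $\chi_R \in R_\mu$), so it suffices to exhibit $\chi_{\neg P}$, $\chi_{P \land Q}$, $\chi_{P \lor Q}$, and $\chi_{P \to Q}$ as finite compositions of functions already known to lie in $R_p$ (resp. $R_\mu$): the given $\chi_P$ and $\chi_Q$, the truncated subtraction $\dotdiv$ and multiplication from the Standard Recursive Functions lemma, and the basic constant and projection functions. Closure of $R_p$ and $R_\mu$ under composition then finishes the argument.

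First I would handle negation: since $\chi_P$ is $\{0,1\}$-valued, one checks directly that $\chi_{\neg P}(x_0,x_1) = 1 \dotdiv \chi_P(x_0,x_1)$, i.e.\ the composition of $\dotdiv$ with the constant function $C^2_1$ and $\chi_P$ (all of arity $2$, so the arities match without further adjustment). Next, for conjunction, the identity $\chi_{P \land Q}(x_0,x_1) = \chi_P(x_0,x_1) \cdot \chi_Q(x_0,x_1)$ expresses $\chi_{P \land Q}$ as multiplication composed with $\chi_P$ and $\chi_Q$. For disjunction I would invoke the de Morgan equivalence $P \lor Q \equiv \neg(\neg P \land \neg Q)$ and appeal to the negation and conjunction cases just settled; equivalently one may write $\chi_{P \lor Q}(x_0,x_1) = 1 \dotdiv \big((1 \dotdiv \chi_P(x_0,x_1)) \cdot (1 \dotdiv \chi_Q(x_0,x_1))\big)$ directly. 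Finally, implication follows from $P \to Q \equiv \neg P \lor Q$, reducing it to the negation and disjunction cases.

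In every case the resulting characteristic function is built from $\chi_P$, $\chi_Q$ and the standard arithmetic functions by finitely many applications of composition, and all of these functions are total, so the case analysis in the definition of composition always falls into the first clause and no partiality issues arise. Hence the compound characteristic functions lie in $R_p$ whenever $\chi_P,\chi_Q \in R_p$ and in $R_\mu$ whenever $\chi_P,\chi_Q \in R_\mu$, which is the claim. There is essentially no obstacle here; the only points needing a moment's care are (i) verifying that each displayed arithmetic expression really equals the characteristic function of the corresponding compound relation, which is immediate from $\chi_P,\chi_Q$ being $\{0,1\}$-valued, and (ii) bookkeeping the arities when composing (and, in the general $\mathbb{N}^n$ version of the statement, inserting projections $P^n_i$ to route the arguments), which is routine since the projections and constants are basic functions.
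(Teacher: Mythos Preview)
Your argument is correct and is exactly the standard one. Note, however, that the paper does not supply its own proof of this lemma: it is stated among a block of ``standard lemmata'' with the remark that proofs can be found in the cited references (Rogers, Soare, Odifreddi). Your reduction to arithmetic on characteristic functions---$\chi_{\neg P}=1\dotdiv\chi_P$, $\chi_{P\land Q}=\chi_P\cdot\chi_Q$, and the derivation of $\lor$ and $\to$ from these via de~Morgan and $P\to Q\equiv\neg P\lor Q$---is precisely the argument those references give, so there is nothing to compare.
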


\begin{lemma}[Closure for Bounded Quantifiers]
Let \(R \subseteq \mathbb{N}^{n+1}\) be a primitive recursive (\(\mu\)-recursive) relation. Then the relations:
\begin{gather}
    Q(z, \bar{y}) = \forall x \leq z : (R(x, \bar{y})) \\
    P(z, \bar{y}) = \exists x \leq z : (R(x, \bar{y}))
\end{gather}
are also primitive recursive (\(\mu\)-recursive).
\end{lemma}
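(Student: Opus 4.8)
The plan is to express the characteristic functions of \(Q\) and \(P\) directly from \(\chi_R\) using only composition and primitive recursion, so that a single argument settles the primitive recursive case and the \(\mu\)-recursive case simultaneously (each class is closed under composition and primitive recursion, and minimization is never invoked).

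First I would handle the bounded universal quantifier. Since \(R \subseteq \mathbb{N}^{n+1}\) is primitive recursive (\(\mu\)-recursive), its characteristic function \(\chi_R : \mathbb{N}^{n+1} \to \mathbb{N}\) is total and belongs to \(R_p\) (\(R_\mu\)). Define \(f : \mathbb{N}^{n+1} \rightharpoonup \mathbb{N}\) by primitive recursion on the first argument:
\begin{align}
f(0, \bar{y}) &= \chi_R(0, \bar{y}), \\
f(z+1, \bar{y}) &= f(z, \bar{y}) \cdot \chi_R(z+1, \bar{y}).
\end{align}
The base case is the composition of \(\chi_R\) with \(C^n_0\) and the projections \(P^n_0, \dots, P^n_{n-1}\), and the step map \(h(w, z, \bar{y}) = w \cdot \chi_R(S(z), \bar{y})\) is obtained from projections, the successor, \(\chi_R\) and multiplication (primitive recursive by the Standard Recursive Functions lemma) via composition; hence both are total and lie in \(R_p\) (\(R_\mu\)), so \(f\) is total and lies in \(R_p\) (\(R_\mu\)). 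A straightforward induction on \(z\) gives \(f(z, \bar{y}) = \prod_{x \le z} \chi_R(x, \bar{y})\), and since each factor is \(0\) or \(1\), this product equals \(1\) exactly when \(R(x, \bar{y})\) holds for all \(x \le z\) and equals \(0\) otherwise. Thus \(f = \chi_Q\), so \(Q\) is primitive recursive (\(\mu\)-recursive).

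For the bounded existential quantifier I would reduce to the universal case via De Morgan's law, keeping everything self-contained. First, \(\neg R\) is primitive recursive (\(\mu\)-recursive) because \(\chi_{\neg R} = 1 \dotdiv \chi_R\) is a composition of functions in \(R_p\) (\(R_\mu\)). By the case just proved, the relation \(Q'\) given by \(Q'(z, \bar{y}) = \forall x \le z : (\neg R(x, \bar{y}))\) is primitive recursive (\(\mu\)-recursive). Since \(P = \neg Q'\), we have \(\chi_P = 1 \dotdiv \chi_{Q'}\), again a composition of functions in \(R_p\) (\(R_\mu\)); hence \(P\) is primitive recursive (\(\mu\)-recursive). (Alternatively one could argue directly, replacing the bounded product above by a bounded sum followed by the sign function \(k \mapsto 1 \dotdiv (1 \dotdiv k)\), or observing that \(P(z, \bar{y})\) holds iff the bounded minimization \(\mu i \le z : (R(i, \bar{y}))\) returns a value that is \(\le z\), and then invoking the Bounded Minimization lemma.)

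I do not expect a genuine obstacle; the only care points are clerical — inserting the right projections so that the step function carries the type \(\mathbb{N}^{n+2} \rightharpoonup \mathbb{N}\) required by the primitive recursion schema, and noting that the totality of \(\chi_R\) keeps \(f\) total, so the \(\bot\)-clauses in the definitions of composition and primitive recursion never fire. I would also state explicitly that the derivation is verbatim the same for \(R_p\) and \(R_\mu\), since minimization is used nowhere.
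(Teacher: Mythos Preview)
Your argument is correct and is exactly the standard textbook derivation: build \(\chi_Q\) as a bounded product of \(\chi_R\) via primitive recursion, then obtain \(P\) from \(Q\) by De~Morgan using \(1\dotdiv(\cdot)\). The paper does not supply its own proof of this lemma; it lists it among ``standard lemmata'' and refers the reader to \cite{rogers1987theory, soare1999recursively, odifreddi1992classical}, whose treatments coincide with what you wrote.
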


\begin{lemma}[Minimization of Relations]
Let \(R \subseteq \mathbb{N}^{n+1}\) be a relation. Define \(D_{x,\bar{y}} = \{i \in \mathbb{N} \mid i \geq x \text{ and } R(i, \bar{y})\}\). The minimization of \(R\) is defined as the function \(f : \mathbb{N}^{n+1} \rightharpoonup \mathbb{N}\) with \(f(x, \bar{y}) = \mu i \geq x : (R(i, \bar{y}))\) where:
\begin{equation}
    \mu i \geq x : (R(i, \bar{y})) = \begin{cases}
        \min D_{x, \bar{y}}, &\text{if } D_{x,\bar{y}} \neq \varnothing \\
        \bot, &\text{otherwise}
    \end{cases}
\end{equation}
If \(x=0\) we simply write \(\mu i :(R(i, \bar{y}))\) instead of \(\mu i \geq x :(R(i, \bar{y}))\).
If the relation \(R\) is \(\mu\)-recursive, then the function \(f(x, \bar{y}) = \mu i \geq x : (R(i, \bar{y}))\) is also \(\mu\)-recursive.
\end{lemma}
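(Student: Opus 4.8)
The plan is to reduce the claim to the closure of $R_\mu$ under the minimization operation defined above, by replacing the relation $R$ with a suitable \emph{total} $\mu$-recursive function whose zero set coincides with $R$. The point is that the minimization operator as defined carries partial-function caveats --- in forming $D_{x,\bar{y}}$ one requires $g(j,\bar{y})$ to converge for every $j$ with $x \le j \le i$ --- and these caveats dissolve once we work with a total function.

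Concretely, I would argue as follows. Since $R$ is $\mu$-recursive, its characteristic function $\chi_R : \mathbb{N}^{n+1} \to \mathbb{N}$ lies in $R_\mu$. Define $g : \mathbb{N}^{n+1} \to \mathbb{N}$ by $g(i,\bar{y}) = 1 \dotdiv \chi_R(i,\bar{y})$. By the Lemma on Standard Recursive Functions, $\dotdiv \in R_p \subseteq R_\mu$, so composing $\dotdiv$ with the constant function $C^{n+1}_1$ and with $\chi_R$ exhibits $g$ as a member of $R_\mu$. Moreover $g$ is total, since $\chi_R$ is, and $g(i,\bar{y}) = 0$ holds exactly when $\chi_R(i,\bar{y}) = 1$, i.e.\ exactly when $R(i,\bar{y})$.

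Next I would apply the minimization operation to $g$. Because $g$ is total, the convergence side condition in the definition of $D_{x,\bar{y}}$ is automatically satisfied, so that set equals $\{\, i \in \mathbb{N} \mid i \ge x \text{ and } R(i,\bar{y}) \,\}$, which is precisely the set named in the lemma. Hence $\mu i \ge x : (g(i,\bar{y}) = 0)$ and $\mu i \ge x : (R(i,\bar{y})) = f(x,\bar{y})$ agree for all $x$ and $\bar{y}$ (each equals $\min D_{x,\bar{y}}$ when that set is nonempty, and $\bot$ otherwise). Since $R_\mu$ is closed under minimization and $g \in R_\mu$, the function $f$ obtained from $g$ by minimization belongs to $R_\mu$, which is the claim.

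I do not anticipate a genuine obstacle: the lemma is essentially bookkeeping, and the only step that calls for attention is the observation that routing through the total function $g$ collapses the convergence conditions, so that the ``least $i \ge x$'' delivered by the minimization operator coincides on the nose with $\mu i \ge x : (R(i,\bar{y}))$. An alternative route would be a direct structural induction on the build-up of $\mu$-recursive relations, but the reduction above is shorter and entirely self-contained.
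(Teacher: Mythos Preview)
Your argument is correct and is exactly the standard reduction: pass from the $\mu$-recursive relation $R$ to the total $\mu$-recursive function $g = 1 \dotdiv \chi_R$, and then apply closure of $R_\mu$ under minimization, noting that totality of $g$ makes the convergence side condition vacuous. The paper itself does not supply a proof of this lemma, merely citing standard references, so there is no paper-specific argument to compare against; your write-up is precisely what one would find in those sources.
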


\begin{lemma}[Concatenation]
    Let \(\langle \rangle : \mathbb{N}^* \rightarrow \mathbb{N}\) be a primitive recursive encoding. There exists a primitive recursive function \(* : \mathbb{N}^2 \rightarrow \mathbb{N}\), called concatenation, such that for all \(\bar{x}, \bar{y} \in \mathbb{N}^*\) we have:
    \begin{equation}
        \langle \bar{x} \rangle * \langle \bar{y} \rangle = \langle \bar{x}, \bar{y} \rangle
    \end{equation}
\end{lemma}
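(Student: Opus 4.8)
The plan is to obtain $*$ by iterated application of the append operation $\app$. The intuition is simple: given $u = \langle x_0, \dots, x_{n-1}\rangle$ and $v = \langle y_0, \dots, y_{m-1}\rangle$, I would read off the entries $y_0, y_1, \dots, y_{m-1}$ of $v$ one at a time with $\proj$ and append them, in order, onto $u$ with $\app$. Both $\app$ and $\proj$ are primitive recursive by clause~(5) of the definition of a primitive recursive encoding, and $\lh$ tells us how many entries $v$ has; so the whole iteration is a bounded loop that the primitive recursion schema can carry out.

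Concretely, the first step is to introduce an auxiliary function $g : \mathbb{N}^3 \rightarrow \mathbb{N}$ defined by primitive recursion on its first argument, with the other two threaded through as parameters:
\[
\begin{cases}
g(0, u, v) = u,\\
g(k+1, u, v) = \app\bigl(g(k, u, v),\, \proj(v, k)\bigr).
\end{cases}
\]
This matches the schema with base function $P^2_0$ and step function $h(w, k, u, v) = \app(w, \proj(v, k))$, which lies in $R_p$ since it is a composition of $\app$, $\proj$, and projections; hence $g \in R_p$. I would then define $u * v := g(\lh(v), u, v)$, which is in $R_p$ as a composition of $g$, $\lh$, and projections.

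The second step is to verify correctness. By induction on $k$ I would show that
\[
g(k, u, v) = \langle x_0, \dots, x_{n-1}, y_0, \dots, y_{k-1}\rangle
\]
holds whenever $u = \langle x_0, \dots, x_{n-1}\rangle$, $v = \langle y_0, \dots, y_{m-1}\rangle$, and $0 \le k \le m$. The base case is $g(0,u,v) = u$, and the inductive step uses the defining identity $\app(\langle z_0, \dots, z_{\ell-1}\rangle, z_\ell) = \langle z_0, \dots, z_\ell\rangle$ of the encoding together with $\proj(v, k) = y_k$ (valid since $k \le m-1$ in that step). Taking $k = m = \lh(v)$ then yields $\langle \bar x\rangle * \langle \bar y\rangle = g(\lh(v), u, v) = \langle x_0, \dots, x_{n-1}, y_0, \dots, y_{m-1}\rangle = \langle \bar x, \bar y\rangle$, which is the claim. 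As with $\lh$, $\app$, and $\proj$, the value of $*$ on arguments not of the form $(\langle\bar x\rangle, \langle\bar y\rangle)$ is irrelevant and need not be analyzed.

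I do not expect a genuine obstacle here; the construction is a routine exercise in the primitive recursion schema. The only points requiring mild care are presenting $g$ so that the recursion is literally on its first coordinate as the schema demands, and bookkeeping the index ranges in the inductive verification — in particular the edge case $m = 0$, where $\lh(v) = 0$ forces $u * v = g(0, u, v) = u = \langle \bar x\rangle = \langle \bar x, \varepsilon\rangle$.
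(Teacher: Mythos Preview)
Your construction is correct and is precisely the standard argument: iterate $\app$ along the entries of the second sequence, reading them off with $\proj$ and using $\lh$ to bound the loop. The paper does not actually supply its own proof of this lemma; it lists it among the ``standard lemmata'' and defers to \cite{rogers1987theory, soare1999recursively, odifreddi1992classical}, where one finds essentially the same construction you give.
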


\subsection{General Recursion}

We now present the notion of general recursive functions. Informally, a function is said to be general recursive if it can be computed by an algorithm or by a computational machine. The most widely used formal model of computation is the \emph{Turing machine}, introduced by Alan Turing in his seminal work on the Entscheidungsproblem \cite{turing1936computable}. Accordingly, we define general recursive functions with respect to this model.

Given a Turing machine \(M = (Q, \Gamma, \square, \Sigma, \delta, q_0, F)\), where \(Q\) is the set of states, \(\Gamma\) and \(\Sigma\) are the tape and the input alphabet respectively, \(\square \in \Gamma\) is the blank symbol, \(\delta : Q \times \Gamma \rightharpoonup Q \times \Gamma \times \{\text{Left, Right}\}\) is the transition function, \(q_0 \in Q\) is the starting state and \(F \subseteq Q\) is the set of terminating states, we consider the set \(K\) of total states as the set of pairs \(C = (\sigma, q)\) where:
\begin{enumerate}
    \item \(\sigma : \mathbb{Z} \rightarrow \Gamma\) is a function that encodes the tape content, with \(\sigma(0)\) representing the symbol under the machine’s head
    \item \(q \in Q\) is the current state of the machine
\end{enumerate}

We will use the notation \(\sigma_{w}\), for \(w = w_0w_1\dots w_{L-1} \in \Gamma^*\) to denote the function from \(\sigma_w : \mathbb{Z} \rightarrow \Gamma\) defined by:
\begin{equation}
    \sigma_w(n) = \begin{cases}
    w_n, &\text{if } 0 \leq n < L \\
    \square, &\text{otherwise}
    \end{cases}
\end{equation}
That is, \(\sigma_w\) is the function that describes the content of the tape of \(M\) when the word \(w\) is written on it and the head of \(M\) is positioned over the first symbol of \(w\).

The computation relation \(\vdash_M^* \subseteq K \times K\) is defined by:
\begin{align*}
    C_1 \vdash_M^* C_2 \Leftrightarrow \ &  \text{there exists a computation of } M\\ &\text{that starts in the total state } C_1 \text{ and} \\ &\text{terminates in the total state } C_2
\end{align*}

Using the Turing machine model, we can define the class of \emph{general recursive partial functions}. Fix an input alphabet \(\Sigma_R\), a tape alphabet \(\Gamma_R\), and two injective functions:
\[
\Input : \mathbb{N}^* \rightarrow \Sigma_R^*, \quad \Output : \Gamma_R^* \rightarrow \mathbb{N}
\]
The function \(\Input\) encodes inputs \(\bar{x} \in \mathbb{N}^*\) as words over \(\Sigma_R\) and the function \(\Output\) decodes words over \(\Gamma_R\) into natural numbers.

\begin{definition}[General Recursive Function]
    A partial function \(f : \mathbb{N}^n \rightharpoonup \mathbb{N}\) is called general recursive iff there exists a Turing machine \(M = (Q, \Gamma_R, \square, \Sigma_R, \delta, q_0, F)\), with \(q_0 = n\), such that for every input \(\bar{x} \in \mathbb{N}^n\) the following conditions hold:
    \begin{gather}
        \begin{aligned}
        \bar{x} \in \dom(&f) \Rightarrow \\
        &\left(\sigma_{\Input(\bar{x})}, q_0\right) \vdash^*_M \left(\sigma_w, q_f \right) \text{ for some } q_f \in F \\
        &\text{ and } w \in \Gamma_R^* \text{ with } \Output(w) = f(\bar{x})
        \end{aligned}\\
        \begin{aligned}
        \bar{x} &\not\in \dom(f) \Rightarrow \\
        &\left(\sigma_{\Input(\bar{x})}, q_0\right) \not\vdash^*_M \left(\sigma_w, q_f \right) \text{ for any pair } (q_f, w) \in F \times \Gamma^*_R
        \end{aligned}
    \end{gather}
    If both of the above conditions hold, we say that the Turing machine \(M\) computes the partial function \(f\). The class of all general recursive functions is denoted by \(R(\mathbb{N})\).
\end{definition}

The restriction \(q_0 = n\) in the above definition ensures that a given Turing machine does not compute two distinct partial functions of different arities, with the exception of the \emph{empty functions} \(\epsilon_n : \mathbb{N}^n \rightharpoonup \mathbb{N}\) and \(\epsilon_m : \mathbb{N}^m \rightharpoonup \mathbb{N}\), defined by \(\dom(\epsilon_n) = \dom(\epsilon_m) = \varnothing\) for \(n \neq m\).

Without this restriction, it would be possible to construct a single machine computing multiple functions of different arities. For instance, a ``sum'' machine could compute both the binary function \(f : \mathbb{N}^2 \to \mathbb{N}\) defined by \(f(x, y) = x + y\) and the ternary function \(g : \mathbb{N}^3 \to \mathbb{N}\) defined by \(g(x, y, z) = x + y + z\).

\subsection{Normal Form Theorem}

A fundamental result in the theory of recursive functions is \emph{Kleene’s normal form theorem}. The first complete and formal statement, along with a detailed proof, is presented in his book \cite{kleene1952introduction}.

\begin{theorem}[Normal Form Theorem]\label{theorem:NF}
There exists a primitive recursive function \( U : \mathbb{N} \rightarrow \mathbb{N} \), primitive recursive relations \( T_n \subseteq \mathbb{N}^{n+2} \) for every \( n \in \mathbb{N} \) and injective primitive recursive functions \( S^m_n : \mathbb{N}^{m+1} \rightarrow \mathbb{N} \) for all \( n, m \in \mathbb{N} \), such that the following conditions hold:
\begin{enumerate}
    \item A partial function \( f : \mathbb{N}^n \rightharpoonup \mathbb{N} \) is general recursive iff there exists \( e \in \mathbb{N} \) such that:
    \begin{equation}
    f(\bar{x}) = U(\mu y : (T_n(y, e, \bar{x}))), \quad \forall \bar{x} \in \mathbb{N}^n
    \end{equation}
    If the above condition holds then \(e\) is called a code of \(f\).
    \item For every \( e\in \mathbb{N}, \bar{z} \in \mathbb{N}^m \) and \( \bar{x} \in \mathbb{N}^n \):
    \begin{equation}
    \begin{aligned}
    U(\mu y : (T_{m+n}&(y, e, \bar{z}, \bar{x}))) =\\
    &= U(\mu y : (T_{n}(y, S^m_n(e, \bar{z}), \bar{x})))
    \end{aligned}
    \end{equation}
\end{enumerate}
\end{theorem}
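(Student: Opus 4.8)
The plan is to prove the theorem by \emph{arithmetizing} Turing machine computations within the framework of Section~\ref{sec:recursion_framework}, using a fixed primitive recursive encoding \(\langle\cdot\rangle\) together with the closure lemmas stated above. The argument splits into three pieces: (i) showing that halting computations are recognizable by primitive recursive predicates \(T_n\) whose outputs are read off by a single primitive recursive \(U\) — this yields the forward direction of part~(1) and, as a byproduct, that every general recursive function is \(\mu\)-recursive; (ii) showing conversely that every \(\mu\)-recursive function is Turing-computable — since the right-hand side \(U(\mu y:(T_n(y,e,\bar x)))\) of the normal form is patently \(\mu\)-recursive (a primitive recursive \(U\) composed with the minimization of the primitive recursive relation \(T_n\)), this closes the equivalence in part~(1); and (iii) constructing \(S^m_n\) by a primitive recursive surgery on encoded machines, which gives part~(2).

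For (i), I would first fix a scheme encoding each total state \((\sigma,q)\) as a number: since only finitely many tape cells of \(\sigma\) are non-blank, the configuration is represented by a number packaging, via \(\langle\cdot\rangle\), the code of the finite word left of the head, the state \(q\), the scanned symbol, and the code of the finite word right of the head. Likewise a machine \(M=(Q,\Gamma,\square,\Sigma,\delta,q_0,F)\) is a finite object, hence encodable by a number \(e\) from which the transition table, the halting set \(F\), and the arity \(q_0=n\) are recoverable via \(\proj\) and \(\lh\). The crux is the lemma that the one-step function \(\mathrm{Step}(e,c)\) — rewrite the scanned cell, move the head, change the state according to \(\delta\) — is primitive recursive: it is a definition by cases on the scanned symbol and current state, each case being a bounded look-up into \(e\) followed by updates of the two tape-codes built from \(\app\), \(\proj\), \(\lh\), and \(\dotdiv\). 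Granting this, define \(T_n(y,e,\bar x)\) to hold iff \(y\) encodes a sequence \(\langle c_0,\dots,c_k\rangle\) with \(c_0\) the initial configuration for \(\Input(\bar x)\) (primitive recursive in \(\bar x\) for the standard choices of \(\Input\)), with \(c_{i+1}=\mathrm{Step}(e,c_i)\) for all \(i<k\), and with \(c_k\) halting; the closure lemmas for definition by cases, logical connectives, and bounded quantifiers make \(T_n\) primitive recursive. Finally let \(U(y)\) extract the last configuration coded in \(y\), read its tape word, and apply \(\Output\); this is again primitive recursive.

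Part~(1) then follows. If \(f\) is general recursive via a machine with code \(e\), then for \(\bar x\in\dom(f)\) the deterministic halting computation is coded by some \(y\) with \(T_n(y,e,\bar x)\), so \(\mu y:(T_n(y,e,\bar x))\) returns the least such code and \(U\) of it equals \(f(\bar x)\), whereas for \(\bar x\notin\dom(f)\) no such \(y\) exists, the minimization diverges, and the composition is undefined, matching \(f(\bar x)=\bot\). Conversely, \(U(\mu y:(T_n(y,e,\bar x)))\) is \(\mu\)-recursive by the closure lemmas (minimization of a \(\mu\)-recursive relation, composed with a primitive recursive function), so it suffices to prove \(R_\mu\subseteq R(\mathbb{N})\); I would do this by structural induction on \(R_\mu\) — exhibiting Turing machines for the constant, projection, and successor functions, and machine constructions realizing composition (run the submachines in turn over a fixed tape layout), primitive recursion (iterate the step machine while decrementing a counter), and minimization (increment a counter, invoke the tester, halt on output \(0\)). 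For part~(2), define \(S^m_n(e,\bar z)\) as the code of the machine that, on input \(\Input(\bar x)\), first overwrites the tape with \(\Input(\bar z,\bar x)\) using a block of fresh states depending only on \(\bar z\), then passes control to a relabelled copy of machine \(e\); this transformation of \(e\) is a primitive recursive operation on encodings, and it can be made injective by arranging that \((e,\bar z)\) is recoverable from the new code. By construction \(S^m_n(e,\bar z)\) on \(\bar x\) halts with output equal to that of \(e\) on \((\bar z,\bar x)\) — which, via part~(1), is exactly the displayed identity, both sides being defined or undefined together.

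The main obstacle is step (i): pinning down the fixed encodings of configurations and of machines and then verifying, by repeated appeal to the closure lemmas, that \(\mathrm{Step}\) and hence \(T_n\) are primitive recursive is the laborious technical heart of the proof. By comparison, the simulation \(R_\mu\subseteq R(\mathbb{N})\) in step (ii) is lengthy but conceptually routine, and the surgery defining \(S^m_n\) in step (iii) is straightforward once the machine encoding of (i) is fixed. Since this is the classical Kleene normal form theorem, one may alternatively simply adopt the complete proof in \cite{kleene1952introduction}.
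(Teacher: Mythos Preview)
Your sketch is the standard arithmetization argument and is correct, but note that the paper does not actually prove this theorem: it states the Normal Form Theorem as a classical result and simply refers the reader to Kleene's book \cite{kleene1952introduction} for the detailed proof. So there is no ``paper's own proof'' to compare against beyond that citation, and your final sentence --- adopting the proof in \cite{kleene1952introduction} --- is in fact exactly what the paper does.
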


From Theorem \ref{theorem:NF} it follows that \(R(\mathbb{N}) \subseteq R_\mu\), since each \(f \in R(\mathbb{N})\) can be constructed by composing \(U\) with a minimization of \(T_n\), where \(n\) is equal to the arity of \(f\). The inverse inclusion \(R_\mu \subseteq R(\mathbb{N})\) can also be easily proven. Therefore we have \(R(\mathbb{N}) = R_\mu\). This allows us to use the term \emph{recursive function/relation} to refer to both \(\mu\)-recursive and general recursive partial functions/relations, as they are equivalent.

Theorem \ref{theorem:NF} allows us to define the \emph{universal recursive functions}.

\begin{definition}[Universal Recursive Function]
    The universal recursive functions \(\varphi^n : \mathbb{N}^{n+1} \rightharpoonup \mathbb{N}\) are defined for each \(n \in \mathbb{N}\) as:
    \begin{equation}
        \varphi^n(e, \bar{x}) = U(\mu y : (T_n(y, e, \bar{x})))
    \end{equation}
    For a given \(n \in \mathbb{N}\) every recursive function \(f : \mathbb{N}^n \rightharpoonup \mathbb{N}\) can be expressed as \(f(\bar{x}) = \varphi^n(e, \bar{x})\), where \(e\) is a code of \(f\).
\end{definition}

The \(S^m_n\) functions can be used to construct, using only primitive recursion, codes of arbitrarily complex recursive functions. This is expressed in the following lemma, the proof of which heavily relies on the \(S^m_n\) functions:

\begin{lemma}[Effectiveness of Composition, Primitive Recursion and Minimization]\label{lemma:effectiveness}
For every \(m, n \in \mathbb{N}\) there exist primitive recursive functions \(\Com^m_n : \mathbb{N}^{m+1} \rightarrow \mathbb{N}\), \(\Rec_n : \mathbb{N}^2 \rightarrow \mathbb{N}\) and \(\Min_n : \mathbb{N} \rightarrow \mathbb{N}\) such that the following conditions hold:
\begin{enumerate}
    \item If \(e_1, e_2, \dots e_m\) are codes of \(f_1, f_2, \dots f_m : \mathbb{N}^n \rightharpoonup \mathbb{N}\) and \(e_g\) is a code of \(g : \mathbb{N}^m \rightharpoonup \mathbb{N}\) then \(\Com^m_n(e_g, e_1, e_2, \dots e_m)\) is a code of the composition \(g(f_1(\bar{x}), f_2(\bar{x}), \dots f_m(\bar{x}))\).
        
    \item If \(e_g\) is a code of \(g : \mathbb{N}^n \rightharpoonup \mathbb{N}\) and \(e_h\) is a code of \(h : \mathbb{N}^{n+2} \rightharpoonup \mathbb{N}\) then \(\Rec_n(e_g, e_h)\) is a code of the function \(f : \mathbb{N}^{n+1} \rightharpoonup \mathbb{N}\), which is defined by primitive recursion on \(h\) with base case \(g\).
        
    \item If \(e_g\) is a code of \(g : \mathbb{N}^{n+1} \rightharpoonup \mathbb{N}\) then \(\Min_n(e_g)\) is a code of the minimization \(\mu i \geq x : (g(i, \bar{y}) = 0)\).
\end{enumerate}
\end{lemma}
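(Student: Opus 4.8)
The plan is to derive all three statements uniformly from the parameter functions $S^m_n$ of Theorem \ref{theorem:NF}, together with the identity $R(\mathbb{N}) = R_\mu$ established above, which guarantees that the recursive partial functions are closed under composition, primitive recursion, minimization, and --- since the projections $P^n_i$ are basic functions --- under arbitrary permutation and repetition of arguments. In each of the three cases the recipe is the same. First, I would build a single auxiliary recursive partial function $\psi$ that carries out the operation in question but receives the relevant codes as ordinary numerical inputs, which it feeds to the universal functions $\varphi^\bullet$; being recursive, $\psi$ has some fixed code $c$ by part (1) of Theorem \ref{theorem:NF}. Second, I would ``freeze'' the code-slots of $\psi$ with an appropriate $S^m_n$: by part (2) of Theorem \ref{theorem:NF} the resulting number is a code of the partial function obtained from $\psi$ by substituting those codes, which is exactly the required operation applied to the functions they code. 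Third, since $c$ is a fixed numeral and the remaining slots are filled by projections, the map from the codes to this number is a composition of the primitive recursive function $S^m_n$ with primitive recursive functions, hence primitive recursive.

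For part (1), I would take
\[
\psi(e_g, e_1, \dots, e_m, \bar{x}) = \varphi^m\!\bigl(e_g,\ \varphi^n(e_1,\bar{x}),\ \dots,\ \varphi^n(e_m,\bar{x})\bigr),
\]
which is recursive because the $\varphi^\bullet$ are recursive and recursiveness is preserved under composition; the $\bot$-conventions of Definition \ref{def:composition} make $\psi(e_g,e_1,\dots,e_m,\bar x)$ agree with $g(f_1(\bar x),\dots,f_m(\bar x))$ precisely when each $e_i$ codes $f_i$ and $e_g$ codes $g$. Fixing a code $c$ of $\psi$, I would set $\Com^m_n(e_g,e_1,\dots,e_m) = S^{m+1}_n(c,e_g,e_1,\dots,e_m)$; part (2) of Theorem \ref{theorem:NF} then yields $\varphi^n(\Com^m_n(e_g,e_1,\dots,e_m),\bar x) = \varphi^{m+n+1}(c,e_g,e_1,\dots,e_m,\bar x) = \psi(e_g,e_1,\dots,e_m,\bar x)$ as partial functions, so the output is a code of the composition, $\Com^m_n$ has the required arity $\mathbb{N}^{m+1}\to\mathbb{N}$, and it is primitive recursive.

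For part (2), I would define $\psi$ by primitive recursion on the recursion variable $y$ via $\psi(0,e_g,e_h,\bar x) = \varphi^n(e_g,\bar x)$ and $\psi(y+1,e_g,e_h,\bar x) = \varphi^{n+2}\!\bigl(e_h,\psi(y,e_g,e_h,\bar x),y,\bar x\bigr)$; this is recursive by closure under primitive recursion, and after permuting the arguments so that $e_g,e_h$ come first I would fix a code $c$ and set $\Rec_n(e_g,e_h) = S^2_{n+1}(c,e_g,e_h)$. For part (3) I would take $\psi(x,e_g,\bar y) = \mu i \geq x : \bigl(\varphi^{n+1}(e_g,i,\bar y) = 0\bigr)$, which is recursive by closure under minimization --- the side condition in the definition of minimization becomes ``$\varphi^{n+1}(e_g,j,\bar y)\!\downarrow$ for all $x\le j\le i$'', so $\psi$ matches $\mu i \geq x : (g(i,\bar y)=0)$ exactly when $e_g$ codes $g$ --- and after moving $e_g$ to the front I would set $\Min_n(e_g) = S^1_{n+1}(c,e_g)$. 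In both cases part (2) of Theorem \ref{theorem:NF} again transfers correctness, and primitive recursiveness of $\Rec_n$ and $\Min_n$ follows exactly as in part (1).

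What remains is routine bookkeeping: checking that the three auxiliary functions literally realize the operations as defined --- which is exactly where the $\bot$/$\downarrow$ conventions of Definition \ref{def:composition} and of the definitions of primitive recursion and minimization must be matched --- and tracking arities through the argument permutations. I expect the only point needing genuine care to be the use of part (2) of Theorem \ref{theorem:NF} as an identity between \emph{partial} functions, so that the frozen code computes $\psi$ with the codes substituted not merely on its domain but also with the correct divergence behaviour; granting this, each of $\Com^m_n$, $\Rec_n$, $\Min_n$ is manifestly primitive recursive, being an $S$-function evaluated at a fixed numeral in its code slot and at projections elsewhere.
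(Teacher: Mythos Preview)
Your proposal is correct and follows essentially the same approach as the paper's own proof: in each case one defines an auxiliary recursive function via the universal functions $\varphi^\bullet$, fixes a code for it, and then freezes the code-arguments using the appropriate $S^m_n$, yielding a primitive recursive index transformer. The paper's argument is identical down to the specific choices $\Com^m_n(e_g,e_1,\dots,e_m)=S^{m+1}_n(c,e_g,e_1,\dots,e_m)$, $\Rec_n(e_g,e_h)=S^2_{n+1}(c,e_g,e_h)$, and $\Min_n(e_g)=S^1_{n+1}(c,e_g)$.
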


Lemma \ref{lemma:effectiveness} enables the construction of primitive recursive functions that compute codes for arbitrarily complex recursive functions. Given a set of base recursive functions \(f_1, f_2, \dots f_k\) with corresponding codes \(e_1, e_2, \dots e_k\), and a function \(g\) defined by a known sequence of compositions, primitive recursions, and minimizations on these base functions, we can easily define a primitive recursive function \(F : \mathbb{N}^k \rightarrow \mathbb{N}\) such that \(F(e_1, e_2, \dots e_k)\) is a code of \(g\). The function \(F\) is constructed by appropriately composing the functions \(\Com^m_n\), \(\Rec_n\), and \(\Min_n\) according to the sequence of operations that defines \(g\).

\subsection{Recursive Functions over the Rational Numbers}\label{subsec:rationals}

We extend the notions of primitive and general recursive functions to the set of rational numbers, enabling the application of results from recursion theory to model algorithms that operate on rational inputs. To this end, we fix a primitive recursive encoding \(\langle \rangle : \mathbb{N}^* \rightarrow \mathbb{N}\) for the remainder of this article.

We begin by defining an encoding of the rational numbers into the natural numbers. Using this encoding, any rational \(q \in \mathbb{Q}\) can be identified with a corresponding code \(x \in \mathbb{N}\). This identification, together with recursive functions on natural numbers, allows us to define recursive functions of the form \(f : \mathbb{N} \rightharpoonup \mathbb{Q}\) and \(g : \mathbb{Q}^n \to \mathbb{Q}\).

\begin{definition}[Encoding of Rationals]\label{def:rationals}
Let \(q = (-1)^s \frac{N}{D}\) be a rational number, where \(s, N, D \in \mathbb{N}\) and \(D \neq 0\). We will call \(\langle s, N, D \rangle \in \mathbb{N}\) a code of \(q\).  

We introduce the primitive recursive functions \(s_\mathbb{Q}\), \(N_\mathbb{Q}\), \(D_\mathbb{Q} : \mathbb{N} \rightarrow \mathbb{N}\), which satisfy:
\begin{equation}
    q = (-1)^{s_\mathbb{Q}(u)} \cdot \frac {N_\mathbb{Q}(u)}{D_\mathbb{Q}(u)}
\end{equation}
whenever \(u\) is a code of \(q\), by:
\begin{gather}
    s_\mathbb{Q}(u) = (u)_0 \\
    N_\mathbb{Q}(u) = (u)_1 \\
    D_\mathbb{Q}(u) = (u)_2
\end{gather}

We also define the primitive recursive relation \(\isRat \subseteq \mathbb{N}\) by:
\begin{equation}
    \begin{aligned}
        \isRat(u) &\Leftrightarrow u \text{ is a code of some rational } q \\
        &\Leftrightarrow \seq(u) \land \lh(u) = 3 \land (u)_2 \neq 0
    \end{aligned}
\end{equation}

Finally, we define the function \(\rat : \isRat \rightarrow \mathbb{Q}\), with \(\rat(u) = q\) whenever \(u\) is a code of \(q\), by:
\begin{equation*}
    \rat(u) = (-1)^{s_\mathbb{Q}(u)} \cdot \frac{N_\mathbb{Q}(u)}{D_\mathbb{Q}(u)}
\end{equation*}
\end{definition}

Note that a rational number \(q\) does not admit a unique code. For example, both \(\langle 0, 5, 2 \rangle\) and \(\langle 2, 10, 4 \rangle\) are codes of \(\frac{5}{2}\). Furthermore, the functions \(s_\mathbb{Q}\), \(N_\mathbb{Q}\), and \(D_\mathbb{Q}\) are defined on all inputs, including those for which \(\neg\isRat(u)\). However, their values in such cases are not meaningful for the encoding.

Definition \ref{def:rationals} allows us to represent functions and relations on rational numbers using recursive functions and relations on natural numbers. In particular, the four basic arithmetic operations, as well as the relations of equality, strict order, and non-strict order over the rationals, are all primitive recursive, as formalized in Lemma \ref{lemma:rational_operations}.

\begin{lemma}\label{lemma:rational_operations}
    There exist primitive recursive functions \(+_\mathbb{Q}\), \(-_\mathbb{Q}\), \(\cdot_\mathbb{Q}\), \(/_\mathbb{Q} : \mathbb{N}^2 \rightarrow \mathbb{N}\) such that for all \(x,y \in \mathbb{N}\) with \(\isRat(x)\) and \(\isRat(y)\) the following conditions hold:
    \begin{gather}
        \rat(x +_\mathbb{Q}y) = \rat(x) + \rat(y) \\
        \rat(x -_\mathbb{Q}y) = \rat(x) - \rat(y) \\
        \rat(x \cdot_\mathbb{Q}y) = \rat(x) \cdot \rat(y) \\
        \rat(x /_\mathbb{Q}y) = \begin{cases}
        \rat(x) / \rat(y), &\text{if } \rat(y) \neq 0\\
        0, &\text{if } \rat(y) = 0
        \end{cases}
    \end{gather}

    Furthermore, the relations \(=_\mathbb{Q}\), \(\leq_\mathbb{Q}\), \(<_\mathbb{Q} \subseteq \mathbb{N}^2\) defined by:
    \begin{gather}
    x =_\mathbb{Q} y \Leftrightarrow \isRat(x) \land \isRat(y) \land \rat(x) = \rat(y) \label{eq:rat=}\\
    x \leq_\mathbb{Q} y \Leftrightarrow \isRat(x) \land \isRat(y) \land \rat(x) \leq \rat(y) \label{eq:ratleq}\\
    x <_\mathbb{Q} y \Leftrightarrow \isRat(x) \land \isRat(y) \land \rat(x) < \rat(y) \label{eq:rat<}
    \end{gather}
    are primitive recursive.
\end{lemma}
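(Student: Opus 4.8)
The plan is to exhibit each of the four functions $+_{\mathbb{Q}}, -_{\mathbb{Q}}, \cdot_{\mathbb{Q}}, /_{\mathbb{Q}}$ explicitly as a composition of objects already known to be primitive recursive: the component extractors $s_{\mathbb{Q}}, N_{\mathbb{Q}}, D_{\mathbb{Q}}$ and the relation $\isRat$ (primitive recursive by Definition~\ref{def:rationals}); the arithmetic operations $+, \cdot, \dotdiv$ on $\mathbb{N}$ and the order relations on $\mathbb{N}$ (primitive recursive by the Standard Recursive Functions and Standard Recursive Relations lemmas); the encoding map $F_3 : \mathbb{N}^3 \to \mathbb{N}$, $(s,N,D)\mapsto\langle s,N,D\rangle$ (primitive recursive by condition (3) of the definition of a primitive recursive encoding); and the Definition by Cases and Closure for Logical Connectives lemmas. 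Since $R_p$ is closed under composition, each assembled function lies in $R_p$. The only genuine subtlety is that the sign of a rational is stored as a parity bit while magnitudes live in $\mathbb{N}$, so each operation must emulate signed-integer arithmetic on (parity, magnitude) pairs; this forces a short case analysis that is resolved by the Definition by Cases lemma, using that the parity relation ``$u$ is even'' is primitive recursive (it equals $\exists k \le u : 2k = u$, hence primitive recursive by the Closure for Bounded Quantifiers lemma).

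For a code $u$ write $\sigma(u)=s_{\mathbb{Q}}(u)$, $n(u)=N_{\mathbb{Q}}(u)$, $d(u)=D_{\mathbb{Q}}(u)$, so that $\rat(u)=(-1)^{\sigma(u)}\,n(u)/d(u)$ whenever $\isRat(u)$. \emph{Multiplication} is immediate: $\rat(x)\rat(y)=(-1)^{\sigma(x)+\sigma(y)}\frac{n(x)n(y)}{d(x)d(y)}$ with $d(x)d(y)\neq 0$, so we set $x\cdot_{\mathbb{Q}}y = F_3\bigl(\sigma(x)+\sigma(y),\,n(x)n(y),\,d(x)d(y)\bigr)$. \emph{Negation} $\nu(u)=F_3\bigl(\sigma(u)+1,\,n(u),\,d(u)\bigr)$ satisfies $\rat(\nu(u))=-\rat(u)$ and is primitive recursive. \emph{Addition}: put $A=n(x)d(y)$, $B=n(y)d(x)$, $E=d(x)d(y)\neq 0$; define by cases on the primitive recursive relations ``$\sigma(x)+\sigma(y)$ is even'' (the signs agree) and ``$A\ge B$'': if the signs agree, the numerator is $A+B$ and the sign is $\sigma(x)$; if they disagree and $A\ge B$, the numerator is $A\dotdiv B$ and the sign is $\sigma(x)$; if they disagree and $A<B$, the numerator is $B\dotdiv A$ and the sign is $\sigma(y)$. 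Then $x+_{\mathbb{Q}}y=F_3(\text{sign},\text{numerator},E)$, and one verifies $\rat(x+_{\mathbb{Q}}y)=\rat(x)+\rat(y)$ directly (the zero-magnitude edge cases are harmless since a magnitude $0$ denotes $0$ regardless of its parity bit). Finally $x-_{\mathbb{Q}}y := x+_{\mathbb{Q}}\nu(y)$, and \emph{division} is handled by a case split on the primitive recursive relation ``$n(y)=0$'' (equivalent to $\rat(y)=0$ when $\isRat(y)$): if $n(y)\neq 0$ output $F_3\bigl(\sigma(x)+\sigma(y),\,n(x)d(y),\,d(x)n(y)\bigr)$, and otherwise output $F_3(0,0,1)$, a code of $0$.

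For the relations, the key reduction is to compare the two integers $(-1)^{\sigma(x)}P$ and $(-1)^{\sigma(y)}Q$, where $P=n(x)d(y)$ and $Q=n(y)d(x)$ are natural numbers; this is legitimate because multiplying the inequality $\rat(x)\le\rat(y)$ through by $d(x)d(y)>0$ preserves it. One then notes that $(-1)^{\sigma(x)}P\le(-1)^{\sigma(y)}Q$ is equivalent to an explicit disjunction: both values nonnegative (i.e. $P=0$ or $\sigma(x)$ even, and likewise for $Q$, $\sigma(y)$) together with $P\le Q$; or the left value negative ($P\neq 0$ and $\sigma(x)$ odd) and the right value nonnegative; or both values negative and $P\ge Q$. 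Each disjunct is a Boolean combination of primitive recursive relations on $\mathbb{N}$ (equality, order, parity, built on multiplication), so by the Closure for Logical Connectives lemma the relation ``$\rat(x)\le\rat(y)$'' on codes is primitive recursive; conjoining with $\isRat(x)\wedge\isRat(y)$ yields $\le_{\mathbb{Q}}$ as in \eqref{eq:ratleq}. Then $x\ge_{\mathbb{Q}}y$ is just $y\le_{\mathbb{Q}}x$, $x=_{\mathbb{Q}}y$ is $(x\le_{\mathbb{Q}}y)\wedge(x\ge_{\mathbb{Q}}y)$, and $x<_{\mathbb{Q}}y$ is $(x\le_{\mathbb{Q}}y)\wedge\neg(x=_{\mathbb{Q}}y)$, all primitive recursive by the same closure lemma, matching \eqref{eq:rat=} and \eqref{eq:rat<}.

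\textbf{Main obstacle.} There is nothing conceptually deep here; the only thing requiring care is the sign bookkeeping. Because the sign is a parity bit rather than an element of $\{+1,-1\}$, and because a zero magnitude represents $0$ no matter which parity bit accompanies it, the case analyses for addition and for $\le_{\mathbb{Q}}$ must be checked on these degenerate inputs to confirm that the ``wrong'' parity verdict is always washed out by a vanishing magnitude. Once that is confirmed, closure of $R_p$ under composition together with the cited lemmas finishes the argument.
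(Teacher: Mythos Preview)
Your proposal is correct and follows essentially the same approach as the paper: both proofs perform the identical case analysis on sign parity and on the comparison $N_{\mathbb{Q}}(x)D_{\mathbb{Q}}(y)$ versus $N_{\mathbb{Q}}(y)D_{\mathbb{Q}}(x)$, then assemble the results via $\langle s,N,D\rangle$ and the Definition by Cases lemma. The only cosmetic differences are that you define subtraction via a negation auxiliary and build $=_{\mathbb{Q}},<_{\mathbb{Q}}$ from $\leq_{\mathbb{Q}}$, whereas the paper writes out $=_{\mathbb{Q}}$ and $<_{\mathbb{Q}}$ directly and obtains $\leq_{\mathbb{Q}}$ as their disjunction; neither reorganization changes the argument.
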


\begin{proof}
We will show the existence of \(+_\mathbb{Q}\) and \(/_\mathbb{Q}\). The existence of \(-_\mathbb{Q}\) and \(\cdot_\mathbb{Q}\) can be established in a similar fashion. Let \(x,y \in \mathbb{N}\) with \(\isRat(x)\) and \(\isRat(y)\). We have:
\begin{align}
    \rat(x) &+ \rat(y) =\\
    & \frac{(-1)^{s_\mathbb{Q}(x)}N_\mathbb{Q}(x)D_\mathbb{Q}(y) + (-1)^{s_\mathbb{Q}(y)}N_\mathbb{Q}(y)D_\mathbb{Q}(x)}{D_\mathbb{Q}(x)D_\mathbb{Q}(y)} \label{eq:add_rat}
\end{align}
We consider three cases for expression \eqref{eq:add_rat}:
\begin{enumerate}
    \item if \(s_\mathbb{Q}(x) \equiv s_\mathbb{Q}(y) \pmod{2}\) then \(\rat(x)\) and \(\rat(y)\) have the same sign, and expression \eqref{eq:add_rat} can be written as:
    \begin{equation}
        (-1)^{s_\mathbb{Q}(x)}\frac{N_\mathbb{Q}(x)D_\mathbb{Q}(y) + N_\mathbb{Q}(y)D_\mathbb{Q}(x)}{D_\mathbb{Q}(x)D_\mathbb{Q}(y)}
    \end{equation}

    \item if \(s_\mathbb{Q}(x) \not\equiv s_\mathbb{Q}(y) \pmod{2}\) and \(N_\mathbb{Q}(x)D_\mathbb{Q}(y) \geq N_\mathbb{Q}(y)D_\mathbb{Q}(x)\) then expression \eqref{eq:add_rat} can be written as:
    \begin{equation}
        (-1)^{s_\mathbb{Q}(x)}\frac{N_\mathbb{Q}(x)D_\mathbb{Q}(y) \dotdiv N_\mathbb{Q}(y)D_\mathbb{Q}(x)}{D_\mathbb{Q}(x)D_\mathbb{Q}(y)}
    \end{equation}

    \item if \(s_\mathbb{Q}(x) \not\equiv s_\mathbb{Q}(y) \pmod{2}\) and \(N_\mathbb{Q}(x)D_\mathbb{Q}(y) < N_\mathbb{Q}(y)D_\mathbb{Q}(x)\) then expression \eqref{eq:add_rat} can be written as:
    \begin{equation}
        (-1)^{s_\mathbb{Q}(y)}\frac{N_\mathbb{Q}(y)D_\mathbb{Q}(x) \dotdiv N_\mathbb{Q}(x)D_\mathbb{Q}(y)}{D_\mathbb{Q}(x)D_\mathbb{Q}(y)}
    \end{equation}
\end{enumerate}
The congruence modulo 2 can be expressed with the equivalences:
\begin{gather}
    s_\mathbb{Q}(x) \equiv s_\mathbb{Q}(y) \pmod{2} \Leftrightarrow\even(s_\mathbb{Q}(x) + s_\mathbb{Q}(y)) \\
    s_\mathbb{Q}(x) \not\equiv s_\mathbb{Q}(y) \pmod{2} \Leftrightarrow\odd(s_\mathbb{Q}(x) + s_\mathbb{Q}(y))
\end{gather}
where the primitive recursive relations \(\even, \odd \subseteq \mathbb{N}\) are defined via their characteristic functions by:
\begin{gather}
    \begin{cases}
        \chi_{\even}(0) = 1 \\
        \chi_{\even}(n+1) = 1 \dotdiv \chi_{\even}(n)
    \end{cases} \\
    \chi_{\odd}(n) = 1 \dotdiv \chi_{\even}(n)
\end{gather}

Using these, the three cases are described by the primitive recursive relations:
\begin{gather}
    R_1(x,y) \Leftrightarrow \even(s_\mathbb{Q}(x) + s_\mathbb{Q}(y)) \\
    \begin{aligned}
    R_2(x,y) \Leftrightarrow &\odd(s_\mathbb{Q}(x) + s_\mathbb{Q}(y))\\&\land N_\mathbb{Q}(x)D_\mathbb{Q}(y) \geq N_\mathbb{Q}(y)D_\mathbb{Q}(x)        
    \end{aligned} \\
    \begin{aligned}
    R_3(x,y) \Leftrightarrow &\odd(s_\mathbb{Q}(x) + s_\mathbb{Q}(y))\\&\land N_\mathbb{Q}(x)D_\mathbb{Q}(y) < N_\mathbb{Q}(y)D_\mathbb{Q}(x)        
    \end{aligned}
\end{gather}

Therefore, the sign, numerator and denominator of \(\rat(x) + \rat(y)\) are given by the primitive recursive functions:
\begin{gather}
    s_+(x,y) = \begin{cases}
        s_\mathbb{Q}(x), &\text{if }  R_1(x,y)\\
        s_\mathbb{Q}(x), &\text{if }  R_2(x,y)\\
        s_\mathbb{Q}(y), &\text{if }  R_3(x,y)
    \end{cases} \\
    N_+(x,y) = \begin{cases}
    N_\mathbb{Q}(x)D_\mathbb{Q}(y) + N_\mathbb{Q}(y)D_\mathbb{Q}(x), &\text{if } R_1(x,y)\\
    N_\mathbb{Q}(x)D_\mathbb{Q}(y) \dotdiv N_\mathbb{Q}(y)D_\mathbb{Q}(x), &\text{if } R_2(x,y)\\
    N_\mathbb{Q}(y)D_\mathbb{Q}(x) \dotdiv N_\mathbb{Q}(x)D_\mathbb{Q}(y), &\text{if } R_3(x,y)
    \end{cases}
     \\
    D_+(x,y) = D_\mathbb{Q}(x)D_\mathbb{Q}(y)
\end{gather}

Finally, \(+_\mathbb{Q}\) can be defined by:
\begin{equation}
    x +_\mathbb{Q} y = \langle s_+(x,y), N_+(x,y), D_+(x,y) \rangle
\end{equation}

As for the operation \(/_\mathbb{Q}\), if \(\rat(y) \neq 0\), we have:
\begin{align}
    \rat(x)/\rat(y) = (-1)^{s_\mathbb{Q}(x) + s_\mathbb{Q}(y)}\frac{N_\mathbb{Q}(x)D_\mathbb{Q}(y)}{N_\mathbb{Q}(y)D_\mathbb{Q}(x)}
\end{align}

We define the primitive recursive functions:
\begin{gather}
    s_/(x,y) = s_\mathbb{Q}(x) + s_\mathbb{Q}(y) \\
    N_/(x,y) = N_\mathbb{Q}(x)D_\mathbb{Q}(y) \\
    D_/(x,y) = N_\mathbb{Q}(y)D_\mathbb{Q}(x)
\end{gather}

The function \(/_\mathbb{Q}\) can then be defined by:
\begin{equation}
    x /_\mathbb{Q} y = \begin{cases}
        \langle s_/(x,y), N_/(x,y), D_/(x,y) \rangle , &\text{if } N_\mathbb{Q}(y) \neq 0 \\
        0, &\text{otherwise}
    \end{cases}
\end{equation}

For the relations \(=_\mathbb{Q}, \leq_\mathbb{Q}, <_\mathbb{Q}\) we have:
\begin{gather}
\begin{aligned}
    &\rat(x) = \rat(y) \Leftrightarrow\\
    & \left( \even(s_\mathbb{Q}(x) + (s_\mathbb{Q}(y)) \land  N_\mathbb{Q}(x) D_\mathbb{Q}(y) = N_\mathbb{Q}(y) D_\mathbb{Q}(x) \right) \\
        &\lor \left( N_\mathbb{Q}(x) = 0 \land N_\mathbb{Q}(y) = 0 \right)
\end{aligned} \\
\begin{aligned}
    \rat(x) < &\rat(y) \Leftrightarrow\\
    & ( \even(s_\mathbb{Q}(x)) \land \even(s_\mathbb{Q}(y)) \\
    &\quad\quad\quad\quad \land N_\mathbb{Q}(x) D_\mathbb{Q}(y) < N_\mathbb{Q}(y) D_\mathbb{Q}(x) ) \\
    &\lor ( \odd(s_\mathbb{Q}(x)) \land \odd(s_\mathbb{Q}(y)) \\
    &\quad\quad\quad\quad \land N_\mathbb{Q}(x) D_\mathbb{Q}(y) > N_\mathbb{Q}(y) D_\mathbb{Q}(x) ) \\
    &\lor ( \odd(s_\mathbb{Q}(x)) \land \even(s_\mathbb{Q}(y)) \\
    &\quad\quad\quad\quad \land \neg (N_\mathbb{Q}(x) = 0 \land N_\mathbb{Q}(y) = 0) )
\end{aligned} \\
\rat(x) \leq \rat(y) \Leftrightarrow (\rat(x) = \rat(y)) \lor (\rat(x) < \rat(y))
\end{gather}

Therefore, the relations:
\begin{gather}
    \{(x,y) \in \mathbb{N}^2 \mid \rat(x) = \rat(y) \} \\
    \{(x,y) \in \mathbb{N}^2 \mid \rat(x) \leq \rat(y) \} \\
    \{(x,y) \in \mathbb{N}^2 \mid \rat(x) < \rat(y) \}
\end{gather}
are primitive recursive. 
By expressions \eqref{eq:rat=}, \eqref{eq:ratleq}, \eqref{eq:rat<} we conclude that the relations \(=_\mathbb{Q}\), \(\leq_\mathbb{Q}\), \(<_\mathbb{Q}\) are primitive recursive.
\end{proof}

Finding the maximum of two rational numbers is also primitive recursive.

\begin{lemma}
    There exists a primitive recursive function \(\max_{\mathbb{Q}} : \mathbb{N}^2 \rightarrow \mathbb{N}\) such that for all \(x,y \in \mathbb{N}\) with \(\isRat(x)\) and \(\isRat(y)\), the number \(\max_{\mathbb{Q}}(x,y)\) is a code of the rational number \(\max\{\rat(x), \rat(y)\}\).
\end{lemma}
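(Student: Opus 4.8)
The plan is to define $\max_{\mathbb{Q}}$ by a two-way case split on the primitive recursive relation $\leq_{\mathbb{Q}}$ established in Lemma \ref{lemma:rational_operations}. Concretely, I would set
\begin{equation}
    \max_{\mathbb{Q}}(x,y) = \begin{cases}
        y, & \text{if } x \leq_{\mathbb{Q}} y \\
        x, & \text{otherwise}
    \end{cases}
\end{equation}
The two conditions ``$x \leq_{\mathbb{Q}} y$'' and ``$\neg(x \leq_{\mathbb{Q}} y)$'' describe relations that partition $\mathbb{N}^2$; the first is primitive recursive by Lemma \ref{lemma:rational_operations}, and the second is primitive recursive by the closure of primitive recursive relations under negation (Closure for Logical Connectives). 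The two branch functions are the projections $P^2_1(x,y)=y$ and $P^2_0(x,y)=x$, which are basic functions and hence primitive recursive. Invoking the Definition by Cases lemma, $\max_{\mathbb{Q}}$ is primitive recursive.

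It then remains to verify the correctness property. Suppose $\isRat(x)$ and $\isRat(y)$. If $\rat(x) \leq \rat(y)$, then by the defining equivalence \eqref{eq:ratleq} we have $x \leq_{\mathbb{Q}} y$, so $\max_{\mathbb{Q}}(x,y) = y$, which is by hypothesis a code of $\rat(y) = \max\{\rat(x),\rat(y)\}$. Otherwise $\rat(x) > \rat(y)$, so $x \leq_{\mathbb{Q}} y$ fails, giving $\max_{\mathbb{Q}}(x,y) = x$, a code of $\rat(x) = \max\{\rat(x),\rat(y)\}$. In either case the output is a code of the desired rational, as required.

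There is no real obstacle here: the content of the lemma is entirely absorbed by the already-established primitive recursiveness of $\leq_{\mathbb{Q}}$ and the standard closure lemmas. The only minor point worth noting is that the ``otherwise'' branch also absorbs inputs with $\neg\isRat(x)$ or $\neg\isRat(y)$, but since the claimed guarantee is conditioned on both arguments being valid rational codes, the value of $\max_{\mathbb{Q}}$ on such degenerate inputs is immaterial, exactly as with the functions $s_\mathbb{Q}, N_\mathbb{Q}, D_\mathbb{Q}$ and the arithmetic operations of Lemma \ref{lemma:rational_operations}.
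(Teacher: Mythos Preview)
Your proof is correct and follows essentially the same approach as the paper: a definition by cases on the primitive recursive relation \(\leq_{\mathbb{Q}}\), returning whichever argument codes the larger rational. The only cosmetic difference is that the paper tests \(y \leq_{\mathbb{Q}} x\) rather than \(x \leq_{\mathbb{Q}} y\), and omits the explicit correctness verification you spell out.
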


\begin{proof}
    \(\max_\mathbb{Q}(x,y)\) can be defined by:
    \begin{equation}
        \max_{\mathbb{Q}}(x,y) = \begin{cases}
            x, &\text{if } y \leq_{\mathbb{Q}} x \\
            y, &\text{otherwise}
        \end{cases}
    \end{equation}
    Since \(\leq_\mathbb{Q}\) is primitive recursive, \(\max_{\mathbb{Q}}\) is also primitive recursive.
\end{proof}

\section{Computability Framework}\label{sec:computability_framework}

We now introduce the notion of \emph{computable real numbers} and extend the constructions of Subsection \ref{subsec:rationals} to this domain. The class of computable real numbers was first defined by Turing in the same work in which he introduced Turing machines \cite{turing1936computable}. Since then, computable analysis has developed into a rich field in theoretical computer science, and many equivalent definitions have been proposed. Here, we present one based on \(\mu\)-recursive functions. For a detailed treatment, we refer the reader to \cite{weihrauch2000computable}.

\begin{definition}[Computable Real Number]\label{def:computable}
    A number \(\alpha \in \mathbb{R}\) is called computable iff there exist recursive functions \(s\), \(N\), \(D : \mathbb{N} \rightarrow \mathbb{N}\) with \(D(n) \neq 0\) for all \(n \in \mathbb{N}\) that satisfy:
    \begin{equation}
        \left| \alpha - (-1)^{s(n)} \cdot \frac{N(n)}{D(n)} \right| < \frac{1}{2^n}, \quad \forall n \in \mathbb{N}
    \end{equation}
    The set of all computable real numbers is denoted by \(\mathbb{R}_c\).
\end{definition}

Using Definition \ref{def:computable} and Theorem \ref{theorem:NF}, we can define an encoding of the computable real numbers into the natural numbers, analogous to the encoding of Definition \ref{def:rationals}. This encoding then allows us to extend the constructions of Subsection \ref{sec:recursion_framework} and to formalize recursion over computable real numbers.

\begin{definition}[Encoding of Computables]
    Let \(\alpha \in \mathbb{R}_c\) with corresponding recursive functions \(s\), \(N\), \(D : \mathbb{N} \rightarrow \mathbb{N}\) satisfying \(D(n) \neq 0\) for all \(n \in \mathbb{N}\) and:
    \begin{equation}
        \left| \alpha - (-1)^{s(n)} \cdot \frac{N(n)}{D(n)} \right| < \frac{1}{2^n}, \quad \forall n \in \mathbb{N}
    \end{equation}
    We call the recursive function \(f : \mathbb{N} \rightarrow \mathbb{N}\) defined by:
    \begin{equation}
        f(n) = \langle s(n), N(n), D(n) \rangle
    \end{equation}
    a recursive rational approximation of \(\alpha\) and we refer to each code of \(f\) as a code of \(\alpha\).

    We define the relation \(\isCom \subseteq \mathbb{N}\) by:
    \begin{equation}
        \isCom(u) \Leftrightarrow u \text{ is a code of some computable real } \alpha
    \end{equation}
    and the function \(\com : \isCom \rightarrow \mathbb{R}_c\) that satisfies \(\com(u) = \alpha\) whenever \(u\) is a code of \(\alpha\) by:
    \begin{equation}
        \com(u) = \lim_{n \rightarrow \infty}\rat(\varphi^1(u, n))
    \end{equation}
\end{definition}

We can now formulate the recursiveness of addition and multiplication over the computable real numbers, as stated in Lemma \ref{lemma:computable_operations}.

\begin{lemma}\label{lemma:computable_operations}
    There exist primitive recursive functions \(+_{\mathbb{R}_c}\), \(\cdot_{\mathbb{R}_c} : \mathbb{N}^2 \rightarrow \mathbb{N}\) such that for all \(x,y \in \mathbb{N}\) with \(\isCom(x)\) and \(\isCom(y)\) the following conditions hold:
    \begin{gather}
        \com(x +_{\mathbb{R}_c}y) = \com(x) + \com(y) \\
        \com(x \cdot_{\mathbb{R}_c}y) = \com(x) \cdot \com(y)
    \end{gather}
\end{lemma}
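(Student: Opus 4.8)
The plan is to reduce both claims to the $S^m_n$-machinery of Theorem \ref{theorem:NF} (equivalently, to the code-building construction described after Lemma \ref{lemma:effectiveness}), once an appropriate shift of precision is in place. Throughout, I use that a code $x$ of $\alpha = \com(x)$ is, by definition, a code of a \emph{total} recursive rational approximation of $\alpha$: thus $\varphi^1(x,n)$ is defined for every $n$, satisfies $\isRat$, has nonzero denominator, and obeys $|\alpha - \rat(\varphi^1(x,n))| < 2^{-n}$; and likewise for $y$ and $\beta = \com(y)$.

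\textbf{Addition.} First I would consider the function $H : \mathbb{N}^3 \rightharpoonup \mathbb{N}$, $H(x,y,n) = \varphi^1(x,n+1) +_{\mathbb{Q}} \varphi^1(y,n+1)$. It is recursive, being a composition of the universal function $\varphi^1$, the successor, and the primitive recursive function $+_{\mathbb{Q}}$ of Lemma \ref{lemma:rational_operations}, so it has a code $e_H$, which we fix. Now let $\isCom(x)$ and $\isCom(y)$, and define $h(n) = \varphi^1(S^2_1(e_H,x,y),n)$. By property (2) of Theorem \ref{theorem:NF}, $h(n) = \varphi^3(e_H,x,y,n) = H(x,y,n) = \varphi^1(x,n+1) +_{\mathbb{Q}} \varphi^1(y,n+1)$. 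Hence $h$ is total, each $h(n)$ again satisfies $\isRat$ with nonzero denominator (the denominator of a $+_{\mathbb{Q}}$-sum is the product of the two input denominators), and by the triangle inequality $|(\alpha+\beta) - \rat(h(n))| < 2^{-(n+1)} + 2^{-(n+1)} = 2^{-n}$. So $h$ is a recursive rational approximation of $\alpha+\beta$, i.e. $S^2_1(e_H,x,y)$ is a code of $\alpha+\beta$, and we set $x +_{\mathbb{R}_c} y := S^2_1(e_H,x,y)$; this is primitive recursive since $S^2_1$ is and $e_H$ is a constant.

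\textbf{Multiplication.} The same template applies, except the precision shift must be input-dependent, because bounding $|\alpha\beta - ab|$ through $|\alpha\beta - ab| \le |\alpha|\,|\beta - b| + |b|\,|\alpha - a|$ requires bounds on $|\alpha|$ and on the approximants of $\beta$. From $|\alpha - \rat(\varphi^1(x,0))| < 1$ I would read off an explicit natural-number bound $B_x \ge |\alpha|$ computed primitive-recursively from $\varphi^1(x,0)$, and similarly a bound $B_y$ that works for every $\rat(\varphi^1(y,m))$ with $m \ge 0$. Next, define a primitive recursive $K : \mathbb{N}^2 \to \mathbb{N}$ with $2^{K(x,y)} > B_x + B_y$ (bounded search over a primitive recursive predicate), and put $G(x,y,n) = \varphi^1(x, n+K(x,y)+1) \cdot_{\mathbb{Q}} \varphi^1(y, n+K(x,y)+1)$. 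With $\delta = 2^{-(n+K(x,y)+1)}$ one then gets $|\alpha\beta - \rat(G(x,y,n))| \le (B_x+B_y)\delta < 2^{-n}$, again with nonzero denominators; so, fixing a code $e_G$ of $G$, the map $x \cdot_{\mathbb{R}_c} y := S^2_1(e_G,x,y)$ is primitive recursive and is a code of $\com(x)\cdot\com(y)$.

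The routine ingredients are the rational arithmetic and comparisons of Lemma \ref{lemma:rational_operations} and the step from a function uniformly recursive in $(x,y)$ to a primitive recursive map outputting a code of its specialization (the $S^m_n$ functions of Theorem \ref{theorem:NF}). The only part needing genuine care is the multiplication case: one must exhibit the explicit, primitive recursive, input-dependent shift $K(x,y)$ and verify that the shifted product of rational approximants simultaneously meets the $2^{-n}$ error bound and never produces a zero denominator, so that $S^2_1(e_G,x,y)$ really is a code of a valid recursive rational approximation of $\com(x)\cdot\com(y)$.
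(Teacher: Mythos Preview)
Your proposal is correct and follows essentially the same approach as the paper: shift precision by one for addition, and by an input-dependent amount (derived from the $n=0$ approximants) for multiplication, then apply $S^2_1$ to a fixed code of the resulting three-variable recursive function to obtain a primitive recursive code-builder. One small slip: your shift $K(x,y)$ is only $\mu$-recursive, not primitive recursive, since it goes through $\varphi^1(x,0)$ and $\varphi^1(y,0)$; this is harmless, because $G$ need only be recursive for $S^2_1(e_G,\cdot,\cdot)$ to be primitive recursive, and indeed the paper's analogous bound $M(x)=N_{\mathbb{Q}}(\varphi^1(x,0))+1$ is likewise only recursive.
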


\begin{proof}
    Let \(x,y \in \mathbb{N}\) with \(\isCom(x)\) and \(\isCom(y)\). Set \(\alpha = \com(x) \in \mathbb{R}_c\) and \(\beta = \com(y) \in \mathbb{R}_c\). Denote by \(f_\alpha, f_\beta\) the recursive rational approximations of \(\alpha\) and \(\beta\) respectively:
    \begin{gather}
        f_\alpha(n) = \varphi^1(x,n) \\
        f_\beta(n) = \varphi^1(y,n)
    \end{gather}
    Let \(\epsilon_\alpha(n) = \rat(f_\alpha(n)) - \alpha\) and \(\epsilon_\beta(n) = \rat(f_\beta(n)) - \beta\) denote the approximation errors. By definition we have for all \(n \in \mathbb{N}\):
    \begin{gather}
        \left| \epsilon_\alpha(n) \right|, \ \left| \epsilon_\beta(n) \right| < \frac{1}{2^n}
    \end{gather}
    
    For the addition, note that the function \(\rat(f_\alpha(n+1)) + \rat(f_\beta(n+1))\) satisfies:
    \begin{align}
        | \rat(f_\alpha(n+1)) + \rat(&f_\beta(n+1)) - (\alpha + \beta) | =\\
        & \left| \epsilon_\alpha(n+1) + \epsilon_\beta(n+1) \right| < \frac{1}{2^n}
    \end{align}
    Therefore, the function:
    \begin{equation}\label{eq:com_add}
        f_+(n) = f_\alpha(n+1) +_\mathbb{Q} f_\beta(n+1)
    \end{equation}
    is a recursive rational approximation of \(\alpha + \beta\). By Lemma \ref{lemma:effectiveness} and equation \eqref{eq:com_add}, we can construct a primitive recursive function \(+_{\mathbb{R}_c}\) with \(x +_{\mathbb{R}_c}y\) being a code of the function \(f_+\). Therefore, \(x +_{\mathbb{R}_c}y\) is a code of the computable real number \(\alpha + \beta\).

    For the multiplication we have:
    \begin{align}
        \rat(f_\alpha(k)) &\cdot\rat(f_\beta(k)) =\\&\alpha\beta + \alpha\epsilon_\beta(k) + \beta\epsilon_\alpha(k) + \epsilon_\alpha(k)\epsilon_\beta(k)
    \end{align}
    Therefore, we have:
    \begin{align}
        |\rat(f_\alpha(k))& \cdot\rat(f_\beta(k)) - \alpha\beta|\\
        &\leq |\alpha\epsilon_\beta(k)| + |\beta\epsilon_\alpha(k)| + |\epsilon_\alpha(k)\epsilon_\beta(k)| \\
        &<\frac{|\alpha| + |\beta|}{k} + \frac{1}{2^{k+1}}\label{eq:mult_ineq}
    \end{align}

    We know that \(\rat(f_\alpha(0)) - 1 < \alpha < \rat(f_\alpha(0))+1\). Hence:
    \begin{align}
        |\alpha| &< \max\{|\rat(f_\alpha(0)) - 1|, |\rat(f_\alpha(0)) + 1|\} \\
        &\leq |\rat(f_\alpha(0))| + 1 \\
        &\leq  N_{\mathbb{Q}}(\varphi^1(x,0)) + 1
    \end{align}




    Define the recursive function:
    \begin{equation}
            M(x) = N_{\mathbb{Q}}(\varphi^1(x,0)) + 1
    \end{equation}
    By construction, \(M\) satisfies \(M(x) > |\alpha|\), \(M(y) > |\beta|\) and \(M(n) \neq 0\). By setting \(k = K(n,x,y) = (M(x)+M(y))2^{n+1} > n\) in the inequality \eqref{eq:mult_ineq} we achieve:
    \begin{align}
        |\rat(f_\alpha(&K(n,x,y))) \cdot\rat(f_\beta(K(n,x,y))) - \alpha\beta| \\
        & <\frac{|\alpha| + |\beta|}{(|\alpha| + |\beta|)2^{n+1}} + \frac{1}{2^{K(n,x,y)+1}} < \frac{1}{2^n} 
    \end{align}

    Therefore, the function:
    \begin{equation}
        f_\cdot(x,y,n) = f_\alpha(K(n,x,y)) \cdot_\mathbb{Q} f_\beta(K(n,x,y))
    \end{equation}
    is a recursive rational approximation of \(\alpha \cdot \beta\), when viewed as a function of only \(n\). By Lemma \ref{lemma:effectiveness} we can construct a primitive recursive function \(F_\cdot\), with \(F_\cdot(x,y)\) being a code of \(f_\cdot(x,y,n)\). The function \(\cdot_{\mathbb{R}_c}\) can then be defined by:
    \begin{equation}
        x \cdot_{\mathbb{R}_c} y = S^2_1(F_\cdot(x,y), x,y)
    \end{equation}
\end{proof}

Subtraction and division over \(\mathbb{R}_c\) can similarly be shown to be recursive, although we will not require these operations in our analysis. Notably, the relations of equality and ordering of computable real numbers are not recursive.

\begin{lemma}\label{lemma:non_computable_relations}
The relations \(=_{\mathbb{R}_c}\), \(<_{\mathbb{R}_c} \subseteq \mathbb{N}^2\) defined by:
\begin{gather}
    x =_{\mathbb{R}_c} y \Leftrightarrow \isCom(x) \land \isCom(y) \land \com(x) = \com(y) \\
    x <_{\mathbb{R}_c} y \Leftrightarrow \isCom(x) \land \isCom(y) \land \com(x) < \com(y)
\end{gather}
are not recursive.
\end{lemma}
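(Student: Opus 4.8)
The plan is to reduce a recursively enumerable, non-recursive set to each of the two relations, so that recursiveness of $=_{\mathbb{R}_c}$ (resp.\ of $<_{\mathbb{R}_c}$) would contradict its undecidability. First I would fix such a set. Let $K=\{e\in\mathbb{N}\mid \exists y:T_1(y,e,e)\}=\dom(e\mapsto\varphi^1(e,e))$. This set is not recursive: were it recursive, the function $g$ defined by $g(e)=\varphi^1(e,e)+1$ for $e\in K$ and $g(e)=0$ for $e\notin K$ would be total and recursive by definition by cases, hence $g(e)=\varphi^1(e_0,e)$ for some code $e_0$; but then $g(e_0)=\varphi^1(e_0,e_0)$ is defined, forcing $e_0\in K$ and therefore $g(e_0)=g(e_0)+1$, a contradiction. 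I would also record that $K$ comes with a primitive recursive ``stage'' relation: $e\in K_s \Leftrightarrow \exists y\le s:T_1(y,e,e)$ is primitive recursive (a bounded quantifier applied to the primitive recursive $T_1$), is monotone in $s$, and satisfies $e\in K \Leftrightarrow \exists s:\,e\in K_s$.

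Next I would attach to every $e$ a computable real $\alpha_e\ge 0$, uniformly in $e$, so that $\alpha_e=0\Leftrightarrow e\notin K$. For fixed $e$ let $s_0(e)=\mu s:(e\in K_s)$, the first stage witnessing $e\in K$ (defined exactly when $e\in K$), and put
\[
q_e(n)=\begin{cases}2^{-s_0(e)},&\text{if }e\in K_n,\\[2pt]0,&\text{otherwise.}\end{cases}
\]
By monotonicity of $K_s$, if $e\notin K$ then $q_e\equiv 0$, so $\alpha_e:=0$; if $e\in K$ then $q_e(n)=0$ for $n<s_0(e)$ and $q_e(n)=2^{-s_0(e)}$ for $n\ge s_0(e)$, so $q_e$ converges to $\alpha_e:=2^{-s_0(e)}>0$. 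In both cases $|q_e(n)-\alpha_e|\le 2^{-(n+1)}<2^{-n}$ for all $n$, so the encoded sequence $n\mapsto\langle 0,N_e(n),D_e(n)\rangle$ with $(N_e(n),D_e(n))\in\{(0,1),(1,2^{s_0(e)})\}$ is a recursive rational approximation of $\alpha_e$. Since $(e,n)\mapsto\langle 0,N_e(n),D_e(n)\rangle$ is recursive — a bounded search for the least stage $\le n$ witnessing $e\in K$, followed by primitive recursive arithmetic on the code — Lemma \ref{lemma:effectiveness} and the remarks following it yield a primitive recursive $h:\mathbb{N}\to\mathbb{N}$ with $h(e)$ a code of this approximation; hence $\isCom(h(e))$ and $\com(h(e))=\alpha_e$ for all $e$.

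It then remains to close the argument. Fix a code $z_0$ of the constant approximation $n\mapsto\langle 0,0,1\rangle$, so that $\isCom(z_0)$ and $\com(z_0)=0$. If $=_{\mathbb{R}_c}$ were recursive, then by composition the unary relation $e\mapsto (h(e)=_{\mathbb{R}_c}z_0)$ would be recursive, and it holds exactly when $\com(h(e))=0$, i.e.\ exactly when $e\notin K$; thus $\mathbb{N}\setminus K$, and hence $K$, would be recursive — a contradiction. Likewise, if $<_{\mathbb{R}_c}$ were recursive, then the unary relation $e\mapsto(z_0<_{\mathbb{R}_c}h(e))$ would be recursive and would hold exactly when $0<\alpha_e$, i.e.\ exactly when $e\in K$, again making $K$ recursive — a contradiction. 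I expect the main subtlety to be the uniformity step: verifying that $(e,n)\mapsto q_e(n)$ is genuinely recursive and extracting a single primitive recursive $h$ that outputs codes for all the $\alpha_e$ at once, which is exactly where the $S^m_n$ apparatus behind Lemma \ref{lemma:effectiveness} is needed. The verification that each $\alpha_e$ is computable (the $2^{-n}$ error bound) and the diagonal argument for the non-recursiveness of $K$ are routine.
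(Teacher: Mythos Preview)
Your proof is correct and follows essentially the same approach as the paper: both reduce a halting-type undecidable set to $=_{\mathbb{R}_c}$ and $<_{\mathbb{R}_c}$ by uniformly constructing, via the $S^m_n$ machinery, a computable real that is zero exactly on the non-halting instances. The only cosmetic differences are that you use the diagonal set $K$ (and supply its undecidability proof) rather than the two-argument halting relation $H(e,x)$, and your approximating sequence jumps directly to the limit $2^{-s_0(e)}$ instead of accumulating partial sums.
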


\begin{proof}
We will reduce the halting problem to both \(=_{\mathbb{R}_c}\) and \(<_{\mathbb{R}_c}\). Since the halting problem is famously non recursive, we conclude that \(=_{\mathbb{R}_c}\) and \(<_{\mathbb{R}_c}\) are also non recursive.

The halting problem can be formulated, using the universal recursive functions, as determining for arbitrary \(e,x \in \mathbb{N}\) whether \(\varphi^1(e,x) \downarrow\) or not. The fact that it is not solvable algorithmically is expressed as the non recursiveness of the relation \(H \subseteq \mathbb{N}^2\) defined by:
\begin{equation}
    H(e, x) \Leftrightarrow \varphi^1(e,x) \downarrow
\end{equation}

We proceed to defining the relation \(H(e,x)\) in terms of \(=_{\mathbb{R}_c}\) and \(<_{\mathbb{R}_c}\). Let \(e,x \in \mathbb{N}\) be arbitrary natural numbers. Define the function:
\begin{equation}
    f(e,x,k) = \begin{cases}
        1, &\text{if } \exists y \leq k:( T_1(y,e,x)) \\
        0, &\text{otherwise}
    \end{cases}
\end{equation}
\(f\) is primitive recursive, since the relation \(T_1\) is primitive recursive by Theorem \ref{theorem:NF}. Furthermore, \(H(e,x)\) is true iff \(f(e,x,k) = 1\) for some \(k \in \mathbb{N}\). Define a second function \(f_M : \mathbb{N}^3 \rightarrow \mathbb{N}\) by the primitive recursion:
\begin{equation}
\begin{cases}
    f_M(0,e,x) = f(e,x,0)\cdot\langle0,1,1\rangle \\
    \begin{aligned}
    f_M(n+1,e,x) = &f_M(n,e,x) +_\mathbb{Q} \\
    &f(e,x,n+1)\cdot\langle0,1,2^{n+1}\rangle
    \end{aligned}
\end{cases}
\end{equation}
The definition od \(f_M\) is such that the number \(f_M(n,e,x)\) is a code of the rational number:
\begin{equation}
    q_{n,e,x} = \sum_{k=0}^nf(e,x,k) \cdot \frac{1}{2^k}
\end{equation}
Therefore, the function \(f_M'(n) = f_M(n,e,x)\) is a recursive rational approximation of some computable real number \(\beta\), which is zero iff \(f(e,x,n) = 0\) for all \(n \in \mathbb{N}\) and positive otherwise. This means that:
\begin{equation}
    \begin{cases}
        0 = \beta \Leftrightarrow \neg H(e,x) \\
        0 < \beta \Leftrightarrow H(e,x)
    \end{cases}
\end{equation}

Let \(c_0\) be a code of the computable real number \(0\). Let \(c_M\) be a code of the recursive function \(\hat{f}(e,x,n) = f_M(n,e,x)\). A code of the function \(f_M'(n) = \hat{f}(e,x,n)\), and therefore a code of \(\beta\) is given by \(S^2_1(c_M,e,x)\). Hence, we arrive at the equivalences:
\begin{gather}
    H(e,x) \Leftrightarrow \neg(c_0 =_{\mathbb{R}_c} S^2_1(c_M, e,x)) \label{eq:com_equality}  \\
    H(e,x) \Leftrightarrow c_0 <_{\mathbb{R}_c} S^2_1(c_M, e,x) \label{eq:com_ordering}
\end{gather}

From the above expressions we see that if either of the relations \(=_{\mathbb{R}_c}, <_{\mathbb{R}_c}\) is computable, then \(H\) is also computable, which is false. Therefore, by contradiction, \(=_{\mathbb{R}_c}\) and \(<_{\mathbb{R}_c}\) are not computable.

\end{proof}

In fact, even the comparison of computable real numbers with rationals is not recursive, as stated in Lemma \ref{lemma:impossibility}.

\begin{lemma}\label{lemma:impossibility}
    The relation \(<_{\mathbb{R}_c,\mathbb{Q}} \subseteq \mathbb{N}^2\) defined by:
    \begin{equation}
        x <_{\mathbb{R}_c,\mathbb{Q}} y \Leftrightarrow \com(x) < \rat(y)
    \end{equation}
    is not recursive.
\end{lemma}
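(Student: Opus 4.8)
The plan is to reduce the halting problem to $<_{\mathbb{R}_c,\mathbb{Q}}$, recycling the construction already used in the proof of Lemma~\ref{lemma:non_computable_relations}. Recall that there, for arbitrary $e,x \in \mathbb{N}$, one produces---in a way that is recursive (indeed primitive recursive) in the pair $(e,x)$---a code of a computable real number $\beta_{e,x} = \sum_{k=0}^{\infty} f(e,x,k)\,2^{-k} \geq 0$, where $f$ is the primitive recursive predicate ``$\varphi^1(e,x)$ has halted within $k$ steps''; this $\beta_{e,x}$ satisfies $\beta_{e,x} = 0 \Leftrightarrow \neg H(e,x)$ and $\beta_{e,x} > 0 \Leftrightarrow H(e,x)$, and a code of it is $S^2_1(c_M, e, x)$ for a suitable code $c_M$. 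I would reuse exactly this object.

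The first step is to pass from $\beta_{e,x}$ to $\gamma_{e,x} := -\beta_{e,x}$. Negating a computable real amounts to negating every rational approximant, i.e.\ applying to each value of its approximation function the primitive recursive map on codes that sends $\langle s, N, D \rangle$ to $\langle s+1, N, D \rangle$ (since $(-1)^{s+1} = -(-1)^{s}$); the error bound of the approximation is unaffected (shifting the argument by one if one insists on the strict inequality $< 2^{-n}$ of Definition~\ref{def:computable} being preserved verbatim). Composing this code map with the approximation function of $\beta_{e,x}$ and invoking Lemma~\ref{lemma:effectiveness} (effectiveness of composition, together with the $S^m_n$ functions) yields a primitive recursive function $g : \mathbb{N}^2 \to \mathbb{N}$ such that $g(e,x)$ is a code of $\gamma_{e,x} = -\beta_{e,x}$. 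Since $\beta_{e,x} \geq 0$, we have $\gamma_{e,x} \leq 0$, with $\gamma_{e,x} < 0 \Leftrightarrow H(e,x)$.

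Next, fix the rational code $c_0 = \langle 0, 0, 1 \rangle$, so that $\isRat(c_0)$ holds and $\rat(c_0) = 0$. Then, for all $e, x \in \mathbb{N}$,
\begin{align*}
    H(e,x) &\Leftrightarrow \gamma_{e,x} < 0 = \rat(c_0) \\
    &\Leftrightarrow \com(g(e,x)) < \rat(c_0) \\
    &\Leftrightarrow g(e,x) <_{\mathbb{R}_c,\mathbb{Q}} c_0 .
\end{align*}
The assignment $(e,x) \mapsto (g(e,x), c_0)$ is primitive recursive, so if the relation $<_{\mathbb{R}_c,\mathbb{Q}}$ were recursive, the composition of its characteristic function with this assignment would be a recursive characteristic function for $H$, contradicting the non-recursiveness of the halting problem. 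Hence $<_{\mathbb{R}_c,\mathbb{Q}}$ is not recursive.

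This is essentially a corollary of the proof of Lemma~\ref{lemma:non_computable_relations}, so I do not expect a genuine obstacle. The only points requiring care are bookkeeping: (i) carrying out the passage from $\beta_{e,x}$ to a code of $-\beta_{e,x}$, uniformly in $(e,x)$, strictly inside the recursive-function calculus (via Lemma~\ref{lemma:effectiveness}) rather than informally; and (ii) confirming that the negated rational approximation still meets the strict error bound of Definition~\ref{def:computable}. Both are handled exactly as in the earlier lemma.
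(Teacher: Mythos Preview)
Your proposal is correct and follows essentially the same approach as the paper: a reduction from the halting problem via a computable real that is negative iff $\varphi^1(e,x)\downarrow$, compared against the rational $0$. The only cosmetic difference is that the paper bakes the negation directly into the recursive definition of $f_M$ (using sign codes $\langle 1,1,1\rangle$ and $\langle 1,1,2^{n+1}\rangle$) rather than first constructing the nonnegative $\beta_{e,x}$ and then negating.
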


\begin{proof}
    We use the exact same logic as with the proof of Lemma \ref{lemma:non_computable_relations}. Let \(c_0\) be a code of the rational number 0. Let \(f(e,x,k)\) be the same as in the proof of Lemma \ref{lemma:non_computable_relations}. Define \(f_M\) by the primitive recursion:
    \begin{equation}
    \begin{cases}
        f_M(0,e,x) = f(e,x,0)\cdot\langle 1, 1, 1 \rangle \\
    \begin{aligned}
    f_M(n+1,e,x) = &f_M(n,e,x) +_\mathbb{Q} \\
    &f(e,x,n+1)\cdot\langle1,1,2^{n+1}\rangle
    \end{aligned}
    \end{cases}
    \end{equation}
so that \(f_M(n,e,x)\) is a code of the rational number:
\begin{equation}
    q_{n,e,x} = \sum_{k=0}^nf(e,x,k)\cdot \frac{-1}{2^n}
\end{equation}
Let \(c_M\) be a code of the computable function \(\hat{f}(e,x,n) = f_M(n,e,x)\). Then we have the equivalence:
\begin{equation}
    H(e,x) \Leftrightarrow S^2_1(c_M,e,x) <_{\mathbb{R}_c, \mathbb{Q}} c_0
\end{equation}
from which we conclude that \(<_{\mathbb{R}_c, \mathbb{Q}}\) is not computable.
\end{proof}

Although the order relation on computable real numbers is not recursive, it is still possible to compute the maximum of two computable reals in a recursive way.

\begin{lemma}
    There exists a primitive recursive function \(\max_{\mathbb{R}_c} : \mathbb{N}^2 \rightarrow \mathbb{N}\) such that for all \(x,y \in \mathbb{N}\) with \(\isCom(x)\) and \(\isCom(y)\), the number \(\max_{\mathbb{R}_c}(x,y)\) is a code of the computable real number \(\max\{\com(x), \com(y)\}\).
\end{lemma}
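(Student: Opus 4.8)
The plan is to follow the template of the proof of Lemma~\ref{lemma:computable_operations}, the only new analytic input being that the maximum function is non-expansive: for all reals $a,b,c,d$,
\begin{equation}
\left| \max\{a,b\} - \max\{c,d\} \right| \le \max\{\,|a-c|,\ |b-d|\,\}.
\end{equation}
This is immediate, since $a \le c + |a-c| \le \max\{c,d\} + \max\{|a-c|,|b-d|\}$ and likewise $b \le \max\{c,d\} + \max\{|a-c|,|b-d|\}$, giving $\max\{a,b\} \le \max\{c,d\} + \max\{|a-c|,|b-d|\}$, and the reverse inequality by symmetry. It follows that if $f_\alpha, f_\beta$ are recursive rational approximations of $\alpha, \beta$, then $n \mapsto \max_{\mathbb{Q}}(f_\alpha(n), f_\beta(n))$ is a recursive rational approximation of $\max\{\alpha,\beta\}$: it is recursive, its value is literally one of the two input rational codes (so its denominator component stays nonzero), and by the displayed bound its error is at most $\max\{|\epsilon_\alpha(n)|,|\epsilon_\beta(n)|\} < 1/2^n$. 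Note that, unlike the multiplication case, no magnification of the index $n$ is needed, so the construction is in fact simpler than the two operations already handled.

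Concretely, I would first define the recursive function $g : \mathbb{N}^3 \rightharpoonup \mathbb{N}$ by
\begin{equation}
g(x,y,n) = \max_{\mathbb{Q}}\!\left( \varphi^1(x,n),\, \varphi^1(y,n) \right),
\end{equation}
which is recursive as a composition of the universal function $\varphi^1$, projections, and the primitive recursive function $\max_{\mathbb{Q}}$. Whenever $\isCom(x)$ and $\isCom(y)$ hold, $\varphi^1(x,\cdot)$ and $\varphi^1(y,\cdot)$ are the total recursive rational approximations of $\alpha = \com(x)$ and $\beta = \com(y)$; hence by the previous paragraph $g(x,y,\cdot)$ is total and is a recursive rational approximation of $\max\{\alpha,\beta\}$.

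Finally, I would fix a code $e_g \in \mathbb{N}$ of the ternary function $g$ and invoke the second clause of Theorem~\ref{theorem:NF}, which yields $\varphi^1(S^2_1(e_g,x,y),n) = \varphi^3(e_g,x,y,n) = g(x,y,n)$ for all $x,y,n$. Thus $S^2_1(e_g,x,y)$ is a code of the unary function $n \mapsto g(x,y,n)$, which for $\isCom(x),\isCom(y)$ is a recursive rational approximation of $\max\{\com(x),\com(y)\}$ and hence a code of that computable real. Setting $\max_{\mathbb{R}_c}(x,y) = S^2_1(e_g,x,y)$ then does the job, and it is primitive recursive because $S^2_1$ is primitive recursive and $e_g$ is a constant; equivalently one could assemble this code directly from the operators $\Com^2_1$ and $S^1_1$ of Lemma~\ref{lemma:effectiveness}, exactly as in Lemma~\ref{lemma:computable_operations}. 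I do not anticipate any genuine obstacle here: the only points needing a word of care are checking that $g(x,y,\cdot)$ literally meets the definition of a recursive rational approximation (in particular that $D_\mathbb{Q}(g(x,y,n)) \neq 0$, which holds because $\max_{\mathbb{Q}}$ copies one of its arguments verbatim) and that the resulting $\max_{\mathbb{R}_c}$ is total, which is clear from the totality of $S^2_1$.
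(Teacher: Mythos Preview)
Your proposal is correct and essentially matches the paper's proof: both observe that $n \mapsto \max_{\mathbb{Q}}(f_\alpha(n), f_\beta(n))$ is a recursive rational approximation of $\max\{\alpha,\beta\}$ (you phrase the error bound via non-expansiveness of $\max$, the paper via the equivalent direct sandwich $\max\{\alpha,\beta\} - 2^{-n} < \rat(\max_{\mathbb{Q}}(f_\alpha(n),f_\beta(n))) < \max\{\alpha,\beta\} + 2^{-n}$), and then obtain a primitive recursive code-producing function via the $S^m_n$ machinery / Lemma~\ref{lemma:effectiveness}. Your explicit check that $D_{\mathbb{Q}}(g(x,y,n)) \neq 0$ because $\max_{\mathbb{Q}}$ returns one of its inputs verbatim is a nice touch the paper leaves implicit.
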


\begin{proof}
    Let \(x,y \in \mathbb{N}\) satisfy \(\isCom(x)\) and \(\isCom(y)\). Set \(\alpha = \com(x)\), \(\beta = \com(y)\), \(f_\alpha(n) = \varphi^1(x,n)\) and \(f_\beta(n) = \varphi^1(y,n)\). By definition, we have for all \(n \in \mathbb{N}\):
    \begin{gather}
        \alpha-\frac{1}{2^n} < \rat(f_\alpha(n)) < \alpha + \frac{1}{2^n} \\
        \beta-\frac{1}{2^n} < \rat(f_\beta(n)) < \beta + \frac{1}{2^n}
    \end{gather}
It follows that:
\begin{align}
    &\max\left\{\alpha-\frac{1}{2^n}, \beta - \frac{1}{2^n}\right\} < \max\left\{\rat(f_\alpha(n)),  \rat(f_\beta(n)) \right\} \\
    &\quad\quad\quad\quad\quad\quad\quad\quad\quad\quad< \max\left\{\alpha+\frac{1}{2^n}, \beta + \frac{1}{2^n}\right\} \\
    &\Leftrightarrow \max\{\alpha, \beta\} -  \frac{1}{2^n} < \rat\left(\max_{\mathbb{Q}}\{f_\alpha(n), f_\beta(n)\}\right) \\
    & \quad\quad\quad\quad\quad\quad\quad\quad\quad < \max\{\alpha, \beta\} + \frac{1}{2^n}
\end{align}
Therefore, the function:
\begin{equation}\label{eq:max}
    f_{\max}(n) = \max_{\mathbb{Q}}(f_\alpha(n), f_\beta(n))
\end{equation}
is a recursive rational approximation of \(\max\{\alpha, \beta\}\)
The existence of \(\max_{\mathbb{R}_c}\) follows from equation \eqref{eq:max} and Lemma \ref{lemma:effectiveness}.
\end{proof}

\section{Problem Formulation and Solution}\label{sec:solution}

We now proceed with a precise formulation of the problem. Our goal is to determine whether the task of constructing capacity-achieving codes is computationally solvable, in the sense of Question \ref{q:problem}, which is a more formal version of Question \ref{q:problem_informal}:

\begin{question}\label{q:problem}
Does there exist a Turing machine \(M = (Q, \Gamma, \square, \Sigma, \delta, q_0, F)\) such that, when given as input a DMC \(p_{Y \mid X}\), a rate \(R < C(p_{Y \mid X})\) and an error tolerance \(\epsilon > 0\), all appropriately encoded over the input alphabet \(\Sigma\), it outputs a description over the tape alphabet \(\Gamma\) of a block code \(\mathcal{C}\) with rate at least \(R\) and with maximum block error probability \(\lambda_{\max} < \epsilon\)?
\end{question}

Note that the channel probabilities \(p_{Y \mid X}(y \mid x)\), as well as the numbers \(R\) and \(\epsilon\) are generally real numbers. However, there is no injective encoding of the set \(\mathbb{R}\) over any finite alphabet \(\Sigma\), since \(\left| \mathbb{R} \right| = 2^{\aleph_0} > \aleph_0 = \left| \Sigma^* \right|\). The most general subset of \(\mathbb{R}\) that can be encoded over \(\Sigma\) and for which we can perform computations using a Turing machine, is the set of computable real numbers \(\mathbb{R}_c\). To this end, we will prove that there exists a Turing machine \(M\) that satisfies the conditions of Question \ref{q:problem}, for any DMC with computable probabilities \(p_{Y \mid X}(y \mid x)\) and for any rational values of \(R\) and \(\epsilon\).

To do this, we will construct a \(\mu\)-recursive function \(\FindCode : \mathbb{N}^3 \rightharpoonup \mathbb{N}\) that takes as inputs appropriate encodings of \(p_{Y \mid X}\), \(R\) and \(\epsilon\) over the natural numbers, and outputs an encoding over \(\mathbb{N}\) of a block code that satisfies the required conditions. Since, by Theorem \ref{theorem:NF}, Turing machines and \(\mu\)-recursive functions are computationally equivalent, we conclude that the problem can be solved by a Turing machine.

We will then extend the result to cover the case where \(\epsilon\) is an arbitrary computable real number, rather than just a rational, and we will also discuss how to generalize the result to allow \(R\) to be any computable real number as well.

\subsection{Formulation with Pseudocode}

We will first describe an algorithm that solves the problem using pseudocode and then construct the function \(\FindCode\) based on this pseudocode. A naive first formulation is given by Algorithm \ref{alg:algorithm_naive}.

\begin{algorithm}
    \caption{Naive approach}\label{alg:algorithm_naive}
    \hspace*{\algorithmicindent} \textbf{Input:} \(p_{Y \mid X}, R, \epsilon\) \\
    \hspace*{\algorithmicindent} \textbf{Output:} \(\mathcal{C}\) 
    \begin{algorithmic}[1]
    \State \(n \gets 1\)
    \While{\textbf{True}}
    \For{\(\mathcal{C} \in \textsc{Codes}\left(p_{Y \mid X},\left\lceil 2^{nR} \right\rceil, n\right)\)}
        \State \(\lambda \gets \lambda_{\max}(\mathcal{C}, p_{Y \mid X})\)
        \If{\(\lambda < \epsilon\)} 
        \State \Return \(\mathcal{C}\)
        \EndIf
    \EndFor
    \State \(n \gets n+1\)
    \EndWhile
    \end{algorithmic}
\end{algorithm}

Algorithm \ref{alg:algorithm_naive} can be summarized as follows:
\begin{enumerate}
    \item Initialize \(n \gets 1\).
    \item Generate the list \(\textsc{Codes}\left(p_{Y \mid X}, \left\lceil 2^{nR} \right\rceil, n\right)\) consisting of all \(\left( \left\lceil 2^{nR} \right\rceil, n \right)\) block codes for the channel \(p_{Y \mid X}\). This is possible because, for given \(m = \left\lceil 2^{nR} \right\rceil\) and \(n\), the number of such block codes is finite and equal to \(m^{|X|^n} \cdot |Y|^{m \cdot n}\).  {Note that the rate of an \((m,n)\) block code is defined as \(\frac{\log_2 m}{n}\), so \(\left( \left\lceil 2^{nR} \right\rceil, n \right)\) block codes have by definition rate at least \(R\).}
    \item For each code \(\mathcal{C} \in \textsc{Codes}\left(p_{Y \mid X}, \left\lceil 2^{nR} \right\rceil, n\right)\), compute the  {maximum block error probability of \(\mathcal{C}\) under \(p_{Y \mid X}\), denoted by \(\lambda = \lambda_{\max}(\mathcal{C}, p_{Y \mid X})\)}. This step is feasible, since the operations required to compute \(\lambda\) are recursive. After computing \(\lambda\), check whether \(\lambda < \epsilon\). If the condition holds, return the code \(\mathcal{C}\). Otherwise, proceed to the next code.
     \item If none of the codes \(\mathcal{C} \in \textsc{Codes}\left(p_{Y \mid X}, \left\lceil 2^{nR} \right\rceil, n\right)\) satisfies the condition, increment \(n\) by one and repeat the process for the new codeword length.
\end{enumerate}

This algorithm proceeds in a exhaustive search fashion by enumerating all possible block codes for increasing codeword lengths \(n\), starting from \(n = 1\), until a code satisfying the error constraint is found. When \(R < C(p_{Y \mid X})\), Shannon's channel coding theorem guarantees the existence of such codes for all \(n \geq n_0\), for some threshold \(n_0\). Therefore, the algorithm is guaranteed to terminate with a valid block code \(\mathcal{C}\) such that \(\lambda_{\max}(\mathcal{C}, p_{Y \mid X}) < \epsilon\).

The issue with Algorithm \ref{alg:algorithm_naive} is that \(\lambda_{\max}\) is, in general, a computable real number, and the truth of the expression \(\lambda_{\max} < \epsilon\) cannot be recursively decided, as stated in Lemma \ref{lemma:impossibility}. For this reason, we modify the algorithm based on the observations of Lemma \ref{lemma:observations}.

\begin{lemma}\label{lemma:observations}
The following are true:
\begin{enumerate}
    \item There exists a recursive function \(\BLB : \mathbb{N} \rightharpoonup \mathbb{N}\) (acronym for Binary Lower Bound) such that, if \(c_\epsilon \in \mathbb{N}\) with \(\rat(c_\epsilon) = \epsilon > 0\), \(\epsilon \in \mathbb{Q}\) then \(\BLB(c_\epsilon) \downarrow\) and \(b = \BLB(c_\epsilon) \in \mathbb{N}\) satisfies \(2^{-b} < \epsilon\).
    \item For a DMC \(p_{Y \mid X}\) with \(C(p_{Y \mid X}) \neq 0\) and a positive rate \(R < C(p_{Y \mid X})\), there exists a \(\left( \left\lceil 2^{nR} \right\rceil , n\right)\) block code \(\mathcal{C}\) with \(\lambda_{\max}(\mathcal{C}, p_{Y \mid X}) < 2^{-b-2}\), for any \(b \in \mathbb{N}\).
    \item For \(c_\lambda \in \mathbb{N}\) with \(\com(c_\lambda) = \lambda < 2^{-b-2}\), \(\lambda \in \mathbb{R}_c\) we have \(\rat\left(\varphi^1\left(c_\lambda, b+2\right)\right) < 2^{-b-1}\).
    \item For \(c_\lambda \in \mathbb{N}\) with \(\com(c_\lambda) = \lambda \in \mathbb{R}_c\) and \(\rat\left(\varphi^1\left(c_\lambda, b+2\right)\right) < 2^{-b-1}\) we have \(\lambda < 2^{-b}\).
\end{enumerate}
\end{lemma}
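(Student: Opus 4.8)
The plan is to establish the four claims essentially independently, each by unpacking the relevant definitions and invoking the closure results from Section~\ref{sec:recursion_framework}.

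For claim~(1), I would define $\BLB$ by a minimization. Given a code $c_\epsilon$ with $\rat(c_\epsilon) = \epsilon = \frac{N_\mathbb{Q}(c_\epsilon)}{D_\mathbb{Q}(c_\epsilon)} > 0$, the inequality $2^{-b} < \epsilon$ is equivalent to $D_\mathbb{Q}(c_\epsilon) < N_\mathbb{Q}(c_\epsilon) \cdot 2^b$, which is a primitive recursive relation in $(b, c_\epsilon)$ since multiplication, exponentiation and $<$ on $\mathbb{N}$ are primitive recursive. Then $\BLB(c_\epsilon) = \mu b : \bigl(D_\mathbb{Q}(c_\epsilon) < N_\mathbb{Q}(c_\epsilon)\cdot 2^b\bigr)$ is recursive by Lemma on minimization of relations, and it halts whenever $\epsilon > 0$ because then $N_\mathbb{Q}(c_\epsilon) \neq 0$, so the powers of $2$ eventually dominate. (One should note that $\epsilon$ rational is used only to guarantee we have access to an honest code with $N_\mathbb{Q}(c_\epsilon)\neq 0$; the computable-real extension will replace this step.)

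Claim~(2) is the place where Shannon's channel coding theorem enters, and I expect it to be the main obstacle to write carefully, though conceptually it is standard. Since $R < C(p_{Y\mid X})$ and $R > 0$, the coding theorem guarantees that for every $\delta > 0$ there is some blocklength $n$ and an $(m,n)$ block code with $m \geq 2^{nR}$ and maximum block error probability below $\delta$; applying this with $\delta = 2^{-b-2}$ and observing that $\lceil 2^{nR}\rceil \geq 2^{nR}$, so that the code has rate $\frac{\log_2\lceil 2^{nR}\rceil}{n} \geq R$, yields a $\bigl(\lceil 2^{nR}\rceil, n\bigr)$ block code $\mathcal C$ with $\lambda_{\max}(\mathcal C, p_{Y\mid X}) < 2^{-b-2}$. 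The subtlety is matching the exact combinatorial shape $\bigl(\lceil 2^{nR}\rceil, n\bigr)$ demanded here with whatever normalization of the coding theorem one cites; if the cited version produces a code with strictly more than $\lceil 2^{nR}\rceil$ codewords one simply discards the excess codewords (which cannot increase $\lambda_{\max}$), and if it produces slightly fewer one pads with arbitrary repeated codewords only at the cost of possibly increasing $\lambda_{\max}$ — so the clean argument is to take $m$ large enough and then \emph{truncate} down to exactly $\lceil 2^{nR}\rceil$ codewords, which only decreases the maximum error probability.

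Claims~(3) and~(4) are immediate from the definition of a recursive rational approximation. If $\com(c_\lambda) = \lambda$, then $f(n) = \varphi^1(c_\lambda, n)$ is a recursive rational approximation of $\lambda$, so $\bigl|\rat(\varphi^1(c_\lambda, n)) - \lambda\bigr| < 2^{-n}$ for all $n$. For~(3), setting $n = b+2$ gives $\rat(\varphi^1(c_\lambda, b+2)) < \lambda + 2^{-b-2} < 2^{-b-2} + 2^{-b-2} = 2^{-b-1}$. For~(4), again with $n = b+2$ we get $\lambda < \rat(\varphi^1(c_\lambda, b+2)) + 2^{-b-2} < 2^{-b-1} + 2^{-b-2} < 2^{-b}$. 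These two claims together justify the modification of Algorithm~\ref{alg:algorithm_naive}: testing the \emph{rational} inequality $\rat(\varphi^1(c_\lambda, b+2)) <_{\mathbb{Q}} 2^{-b-1}$ is decidable by Lemma~\ref{lemma:rational_operations}, it accepts every code whose value is below $2^{-b-2}$ (which exists by~(2)), and it never accepts a code whose value is $\geq 2^{-b}$; choosing $b = \BLB(c_\epsilon)$ so that $2^{-b} < \epsilon$ then makes any accepted code satisfy $\lambda_{\max} < \epsilon$.
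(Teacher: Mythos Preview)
Your proposal is correct and follows essentially the same approach as the paper. The paper's proof of (1) defines $\BLB(c_\epsilon) = \mu i : (\langle 0, 1, 2^i \rangle <_\mathbb{Q} c_\epsilon)$, which is your minimization with the inequality expressed via $<_\mathbb{Q}$ rather than unpacked into $D_\mathbb{Q}(c_\epsilon) < N_\mathbb{Q}(c_\epsilon)\cdot 2^b$; for (2) the paper says only ``follows immediately from the channel coding theorem,'' so your discussion of truncating to exactly $\lceil 2^{nR}\rceil$ codewords is actually more careful than what appears there; and your arguments for (3) and (4) are line-for-line the paper's.
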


\begin{proof}
    \begin{enumerate}
        \item Since \(\epsilon > 0\), it holds that \(2^{-n} < \epsilon\) for all natural numbers \(n > -\log(\epsilon)\). Therefore, the minimization:
        \begin{equation}
            \BLB(c_\epsilon) = \mu i :( \langle 0, 1, 2^i \rangle <_\mathbb{Q} c_\epsilon)
        \end{equation}
        converges and it returns a number \(b\) with \(2^{-b} < \epsilon\).

        \item It follows immediately from the channel coding theorem.

        \item We have:
        \begin{align}
            \rat\left( \varphi^1(c_\lambda, b+2)\right) &< \lambda + \frac{1}{2^{b+2}} <\frac{1}{2^{b+1}} 
        \end{align}

        \item We have:
        \begin{align}
            \lambda &< \rat\left( \varphi^1(c_\lambda, b+2) \right) + \frac{1}{2^{b+2}} \\
            &< \frac{1}{2^{b+1}} + \frac{1}{2^{b+2}} <\frac{1}{2^b}
        \end{align}
    \end{enumerate}
\end{proof}

With Lemma \ref{lemma:observations} in mind we proceed with the following reasoning:
\begin{enumerate}
    \item Calculate \(b = \BLB(c_\epsilon)\).
    \item There exists a block code \(\mathcal{C}\) such that \(\lambda = \lambda_{\max}(\mathcal{C}, p_{Y \mid X}) < 2^{-b-2}\). For any such code, it holds that \(\rat\left(\varphi^1\left(c_\lambda, b+2\right)\right) < 2^{-b-1}\), where \(c_\lambda\) is any code of \(\lambda \in \mathbb{R}_c\).
    \item We search through all possible block codes for \(p_{Y \mid X}\) and return the first one that satisfies \(\rat\left(\varphi^1\left(c_\lambda, b+2\right)\right) < 2^{-b-1}\). By the previous point, we will eventually find such a code.
    \item The returned code \(\mathcal{C}\) also satisfies \(\lambda_{\max}(\mathcal{C}, p_{Y \mid X}) < 2^{-b} < \epsilon\).
\end{enumerate}

The modified pseudocode is presented in Algorithm \ref{alg:algorithm_working}.

\begin{algorithm}
    \caption{Working approach}\label{alg:algorithm_working}
    \hspace*{\algorithmicindent} \textbf{Input:} \(p_{Y \mid X}, R, \epsilon\) \\
    \hspace*{\algorithmicindent} \textbf{Output:} \(\mathcal{C}\) 
    \begin{algorithmic}[1]
    \State \(b \gets \BLB(\epsilon)\)
    \State \(n \gets 1\)
    \While{\textbf{True}}
    \For{\(\mathcal{C} \in \textsc{Codes}\left(p_{Y \mid X},\left\lceil 2^{nR} \right\rceil, n\right)\)}
            \State \(c_{\lambda} \gets \text{ a code of } \lambda_{\max}(\mathcal{C}, p_{Y \mid X})\)
            \If{\(\rat\left(\varphi^1\left(c_{\lambda}, b+2\right)\right) < 2^{-b-1}\)} 
                \State \Return \(\mathcal{C}\)
            \EndIf
    \EndFor
    \State \(n \gets n+1\)
    \EndWhile
    \end{algorithmic}
\end{algorithm}

Note that the condition \(\rat\left(\varphi^1\left(c_{\lambda}, k+2\right)\right) < 2^{-k-1}\) is sufficient for ensuring that the code satisfies \(\lambda_{\max}(\mathcal{C}, p_{Y \mid X}) < 2^{-b-1}\), but it is not necessary. This means that the code \(\mathcal{C}\) that is eventually returned, although guaranteed to meet the error constraint, may not be the first code in the enumeration that actually satisfies  \(\lambda_{\max}(\mathcal{C}, p_{Y \mid X}) < 2^{-b-1}\).

In the following analysis, we will define the recursive function \(\FindCode\) based on Algorithm \ref{alg:algorithm_working}. We will gradually construct intermediate recursive functions and relations that solve smaller parts of the problem. In the end, we will combine everything to achieve the full solution. 

Specifically, we will proceed through the following steps:
\begin{enumerate}
    \item Define an encoding of the class of all DMCs over the natural numbers.
    \item Define an encoding of the class of all block codes over the natural numbers.
    \item Construct a function \(\Codes : \mathbb{N}^4 \rightarrow \mathbb{N}\) that takes as input the sizes \(M\) and \(N\) of the input and output alphabets of a DMC \(p_{Y \mid X}\), along with parameters \(m\) and \(n\), and returns an encoding of the list of all  \((m, n)\) block codes for \(p_{Y \mid X}\).
    \item Construct a function \(\Lambda : \mathbb{N}^2 \rightarrow \mathbb{N}\) that takes as input the encodings of a code \(\mathcal{C}\) and a DMC \(p_{Y \mid X}\) and outputs a code of the computable number \(\lambda_{\max}(\mathcal{C}, p_{Y \mid X})\).
    \item Construct a function \(\AchievesError : \mathbb{N}^3 \rightarrow \mathbb{N}\) that takes as input the code \(c = \langle c_0, c_1, \dots c_{k-1} \rangle\) where \(c_i\) are all encodings of block codes, an encoding \(c_H\) of a DMC and a number \(b \in \mathbb{N}\). The function \(\AchievesError\) returns the index \(i\) of the first code \(c_i\) that satisfies \(\rat\left(\varphi^1\left(\Lambda(c_i, c_H), b+2\right)\right) < 2^{-b-1}\), if such a code exists. Otherwise it returns \(k\).
    \item Construct a function \(\MessageNumber : \mathbb{N}^2 \rightarrow \mathbb{N}\), which satisfies \(\MessageNumber(c_R, n) = \left\lceil 2^{nR} \right\rceil\) for \(R = \rat(c_R) > 0\).
    \item Combine the above to define the function \(\FindCode\).
\end{enumerate}

\subsection{Encoding of DMCs}\label{subsec:DMCs}

In this subsection we define an encoding function \(\Code_\mathcal{H}\) from the set of all DMCs with computable probabilities to the set \(\mathbb{N}\) of natural numbers. This is a crucial step, as we aim to represent Algorithm \ref{alg:algorithm_working} by a recursive function, which means that we have to encode all the inputs of the algorithm into natural numbers.

 {We will identify a DMC with input alphabet \(X\) and output alphabet \(Y\) by its corresponding conditional distribution \(p_{Y \mid X}\).} We will further use the notation \([k] = \{1,2, \dots k\}\) for \(k \in \mathbb{N} \backslash \{0\}\) to denote the set of the first \(k\) positive integers.

A DMC \(p_{Y \mid X}\) with \(X = \{x_1, x_2, \dots x_M\}\) and \(Y = \{y_1, y_2, \dots y_N\}\) can be represented by a matrix \(H \in \mathbb{R}^{M \times N}\), with \(H_{ij} = p_{Y \mid X}(y_j \mid x_i)\). We denote by \(\mathcal{H}\) the set of all matrices that represent DMCs with computable probabilities, as:
\begin{equation}
    \mathcal{H} = \bigcup_{\substack{M,N \in \mathbb{N} \\ M,N \neq 0}} \left\{ H \in \mathbb{I}_c^{M \times N} \middle| \sum_{j=1}^N H_{ij} = 1, \text{ for all } 1 \leq i \leq M  \right\}
\end{equation}
where \(\mathbb{I}_c = [0, 1] \cap \mathbb{R}_c\) denotes the set of computable real numbers in the interval \([0, 1]\).

We define an encoding \(\Code_{\mathcal{H}} : \mathcal{H} \rightarrow \mathbb{N}\). For \(H \in \mathcal{H}\) with dimensions \(M \times N\), choose some \(x_{ij} \in \mathbb{N}\) with \(\com(x_{ij}) = H_{ij}\) for all elements \(H_{ij}\) of \(H\). Then define:
\begin{enumerate}
    \item \(r(H, i) = \langle x_{i1}, x_{i2}, \dots x_{iN} \rangle\) for all \(i \in [M]\)
    \item \(\Code_{\mathcal{H}}(H) = \langle r(H, 1), r(H, 2), \dots r(H, M) \rangle\)
\end{enumerate}

We also define the recursive functions \(\RowNumber\), \(\ColumnNumber : \mathbb{N} \rightarrow \mathbb{N}\) and \(\Element : \mathbb{N}^3 \rightarrow \mathbb{N}\) by:
\begin{gather}
    \RowNumber(c_H) = \lh(c_H) \\
    \ColumnNumber(c_H) = \lh((c_H)_0) \\
    \Element(c_H, i, j) = (c_H)_{i\dotdiv1,j\dotdiv1}
\end{gather}

If \(c_H = \Code_{\mathcal{H}}(H)\) for some \(H \in \mathcal{H}\) with dimensions \(M \times N\), then \(\RowNumber(c_H) = M\), \(\ColumnNumber(c_H) = N\) and \(\Element(c_H, i, j)\) is a code of the computable real number \(H_{ij}\), provided that \(i \in [M]\) and \(j \in [N]\).

\subsection{Encoding of Block Codes}\label{subsec:codes}

Building on the approach of Subsection \ref{subsec:DMCs}, we now define an encoding \(\Code_\mathbf{C}\) that maps the set of all block codes to the set \(\mathbb{N}\) of natural numbers, thereby enabling the use of block codes as inputs to recursive functions. Since a rational error tolerance \(\epsilon\) can also be encoded as a natural number, as discussed in Subsection \ref{subsec:rationals}, this construction allows all inputs to Algorithm \ref{alg:algorithm_working} to be represented using natural numbers.

 {For a \((m,n)\) block code \(\mathcal{C} = (E, D)\), where \(E : \mathcal{M} \rightarrow X^n\) and \(D: Y^n \rightarrow \mathcal{M}\) are the encoding and decoding functions, respectively, and \(\mathcal{M}\) is the set of messages with \(|\mathcal{M}| = m\)}, if \(|X| = M\) and \(|Y| = N\), then we can assume, without loss of generality, that \(\mathcal{M} = [m]\), \(X = [M]\) and \(Y = [N]\).

We denote by \(\mathbf{C}_{M,N,m,n}\) the set of all \((m, n)\) block codes with input and output alphabets of sizes \(M\) and \(N\) respectively, as:
\begin{equation}
    \mathbf{C}_{M,N,m,n} = \left\{ (E, D) \middle| E : [m] \rightarrow [M]^n, D: [N]^n \rightarrow [m] \right\}
\end{equation}

We also denote by \(\mathbf{C}\) the set of all block codes, as:
\begin{equation}
    \mathbf{C} = \bigcup_{\substack{ M,N,m,n \in \mathbb{N} \\ M,N,m,n \neq 0 }} \mathbf{C}_{M,N,m,n}
\end{equation}

We define an encoding \(\Code_{\mathbf{C}} : \mathbf{C} \rightarrow \mathbb{N}\). The definition is done in three steps:
\begin{enumerate}
    \item We define \(\Code_E\), which encodes functions of the form \(E : [m] \rightarrow [M]^n\). If \(E(i) = (x_{i1}, x_{i2}, \dots x_{in})\), define:
    \begin{gather}
        e_i = \langle x_{i1}, x_{i2}, \dots x_{in} \rangle, \text{ for } i \in [m] \\
        \Code_E(E) = \langle e_1, e_2, \dots e_m \rangle
    \end{gather}

    \item Similarly, we define \(\Code_D\), which encodes functions of the form \(D : [N]^n \rightarrow [m]\), as:
    \begin{equation}
        \Code_D(D) = \langle D(\bar{y}_1), D(\bar{y}_2), \dots D(\bar{y}_{N^n}) \rangle
    \end{equation}
    where \(\bar{y}_1, \bar{y}_2, \dots \bar{y}_{N^n}\) is the lexicographic enumeration of all elements in \([N]^n\).

    \item We define \(\Code_\mathbf{C}\) for all \(\mathcal{C} = (E,D) \in \mathbf{C}\) as:
    \begin{equation}
        \Code_\mathbf{C}(\mathcal{C}) = \langle \Code_E(E), \Code_D(D) \rangle
    \end{equation}
\end{enumerate}

We will also need a recursive way to check whether a number \(n \in \mathbb{N}\) is an encoding of some block code \(\mathcal{C} \in \mathbf{C}\). For this reason, we define the primitive recursive relations:

\begin{enumerate}
    \item \(\isCode_E \subseteq \mathbb{N}^5\), with \(\isCode_E(c, M,N,m,n)\) true iff \(c = \Code_E(E)\) for some function \(E : [m] \rightarrow [M]^n\). We have:
    \begin{equation}
        \begin{aligned}
            &\isCode_E(c, M,N,m,n) \Leftrightarrow\\
            &\quad\quad M>0 \land  m >0 \land n>0 \land \seq(c) \land \lh(c)=m  \\
        &\quad\quad\land\forall i \leq m \dotdiv 1: (\seq((c)_i) \land \lh((c)_i) = n \\
        &\quad\quad\land\forall j \leq n \dotdiv 1 : ((c)_{i,j} \geq 1 \land (c)_{i,j} \leq M)     )
        \end{aligned}
    \end{equation}
    
    \item \(\isCode_D \subseteq \mathbb{N}^5\), with \(\isCode_D(c, M,N,m,n)\) true iff \(c = \Code_D(D)\) for some function \(D : [N]^n \rightarrow [m]\). We have:
    \begin{equation}
        \begin{aligned}
            &\isCode_D(c, M,N,m,n) \Leftrightarrow\\
            &\quad\quad N>0 \land  m >0 \land n>0 \land \seq(c) \land \lh(c)=N^n  \\
        &\quad\quad\land\forall i \leq N^n\dotdiv 1: ((c)_{i} \geq 1 \land (c)_{i} \leq m)
        \end{aligned}
    \end{equation}
    
    \item \(\isCode_\mathbf{C} \subseteq \mathbb{N}^5\), with \(\isCode_\mathbf{C}(c, M,N,m,n)\) true iff \(c = \Code_\mathbf{C}(\mathcal{C})\) for some block code \(\mathcal{C} \in \mathbf{C}_{M,N,m,n}\). We have:
    \begin{equation}
        \begin{aligned}
           \isCode_\mathbf{C}&(c, M,N,m,n) \Leftrightarrow \\
           &\seq(c) \land \lh(c)=2 \\
        &\land\isCode_E((c)_0,M,N,m,n)\\ &\land \isCode_D((c)_1,M,N,m,n)
        \end{aligned}
    \end{equation}
\end{enumerate}

Note that, since \(\langle \rangle\) is injective, the encoding \(\isCode_\mathbf{C}\) is a bijection from \(\mathbf{C}\) to \(\isCode_\mathbf{C} \subseteq \mathbb{N}\).

\subsection{Encoding of the Set of all \((m,n)\) Block Codes}

In this subsection we construct a recursive function \(\Codes : \mathbb{N}^4 \rightharpoonup \mathbb{N}\), which takes as input four natural numbers \(M,N,m,n\) and returns an encoding of the list of all \((m,n)\) block codes for a DMC \(p_{Y \mid X}\) with \(|X| = M\) and \(|Y| = n\). This function allows us to represent the set \(\textsc{Codes}\left(p_{Y \mid X}, \left\lceil 2^{nR} \right\rceil, n\right)\) in line 4 of Algorithm \ref{alg:algorithm_working} within the recursion framework.

Recall that for finite sets \(A\) and \(B\), the number of functions \(f : A \rightarrow B\) is \(|B|^{|A|}\). Since a block code \(\mathcal{C} \in \mathbf{C}_{M,N,m,n}\) can be constructed by pairing any two functions \(E : [m] \rightarrow [M]^n\) and \(D : [N]^n \rightarrow [m]\), we see that:
\begin{equation}
    |\mathbf{C}_{M,N,m,n}| = \left(M^n\right)^m \cdot m^{N^n} = M^{m  n} m^{N^n} 
\end{equation}

With this in mind, our goal is to define a primitive recursive function \(\ParCodes : \mathbb{N}^5 \rightarrow \mathbb{N}\), where \(\ParCodes(y, M, N, m, n)\) encodes the list of the first \(y\) block codes in \(\mathbf{C}_{M,N,m,n}\). This will be done inductively, using primitive recursion. Given \(\ParCodes(y, M, N, m, n)\), we construct \(\ParCodes(y+1, M, N, m, n)\) by appending the smallest number \(n\) that encodes a valid block code in \(\mathbf{C}_{M,N,m,n}\) and has not yet appeared in the list.

We define the primitive recursive relation \(\notInside \subseteq \mathbb{N}^2\):
\begin{equation}
\begin{aligned}
    &\notInside(e, c) \Leftrightarrow \\ &\quad\quad\seq(c) \land (\lh(c) > 0 \rightarrow \forall i \leq \lh(c) \dotdiv 1 : ((c)_i \neq e))
\end{aligned}
\end{equation}

\(\notInside(e, c)\) is true iff \(c\) is a code of a sequence that does not contain \(e\). \(\ParCodes\) can be defined by:
\begin{gather}
    \ParCodes(0,M,N,m,n) = \langle \varepsilon \rangle \\
    \begin{aligned}
        \ParCodes&(y+1,M,N,m,n) = \\
        &\app(\ParCodes(y,M,N,m,n), u)
    \end{aligned}
\end{gather}
where:
\begin{equation}
\begin{aligned}
    u = \mu i:(&\isCode_{\mathbf{C}}(i, M,N,m,n) \\
    &\land \notInside(i, \ParCodes(y,M,N,m,n))))    
\end{aligned}
\end{equation}

Since \(\Code_{\mathbf{C}}\) is a bijection from \(\mathbf{C}\) to \(\isCode_{\mathbf{C}}\), there are exactly \(M^{m  n} m^{N^n}\) numbers \(i \in \mathbb{N}\) that satisfy \(\isCode(i,M,N,m,n)\). From this we can conclude that \(\ParCodes(y,M,N,m,n) \downarrow\) when \(M,N,m,n \geq 1\) and \(y \leq M^{m  n} m^{N^n}\). Therefore, \(\Codes\) can be defined by:
\begin{equation}
    \Codes(M,N,m,n) = \ParCodes\left( M^{m n} m^{N^n} ,M,N,m,n\right)
\end{equation}

\subsection{Calculating the Maximum Block Error Probability}

In this subsection we define a recursive function \(\Lambda : \mathbb{N}^2 \rightarrow \mathbb{N}\), which calculates the maximum block error probability of a block code \(\mathcal{C}\) for some DMC \(p_{Y \mid X}\). The inputs to \(\Lambda\) are two natural numbers that are the encodings of \(\mathcal{C}\) and \(p_{Y \mid X}\). The output is a code of the computable real number \(\lambda_{\max}(\mathcal{C}, p_{Y \mid X})\). More precisely, we want to construct \(\Lambda\) so that it satisfies, for all \(\mathcal{C} \in \mathbf{C}\) and \(H \in \mathcal{H}\):
\begin{equation}
    \com(\Lambda(\Code_\mathbf{C}(\mathcal{C}), \Code_\mathcal{H}(H))) = \lambda_{\max}(\mathcal{C}, p_{Y \mid X})
\end{equation}
where \(p_{Y \mid X}\) is the DMC represented by the matrix \(H\).

To do so, we will fist calculate the conditional distribution \(p_{Y^n \mid X^n}\), which extends the DMC \(p_{Y \mid X}\) to finite words of length \(n\). Note that if \(p_{Y \mid X}\) is represented by the matrix \(H\), then \(p_{Y^n \mid X^n}\) is represented by the \(n\)-th Kronecker power of \(H\), denoted by \(H^{\otimes n}\). This motivates the definition of \(\Kron : \mathbb{N}^2 \rightarrow \mathbb{N}\) which satisfies:
\begin{equation}
    \Kron(\Code_\mathcal{H}(H), n) = \Code_\mathcal{H}\left(H^{\otimes n}\right)
\end{equation}

Suppose that \(c_1 = \Code_\mathcal{H}(H_1)\) and \(c_2 = \Code_\mathcal{H}(H_2)\) for some matrices \(H_1, H_2 \in \mathcal{H}\), with dimensions \(M_1 \times N_1\) and \(M_2 \times N_2\) respectively. Denote by \(h_{1,i_1,j_1}\) and \(h_{2,i_2,j_2}\) the corresponding elements of the matrices \(H_1, H_2\). We define:
\begin{enumerate}
    \item \(h(c_1,c_2,i_1,i_2,j_1,j_2) = \Element(c_1,i_1,j_1) \cdot_{\mathbb{R}_c} \Element(c_2,i_2,j_2)\), which computes a code of the computable number \(h_{1,i_1,j_1} \cdot h_{2,i_2,j_2}\). This corresponds to the Kronecker product \(\begin{bmatrix} h_{1,i_1,j_1} \end{bmatrix} \otimes \begin{bmatrix} h_{2,i_2,j_2} \end{bmatrix}\). 


    \item \(\Row_1 (c_1,c_2,i_1,i_2,j_1)\), which computes an encoding of the Kronecker product \(\begin{bmatrix} h_{1,i_1,j_1} \end{bmatrix} \otimes \begin{bmatrix} h_{2,i_2,1} & \dots & h_{2,i_2,N_2} \end{bmatrix}\). \(\Row_1\) is given by:
    \begin{equation}
    \begin{aligned}
        \Row_1&(c_1, c_2, i_1, i_2, j_1) = \\ &R_1(\ColumnNumber(c_2), c_1, c_2, i_1, i_2, j_1)
    \end{aligned}
    \end{equation}
    where \(R_1 : \mathbb{N}^6 \rightarrow \mathbb{N}\) is defined by primitive recursion as:
    \begin{gather}
        R_1(0, c_1,c_2,i_1,i_2,j_1) = \langle \varepsilon \rangle \\
        \begin{aligned}
        R_1(y+1,c_1,c_2,i_1,&i_2,j_1) = \\
        \app(&R_1(y,c_1,c_2,i_1,i_2,j_1), \\
        &h(c_1,c_2,i_1,i_2,j_1,y+1))            
        \end{aligned}
    \end{gather}

    \item \(\Row_2(c_1,c_2,i_1,i_2)\), which encodes the Kronecker product \(\begin{bmatrix} h_{1,i_1,1} & \dots & h_{1,i_1,N_1} \end{bmatrix} \otimes \begin{bmatrix} h_{2,i_2,1} & \dots & h_{2,i_2,N_2} \end{bmatrix}\). It is defined by:
    \begin{equation}
    \begin{aligned}
        \Row_2&(c_1,c_2,i_1,i_2) = \\
        &R_2(\ColumnNumber(c_1),c_1,c_2,i_1,i_2)        
    \end{aligned}
    \end{equation}
    where \(R_2 : \mathbb{N}^5 \rightarrow \mathbb{N}\) is defined by primitive recursion as:
    \begin{gather}
        R_2(0, c_1,c_2,i_1,i_2) = \langle \varepsilon \rangle \\
        \begin{aligned}
        R_2(y&+1,c_1,c_2,i_1,i_2) = \\
        &R_2(y,c_1,c_2,i_1,i_2) * \Row_1(c_1,c_2,i_1,i_2,y+1)            
        \end{aligned}
    \end{gather}

    \item \(\Col_1 (c_1,c_2,i_1)\), which computes an encoding of the Kronecker product \(\begin{bmatrix} h_{1,i_1,1} & \dots & h_{1,i_1,N_1} \end{bmatrix} \otimes H_2\). \(\Col_1\) is given by:
    \begin{equation}
    \begin{aligned}
        \Col_1&(c_1, c_2, i_1) = \\ &C_1(\RowNumber(c_2), c_1, c_2, i_1)
    \end{aligned}
    \end{equation}
    where \(C_1 : \mathbb{N}^4 \rightarrow \mathbb{N}\) is defined by primitive recursion as:
    \begin{gather}
        C_1(0, c_1,c_2,i_1) = \langle \varepsilon \rangle \\
        \begin{aligned}
        C_1(y+1&,c_1,c_2,i_1) = \\
        &\app(C_1(y,c_1,c_2,i_1), \Row_2(c_1,c_2,i_1,y+1))            
        \end{aligned}
    \end{gather}
    \item \(\Col_2 (c_1,c_2)\), which computes an encoding of the Kronecker product \(H_1 \otimes H_2\). \(\Col_2\) is given by:
    \begin{equation}
    \begin{aligned}
        \Col_2&(c_1, c_2) = C_2(\RowNumber(c_1), c_1, c_2)
    \end{aligned}
    \end{equation}
    where \(C_2 : \mathbb{N}^3 \rightarrow \mathbb{N}\) is defined by primitive recursion as:
    \begin{gather}
        C_2(0, c_1,c_2) = \langle \varepsilon \rangle \\
        \begin{aligned}
        C_2(y+1&,c_1,c_2) = \\
        &C_2(y,c_1,c_2) * \Col_1(c_1,c_2,y+1)            
        \end{aligned}
    \end{gather}
\end{enumerate}
We can now define the primitive recursive function \(\Kron\) as \(\Kron(c, n) = K(n \dotdiv 1, c)\), where \(K : \mathbb{N}^2 \rightarrow \mathbb{N}\) is defined by:
\begin{equation}
    \begin{cases}
    K(0, c) = c \\
    K(y + 1, c) = \Col_2(K(y, c), c)        
    \end{cases}
\end{equation}

Suppose that \(c_\mathcal{C} = \Code_\mathcal{C}(\mathcal{C})\) and \(c_H = \Code_\mathcal{H}(H)\) for some block code \(\mathcal{C} = (E, D) \in \mathbf{C}\) with message set \([m]\) and for some matrix \(H \in \mathcal{H}\) representing a DMC \(p_{Y \mid X}\). We define:
\begin{enumerate}
    \item the primitive recursive function \(\CodeLength : \mathbb{N} \rightarrow \mathbb{N}\), which computes the codeword length of \(\mathcal{C}\), by:
    \begin{equation}
        \CodeLength(c_\mathcal{C}) = \lh((c_\mathcal{C})_{0,0})
    \end{equation}
    \item the primitive recursive function:
    \begin{equation}
    \begin{aligned}
        P(c_\mathcal{C}, c_H,& i, j) = \\
        &\Element(\Kron(c_H, \CodeLength(c_\mathcal{C})), i,j)
    \end{aligned}
    \end{equation}
    \(P\) calculates the probability of transition from the \(i\)-th input word to the \(j\)-th output word, when we use the block code \(\mathcal{C}\) on the DMC \(p_{Y \mid X}\).

    \item the primitive recursive function \(LO : \mathbb{N}^2 \rightarrow \mathbb{N}\), which calculates the order of a word \(\bar{x} \in [M]^n\) in the lexicographic enumeration of the elements of \([M]^n\). In particular, it satisfies:
    \begin{equation}
        LO(\langle x_0, x_1, \dots x_{n-1} \rangle, M) = 1 + \sum_{i=1}^{n}(x_{n-i} - 1) \cdot M^{i-1}
    \end{equation}
    

    

    Note that, since all codewords of a given block code \(\mathcal{C}\) have the same length \(n\), it suffices to define \(LO\) so that it computes the lexicographical order of \(\bar{x}\) over the set \([M]^n\), rather than over the commonly implied ordered set \([M]^{\leq n} = \bigcup_{0\leq i \leq n} [M]^i\). Moreover, the length \(n\) can be directly derived from the code \(\langle \bar{x} \rangle\), which means that the only required arguments for this function are \(\langle \bar{x} \rangle\) and \(M\). The function is defined as follows:
    \begin{equation}
        LO(c, M) = L(\lh(c),c,M)
    \end{equation}
    where the function \(L : \mathbb{N}^3 \rightarrow \mathbb{N}\) is defined with primitive recursion by:
    \begin{gather}
        L(0, c, M) = 1 \\
        \begin{aligned}
        L(y&+1, c, M) = \\
        & L(y, c, M) + ((c)_{\lh(c) \dotdiv (y+1)}\dotdiv 1) \cdot M^{y}
        \end{aligned}
    \end{gather}

    \item the primitive recursive relation \(W \subseteq \mathbb{N}^3\), by:
    \begin{equation}
        W(c_\mathcal{C}, i, d) \Leftrightarrow (c_\mathcal{C})_{1,i \dotdiv 1} \neq d
    \end{equation}
    \(W(c_\mathcal{C}, i, d)\) is true iff for the \(i\)-th word \(\bar{y}\) in the lexicographic enumeration of \([N]^n\) we have \(D(\bar{y}) \neq d\), where \(N\) is the size of the output alphabet \(Y\).

    \item the primitive recursive function \(\Lambda_1 : \mathbb{N}^4 \rightarrow \mathbb{N}\) by:
    \begin{gather}
        \Lambda_1(0, c_\mathcal{C}, c_H, d) = c_{0,\mathbb{R}_c} \\
        \begin{aligned}
            \Lambda_1&(i+1, c_\mathcal{C}, c_H, d) = \Lambda_1(i, c_\mathcal{C}, c_H, d) +_{\mathbb{R}_c}\\
            &P(c_\mathcal{C}, c_H, LO((c_\mathcal{C})_{0,d \dotdiv 1},\RowNumber(c_H)), i+1) \\
            & \quad\quad\quad\quad\quad\quad\quad\quad\quad\quad\quad\quad\quad\quad\cdot W(c_\mathcal{C}, i+1, d)
        \end{aligned}
    \end{gather}
    where \(c_{0,\mathbb{R}_c}\) is a code of the computable number 0. When \(d \in [m]\), \(\Lambda_1(i, c_\mathcal{C},c_H,d)\) calculates the first \(i\) terms of the sum:
    \begin{equation}\label{eq:sum}
        \lambda(\mathcal{C}, p_{Y \mid X}, d) = \sum_{\bar{y} \in [N]^n : D(\bar{y}) \neq d} p_{Y^n \mid X^n}(\bar{y} \mid E(d))
    \end{equation}
    which represents the block error probability for a specific message \(d \in \mathcal{M}\).

    \item the primitive recursive function:
    \begin{equation}
    \begin{aligned}
        \Lambda_2&(c_\mathcal{C},c_H,d) = \\
        &\Lambda_1 \left( \ColumnNumber(c_H)^{\CodeLength(c_\mathcal{C})} , c_\mathcal{C},c_H,d \right)
    \end{aligned}
    \end{equation}
    which calculates the whole sum of equation \eqref{eq:sum}.

    \item the primitive recursive function \(\Lambda_3 : \mathbb{N}^3 \rightarrow \mathbb{N}\) by:
    \begin{gather}
        \Lambda_3(0, c_\mathcal{C}, c_H) = c_{0, \mathbb{R}_c} \\
        \begin{aligned}
        \Lambda_3(i+1, &c_{\mathcal{C}}, c_H) =\\
        &\max_{\mathbb{R}_c}(\Lambda_3(i, c_\mathcal{C}, c_H), \Lambda_2(c_\mathcal{C},c_H,i+1))
        \end{aligned}
    \end{gather}
    When \(i \in [m]\) \(\Lambda_3(i, c_\mathcal{C}, c_H)\) calculates the maximum:
    \begin{equation}
        \max \{ \lambda(\mathcal{C}, p_{Y \mid X}, d) \mid 0 < d \leq i \}
    \end{equation}

\end{enumerate}

The function \(\Lambda\) can finally be defined as:
\begin{equation}
    \Lambda(c_\mathcal{C}, c_H) = \Lambda_3( \lh((c_\mathcal{C})_0), c_\mathcal{C}, c_H)
\end{equation}

\subsection{Defining the Function \(\AchievesError\)}

In this subsection, we define a recursive function \(\AchievesError : \mathbb{N}^3 \rightarrow \mathbb{N}\) that determines whether there exists a block code in a given list \(( \mathcal{C}_0, \mathcal{C}_1, \dots, \mathcal{C}_{k-1} )\) which achieves a desired error tolerance \(\epsilon = 2^{-b}\) for a given DMC \(p_{Y \mid X}\). This function, together with \(\Codes(M, N, m, n)\), allows us to verify whether a specific codeword length \(n\) suffices to achieve the desired error \(2^{-b}\), or whether \(n\) must be increased according to Algorithm \ref{alg:algorithm_working}.

The inputs to \(\AchievesError\) are the encoding \(c =\langle c_0, c_1, \dots c_{k-1} \rangle\), where \(c_i = \Code_\mathbf{C}(\mathcal{C}_i)\), an encoding \(c_H\) of the DMC and the number \(b\). Let \(c_{\lambda,i} = \Lambda(c_i, c_H)\) be the code of the computable real \(\lambda_{\max}(\mathcal{C}_i, p_{Y \mid X})\), as computed by the function \(\Lambda\). If there exists a block code \(\mathcal{C}_i\) in the list that achieves the condition \(\rat\left(\varphi^1\left(c_{\lambda,i}, b+2\right)\right) < 2^{-b-1}\) of Algorithm \ref{alg:algorithm_working}, then \(\AchievesError\) returns the index \(i\) of the first such code. If no such code exists, then \(\AchievesError\) returns \(k\). More precisely, when \(c = \langle c_0, c_1, \dots c_{k-1} \rangle\), with \(\isCode_{\mathbf{C}}(c_i), \ \forall i\) we have:
\begin{equation}
\begin{aligned}
    &\AchievesError(c,c_H,b) = \\
    &\quad\begin{cases}
        \min \left\{i < k \middle| \rat\left(\varphi^1\left(c_{\lambda,i}, b+2\right)\right) < 2^{-b-1}\right\}, &\text{if such } i \\&\text{exists} \\
        k, &\text{otherwise}
    \end{cases}
\end{aligned}
\end{equation}
This function can be defined by:
\begin{equation}
\begin{aligned}
    &\AchievesError(c, c_H, b) = \\
    &\quad\mu i \leq \lh(c)\dotdiv 1 \left(\varphi^1\left(\Lambda((c)_i,c_H), b+2\right) <_\mathbb{Q} \left\langle 0, 1, 2^{b+1} \right\rangle\right)
\end{aligned}
\end{equation}

\subsection{Defining the Function \(\MessageNumber\)}

We proceed to defining \(\MessageNumber\), with
\begin{equation}
    \MessageNumber(c_R, n) = \left\lceil 2^{n\rat(c_R)} \right\rceil
\end{equation}
whenever \(\isRat(c_R)\) and \(\rat(c_R) > 0\). Let \(c_R \in \mathbb{N}\) be such a number. This function computes the cardinality of the message set \(\mathcal{M}\) that a block code \(\mathcal{C}\) with codeword length \(n\) must have in order to achieve a coding rate \(R\). We have:
\begin{align}
    \left\lceil 2^{n\rat(c_R)} \right\rceil &= \min\left\{i \in \mathbb{N} \middle| 2^{n  \frac{N_\mathbb{Q}(c_R)}{D_\mathbb{Q}(c_R)}} \leq i\right\} \\
    &= \min\left\{i \in \mathbb{N} \middle| 2^{n  N_\mathbb{Q}(c_R)} \leq i^{D_\mathbb{Q}(c_R)}\right\} \label{eq:message_number}
\end{align}

Since \(D_\mathbb{Q}(c_R) \geq 1\), the inequality \(2^{n  N_\mathbb{Q}(c_R)} \leq i^{D_\mathbb{Q}(c_R)}\) is true for \(i = 2^{nN_\mathbb{Q}(c_R)}\). Therefore, equation \eqref{eq:message_number} can be expressed with the bounded minimization:
\begin{equation}
\begin{aligned}
&\MessageNumber(n,c_R) = \\
&\quad\quad\quad\quad\mu i\leq 2^{n N_\mathbb{Q}(c_R)}: \left(2^{n N_\mathbb{Q}(c_R)} \leq i^{D_\mathbb{Q}(c_R)} \right)
\end{aligned}
\end{equation}

\subsection{The recursive function \(\FindCode\)}

We will now combine the previously defined functions to construct the recursive function \(\FindCode\). When this function takes as input an encoding of a DMC \(p_{Y \mid X}\) with \(C(p_{Y \mid X}) \neq 0\), an encoding of a positive rate \(R < C(p_{Y \mid X})\) and an encoding of a rational error tolerance \(\epsilon > 0\), it returns an encoding of a \(\left(\left\lceil 2^{nR} \right\rceil, n\right)\) block code \(\mathcal{C}\) with \(\lambda_{\max}(\mathcal{C}, p_{Y \mid X}) < \epsilon\).

We first define \(C : \mathbb{N}^3 \rightarrow \mathbb{N}\) by:
\begin{equation}
\begin{aligned}
    C(c_H, c_R, n) = \\
    \Codes(&\RowNumber(c_H), \ColumnNumber(c_H),\\
     &\MessageNumber(n, c_R), n)
\end{aligned}
\end{equation}
\(C(c_H, c_R, n)\) is the encoding of the sequence of all \(\left( \left\lceil 2^{n\rat(c_R)}\right\rceil ,n \right) \) block codes for the DMC encoded by \(c_H\).

We find the minimum required codeword length for achieving the target error probability \(\epsilon\) by using a recursive function \(\MinLength : \mathbb{N}^3 \rightharpoonup \mathbb{N}\) defined by:
\begin{equation}
\begin{aligned}
    \MinLength(c_H,& c_R, c_\epsilon) = \\
    \mu i : (&\AchievesError(C(c_H, c_R, i), c_H, \BLB(c_\epsilon)) \\
    &< \lh(C(c_H, c_R, i)))
\end{aligned}
\end{equation}

The function \(\FindCode\) can finally be defined by:
\begin{equation}
\begin{aligned}
    \FindCode(c_H, c_R, c_\epsilon) = \\
    \AchievesError(&C(c_H, c_R, \MinLength(c_H, c_R, c_\epsilon)),\\
    &c_H, \BLB(c_\epsilon))
\end{aligned}
\end{equation}

The function \(\FindCode\) provides a general solution to the problem of finding capacity-achieving codes. Given as input encodings of a DMC \(p_{Y \mid X}\), a target rate \(R < C(p_{Y \mid X})\) and a rational error threshold \(\epsilon > 0\), the function returns an encoding of a block code \(\mathcal{C}\) for \(p_{Y \mid X}\) that achieves a rate of at least \(R\) and satisfies \(\lambda_{\max}\left(\mathcal{C}, p_{Y \mid X}\right)\). Moreover, \(\FindCode\) is recursive, as it is constructed by combining recursive functions using operations that preserve recursiveness.

\subsection{Generalization for \(\epsilon \in \mathbb{R}_c\)}\label{sec:extension}

The function can be naturally extended to handle error thresholds \(\epsilon \in \mathbb{R}_c\). This is accomplished by defining a recursive function \(\RLB : \mathbb{N} \rightharpoonup \mathbb{N}\) (acronym for Rational Lower Bound), which, given a code \(c_\epsilon\) of a computable real number \(\epsilon > 0\), returns a code of a positive rational lower bound of \(\epsilon\). That is, \(\isRat(\RLB(c_\epsilon))\) holds, and \(\rat(\RLB(c_\epsilon)) = q \in \mathbb{Q}\) for some \(0 < q < \epsilon\).

\begin{lemma}
    There exists a \(\mu\)-recursive function \(\RLB : \mathbb{N} \
    \rightharpoonup \mathbb{N}\) such that, if \(\isCom(c_\epsilon)\) and \(\com(c_\epsilon) = \epsilon > 0\), then:
    \begin{gather}
        \RLB(c_\epsilon) \downarrow \\
        \isRat(\RLB(c_\epsilon)) \\
        0 < \rat(\RLB(c_\epsilon)) < \epsilon
    \end{gather}
\end{lemma}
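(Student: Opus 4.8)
The plan is to follow the template of $\BLB$ from Lemma~\ref{lemma:observations}, but instead of merely producing a power-of-two lower bound I will read a positive rational lower bound directly off the recursive rational approximation of $\epsilon$. Write $f_\epsilon(n) = \varphi^1(c_\epsilon, n)$ for the recursive rational approximation attached to the code $c_\epsilon$, so that $\isRat(f_\epsilon(n))$ and $\lvert \epsilon - \rat(f_\epsilon(n)) \rvert < 2^{-n}$ for every $n$. The key elementary observation is that, since $\epsilon > 0$, there is an $n$ with $2^{-n} \le \epsilon/2$, and for any such $n$ we get $\rat(f_\epsilon(n)) > \epsilon - 2^{-n} \ge 2^{-n}$; consequently the rational $q_n := \rat(f_\epsilon(n)) - 2^{-n}$ is strictly positive, and it is also strictly below $\epsilon$ because $\rat(f_\epsilon(n)) < \epsilon + 2^{-n}$. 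Hence it suffices to locate \emph{any} index $n$ with $2^{-n} < \rat(f_\epsilon(n))$ and then output a code of $q_n$; these two inequalities hold for whichever such $n$ is found, so we may as well take the least one.

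Concretely, I would first introduce the $\mu$-recursive partial function $g(i, c_\epsilon) = 1 \dotdiv \chi_{<_\mathbb{Q}}\!\left(\bigl\langle 0, 1, 2^i\bigr\rangle,\ \varphi^1(c_\epsilon, i)\right)$, assembled from $\dotdiv$, the (primitive recursive) characteristic function of $<_\mathbb{Q}$, the primitive recursive map $i \mapsto \langle 0,1,2^i\rangle$, and the universal function $\varphi^1$, so that $g(i, c_\epsilon) = 0$ precisely when $2^{-i} < \rat(\varphi^1(c_\epsilon, i))$ (whenever the right-hand side converges). Then set $n(c_\epsilon) = \mu i : (g(i, c_\epsilon) = 0)$ and finally
\[
    \RLB(c_\epsilon) \;=\; \varphi^1\bigl(c_\epsilon,\, n(c_\epsilon)\bigr) \;-_\mathbb{Q}\; \bigl\langle 0,\, 1,\, 2^{\,n(c_\epsilon)} \bigr\rangle .
\]
That $\RLB$ is $\mu$-recursive follows because it is built from $\varphi^1$, $-_\mathbb{Q}$, successor/constant/encoding functions and a single minimization, all of which preserve $\mu$-recursiveness by Lemma~\ref{lemma:effectiveness} and the closure lemmata of Section~\ref{sec:recursion_framework}.

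It then remains to check the three conclusions under the hypotheses $\isCom(c_\epsilon)$ and $\com(c_\epsilon) = \epsilon > 0$. Convergence of $\RLB(c_\epsilon)$ reduces to convergence of $n(c_\epsilon)$: since $c_\epsilon$ codes a computable real, $\varphi^1(c_\epsilon, i)\downarrow$ for all $i$, hence $g(\cdot, c_\epsilon)$ is total, and the preliminary observation guarantees $g(i, c_\epsilon) = 0$ for some $i$, so the minimization halts; once $n := n(c_\epsilon)$ is fixed the remaining operations are total. For $\isRat(\RLB(c_\epsilon))$: $\varphi^1(c_\epsilon, n)$ has the form $\langle s(n), N(n), D(n)\rangle$ with $D(n)\ne 0$, so $\isRat$ holds of it, $\langle 0,1,2^n\rangle$ codes $2^{-n}$, and by Lemma~\ref{lemma:rational_operations} the value $x -_\mathbb{Q} y$ is a code of a rational whenever $\isRat(x)$ and $\isRat(y)$. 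Finally, $g(n, c_\epsilon) = 0$ gives $2^{-n} < \rat(\varphi^1(c_\epsilon, n))$, whence $\rat(\RLB(c_\epsilon)) = \rat(\varphi^1(c_\epsilon,n)) - 2^{-n} > 0$, while $\lvert \epsilon - \rat(\varphi^1(c_\epsilon,n)) \rvert < 2^{-n}$ gives $\rat(\RLB(c_\epsilon)) < \epsilon$. The only mildly delicate point, and the one I would treat most carefully, is the termination step: exhibiting the index $n$ with $2^{-n} < \rat(f_\epsilon(n))$ from $\epsilon > 0$, and being explicit that $g$ is genuinely partial so that the general minimization operator (not relation-minimization) is what is being applied, with divergence permitted on inputs that do not code a positive computable real.
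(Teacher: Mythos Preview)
Your proposal is correct and follows essentially the same approach as the paper: both search, via unbounded minimization over $n$, for an index at which $\rat(\varphi^1(c_\epsilon,n)) - 2^{-n}$ is positive and then return that rational as the lower bound, relying on $|\epsilon - \rat(\varphi^1(c_\epsilon,n))| < 2^{-n}$ for the upper estimate. The only cosmetic difference is that you phrase the search condition as $\langle 0,1,2^n\rangle <_\mathbb{Q} \varphi^1(c_\epsilon,n)$ whereas the paper phrases the equivalent inequality as $\varphi^1(c_\epsilon,n) -_\mathbb{Q} \langle 0,1,2^n\rangle >_\mathbb{Q} \langle 0,0,1\rangle$.
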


\begin{proof}
    Let \(c_\epsilon \in \mathbb{N}\) be as described above. Set \(f_\epsilon (n) = \varphi^1(c_\epsilon, n)\). By definition \ref{def:computable} we have:
    \begin{equation}
        \left| \rat(f_\epsilon(n)) - \epsilon \right| < \frac{1}{2^n}, \quad \forall n \in \mathbb{N}
    \end{equation}
This implies that for all \(n \in \mathbb{N}\):
\begin{equation}\label{eq:RLB_inequality}
    \epsilon - \frac{1}{2^{n-1}} < \rat(f_\epsilon(n)) - \frac{1}{2^n} < \epsilon
\end{equation}
Thus, \(\rat(f_\epsilon(n)) - \frac{1}{2^n}\) is always a rational lower bound of \(\epsilon\). In addition, when \(n > \log(1/\epsilon) + 1\), the expression \(\epsilon - \frac{1}{2^{n-1}}\) is strictly positive, and by inequality \eqref{eq:RLB_inequality} we have:
\begin{equation}
    \rat(f_\epsilon(n)) - \frac{1}{2^n} > 0
\end{equation}

Therefore, there exist \(n \in \mathbb{N}\) for which \(\rat(f_\epsilon(n)) - \frac{1}{2^n}\) is positive. Hence, the minimization:
\begin{equation}
    L(c_\epsilon) = \mu n:\left( \varphi^1(c_\epsilon, n) -_\mathbb{Q} \langle 0, 1, 2^n  \rangle >_\mathbb{Q} \langle 0, 0, 1 \rangle \right)
\end{equation}
converges, and it returns a number \(n = L(c_\epsilon)\) for which \(\rat(f_\epsilon(n)) - \frac{1}{2^n}\) is a positive rational lower bound of \(\epsilon\). Consequently, \(\RLB\) can be defined by:
\begin{equation}
    \RLB(c_\epsilon) = \varphi^1(c_\epsilon, L(c_\epsilon)) -_\mathbb{Q} \left\langle 0, 1, 2^{L(c_
    \epsilon)} \right\rangle
\end{equation}

\end{proof}

Having the recursive function \(\RLB\), the extension of \(\FindCode\) that works for \(\epsilon \in \mathbb{R}_c\) is defined by:
\begin{equation}
    \FindCodeExt(c_H, c_R, c_\epsilon) = \FindCode(c_H, c_R, \RLB(c_\epsilon))
\end{equation}

If \(c_H\) is an encoding of a DMC \(p_{Y \mid X}\), \(c_R\) is a code of a rational \(R < C(p_{Y \mid X})\) and \(c_\epsilon\) is a code of a computable real \(\epsilon > 0\), then \(\RLB(c_\epsilon)\) is a code of a rational \(q\) with \(0 < q < \epsilon\) and \(\FindCodeExt(c_H, c_R, c_\epsilon)\) is an encoding of a block code \(\mathcal{C}\) for \(p_{Y \mid X}\) with rate at least \(R\) and with maximum block error probability:
\begin{equation}
    \lambda_{\max}\left(\mathcal{C}, p_{Y \mid X}\right) < q < \epsilon
\end{equation}

\subsection{Extension to \(R \in \mathbb{R}_c\)}\label{sec:extension2}

A similar extension can be considered for the case where \(R \in \mathbb{R}_c\). A natural attempt would be to define a function \(\MessageNumberExt(n, c_R) = \left\lceil 2^{nR} \right\rceil\), assuming \(c_R\) encodes a computable real number \(R\). However, such a function cannot be defined recursively.

\begin{lemma}
There does not exist any \(\mu\)-recursive function \(\MessageNumberExt : \mathbb{N}^2 \rightharpoonup \mathbb{N}\) satisfying:
\begin{equation}
\MessageNumberExt(n, c_R) = \left\lceil 2^{nR} \right\rceil
\end{equation}
for all \(n, c_R \in \mathbb{N}\) and \(R \in \mathbb{R}_c\) such that \(\isCom(c_R)\) and \(\com(c_R) = R > 0\).
\end{lemma}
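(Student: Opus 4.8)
The plan is to derive a contradiction with the non-recursiveness of the halting problem, mirroring the reduction used in Lemma~\ref{lemma:impossibility}. The key observation is that the function $n \mapsto \left\lceil 2^{nR} \right\rceil$ is sensitive to the exact value of $R$ at every power of $2$: if $R$ is slightly below an integer multiple of $1/n$, then $\left\lceil 2^{nR}\right\rceil$ differs from the value it would take if $R$ sat exactly on that threshold. Concretely, I would pick a fixed blocklength (say $n=1$) and a target value on the threshold, for instance $R$ near $1$, so that $\left\lceil 2^{R}\right\rceil$ is $2$ when $R \le 1$ and $3$ when $R > 1$. Deciding which case holds from a code $c_R$ of $R$ is exactly the kind of exact comparison with a rational that Lemma~\ref{lemma:impossibility} shows to be non-recursive.

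First I would, given arbitrary $e,x \in \mathbb{N}$, reuse the construction from the proof of Lemma~\ref{lemma:impossibility}: build the primitive recursive $f(e,x,k)$ that is $1$ iff $\exists y \le k : T_1(y,e,x)$, and from it a recursive rational approximation $f_M(n,e,x)$ of a computable real $\beta \in [0,1]$ (a sum of negative powers of $2$) with $\beta = 0 \Leftrightarrow \neg H(e,x)$ and $\beta > 0 \Leftrightarrow H(e,x)$, in fact $\beta < 1$ always since the geometric series is bounded. Then I would form the computable real $R = 1 - \beta \in \mathbb{R}_c$, which is $1$ exactly when $\neg H(e,x)$ and strictly less than $1$ (but positive) when $H(e,x)$. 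Using Lemma~\ref{lemma:effectiveness} and the recursive arithmetic on $\mathbb{R}_c$ from Lemma~\ref{lemma:computable_operations}, together with a fixed code of the constant $1$, there is a primitive recursive function $c_M$ producing a code $S^2_1(c_M,e,x)$ of this $R$ from $e,x$; note $\com(S^2_1(c_M,e,x)) = R > 0$ always, so it is a legitimate input to the hypothetical $\MessageNumberExt$.

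Next I would compute: if $\neg H(e,x)$ then $R = 1$ and $\left\lceil 2^{1 \cdot R}\right\rceil = \left\lceil 2 \right\rceil = 2$; if $H(e,x)$ then $1 < 2^R < 2$, so $\left\lceil 2^{R}\right\rceil = 2$ as well --- which unfortunately does \emph{not} separate the cases. To fix this I would instead arrange $R$ to straddle the threshold from the \emph{other} side, e.g.\ take $R = 1 + \beta$ (still computable, still positive), so that $\left\lceil 2^{R}\right\rceil = 2$ when $\neg H(e,x)$ (since $2^1 = 2$, and $\lceil 2 \rceil = 2$) and $\left\lceil 2^{R}\right\rceil = 3$ when $H(e,x)$ (since $2 < 2^{1+\beta} \le 2^{1 + 1} = 4$, and more carefully $2^{1+\beta} \in (2,4]$ but one must bound $\beta$ to keep it in $(0,1)$ so that $2^R \in (2,4)$ and the ceiling is $3$ or $4$; choosing the geometric weights so that $\beta < \log_2 3 - 1$ guarantees $2^R < 3$, hence $\left\lceil 2^R \right\rceil = 3$). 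Then $H(e,x) \Leftrightarrow \MessageNumberExt(1, S^2_1(c_M,e,x)) = 3$, and since $\MessageNumberExt$ would be recursive and $S^2_1, c_M$ are primitive recursive, the relation $H$ would be recursive, contradicting the non-recursiveness of the halting problem.

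The main obstacle is the arithmetic bookkeeping in the last step: one must choose the computable real $R$ so that (i) $R > 0$ always, so the hypotheses of $\MessageNumberExt$ are met, (ii) in the halting case $2^R$ lands strictly between two consecutive integers whose gap is detected by the ceiling, and (iii) in the non-halting case $2^R$ is exactly an integer (or just below one, depending on the convention for $\lceil\cdot\rceil$), so that a single comparison of the output of $\MessageNumberExt$ against a fixed natural number decides $H(e,x)$. Picking the series coefficients for $\beta$ small enough --- say $\beta = \sum_{k} f(e,x,k) 2^{-k-c}$ for a suitable constant shift $c$ so that $0 \le \beta < \log_2 3 - 1$ --- handles (ii) and (iii) simultaneously, and everything else is a routine assembly of the recursive primitives already developed in the paper.
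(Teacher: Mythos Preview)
Your proposal is correct and follows essentially the same approach as the paper: reduce from the halting problem by building, from $(e,x)$, a code of a computable real $R = 1 + \beta$ with $\beta = 0 \Leftrightarrow \neg H(e,x)$ and $\beta > 0 \Leftrightarrow H(e,x)$, so that $\lceil 2^{R} \rceil = 2$ decides $\neg H(e,x)$. The extra bookkeeping you perform to bound $\beta$ and force $\lceil 2^{R} \rceil = 3$ in the halting case is unnecessary---testing whether $\MessageNumberExt(1,\cdot) = 2$ already suffices---but it does not affect correctness.
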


\begin{proof}
    By way of contradiction, suppose that there exists a recursive function \(\MessageNumberExt\) with the above properties. Let \(c_M\) be defined as in the proof of Lemma \ref{lemma:non_computable_relations} and let \(c_1\) be a code of the computable real number 1. From expressions \eqref{eq:com_equality} and \eqref{eq:com_ordering} we have:
    \begin{gather}
        \com(c_1 +_{\mathbb{R}_c} S^2_1(c_M,e,x)) = 1 \Leftrightarrow \neg H(e,x) \\
        \com(c_1 +_{\mathbb{R}_c} S^2_1(c_M,e,x)) > 1 \Leftrightarrow H(e,x)
    \end{gather}
    The above equivalences imply that:
    \begin{align}
         &H(e,x)\Leftrightarrow\left\lceil 2^{\com(c_1 +_{\mathbb{R}_c} S^2_1(c_M,e,x))} \right\rceil = 2 \\
         &\Leftrightarrow \MessageNumberExt(1,\com(c_1 +_{\mathbb{R}_c} S^2_1(c_M,e,x))) = 2
    \end{align}
    from which it follows that the relation \(H\) is recursive, which is a contradiction. Therefore, such a function \(\MessageNumberExt\) does not exist.
    
\end{proof}

This motivates an approach analogous to the one outlined in Subsection \ref{sec:extension}, where the goal is to find a rational number \(\hat{R}\) such that \(R \leq \hat{R} < C(p_{Y \mid X})\), which can then be used as input to \(\FindCodeExt\). Achieving this requires the definition of two additional \(\mu\)-recursive functions:
\begin{enumerate}
    \item A function \(\Capacity : \mathbb{N} \rightharpoonup \mathbb{N}\), which takes as input an encoding of a DMC \(p_{Y \mid X}\) and returns a code of its capacity \(C(p_{Y \mid X})\). Computing the capacity of a DMC reduces to solving a convex optimization problem, which can be approached using methods such as the steepest descent algorithm. However, formally encoding such a procedure as a recursive function is beyond the scope of this paper.

    \item A function \(\RatInterpolation : \mathbb{N}^2 \rightharpoonup \mathbb{N}\), which takes as input two codes \(c_\alpha\) and \(c_\beta\) of computable real numbers \(\alpha\) and \(\beta\) with \(\alpha < \beta\), and returns a code of a rational number \(q\) satisfying \(\alpha < q < \beta\). The construction of this function is provided in Lemma \ref{lemma:interpolation}.
\end{enumerate}

\begin{lemma}\label{lemma:interpolation}
    There exists a recursive function \(\RatInterpolation : \mathbb{N}^2 \rightharpoonup \mathbb{N}\) satisfying:
    \begin{gather}
        \isRat(\RatInterpolation(c_\alpha, c_\beta)) \\
        \com(c_\alpha) < \RatInterpolation(c_\alpha, c_\beta) < \com(c_\beta)
    \end{gather}
    for all \(c_\alpha, c_\beta \in \mathbb{N}\) such that \(\isCom(c_\alpha)\), \(\isCom(c_\beta)\) and \(\com(c_\alpha) < 
    \com(c_\beta)\).
\end{lemma}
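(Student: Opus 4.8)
The plan is to exploit the fact that, although equality of computable reals is undecidable, a \emph{positive} gap between two of them can always be detected by comparing their rational approximations at a sufficiently fine precision; once such a precision is found, the rational midpoint of the two approximating rationals lies strictly between $\alpha=\com(c_\alpha)$ and $\beta=\com(c_\beta)$. Throughout, write $f_\alpha(n)=\varphi^1(c_\alpha,n)$ and $f_\beta(n)=\varphi^1(c_\beta,n)$ for the recursive rational approximations guaranteed by the encoding, so that $|\rat(f_\alpha(n))-\alpha|<2^{-n}$ and $|\rat(f_\beta(n))-\beta|<2^{-n}$ for all $n$. Each value $f_\alpha(n)$, $f_\beta(n)$ is a valid rational code, so $\isRat$ holds of it, which will eventually hand us $\isRat$ of the output via Lemma \ref{lemma:rational_operations}. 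Note also that the displayed order condition in the statement is the usual abuse of notation for $\com(c_\alpha)<\rat(\RatInterpolation(c_\alpha,c_\beta))<\com(c_\beta)$.

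First I would locate a usable precision level by the minimization
\[
n_0 \;=\; \mu n:\bigl(\varphi^1(c_\beta,n)\;-_{\mathbb{Q}}\;\varphi^1(c_\alpha,n)\;>_{\mathbb{Q}}\;\langle 0,2,2^n\rangle\bigr),
\]
that is, the least $n$ with $\rat(f_\beta(n))-\rat(f_\alpha(n))>2^{-n+1}$. This is recursive because $-_{\mathbb{Q}}$ and $>_{\mathbb{Q}}$ are primitive recursive (Lemma \ref{lemma:rational_operations}), $\varphi^1$ is recursive, and $n\mapsto\langle 0,2,2^n\rangle$ is primitive recursive. It converges on valid inputs: from the approximation bounds, $\rat(f_\beta(n))-\rat(f_\alpha(n))>(\beta-2^{-n})-(\alpha+2^{-n})=(\beta-\alpha)-2^{-n+1}$, so the tested inequality holds whenever $(\beta-\alpha)-2^{-n+1}\ge 2^{-n+1}$, i.e. for every $n>2-\log_2(\beta-\alpha)$. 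I would then set
\[
\RatInterpolation(c_\alpha,c_\beta)\;=\;\bigl(\varphi^1(c_\alpha,n_0)\;+_{\mathbb{Q}}\;\varphi^1(c_\beta,n_0)\bigr)\;/_{\mathbb{Q}}\;\langle 0,2,1\rangle,
\]
the code of the rational midpoint $q=\tfrac12\bigl(\rat(f_\alpha(n_0))+\rat(f_\beta(n_0))\bigr)$; since $\langle 0,2,1\rangle$ codes $2\ne 0$, Lemma \ref{lemma:rational_operations} ensures $/_{\mathbb{Q}}$ acts as ordinary division and that $\isRat(\RatInterpolation(c_\alpha,c_\beta))$ holds.

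It remains to check $\alpha<q<\beta$. The defining property of $n_0$ gives $\rat(f_\beta(n_0))-\rat(f_\alpha(n_0))>2^{-n_0+1}$, hence $q-\rat(f_\alpha(n_0))=\tfrac12\bigl(\rat(f_\beta(n_0))-\rat(f_\alpha(n_0))\bigr)>2^{-n_0}$, so $q>\rat(f_\alpha(n_0))+2^{-n_0}>\alpha$; symmetrically $\rat(f_\beta(n_0))-q>2^{-n_0}$, so $q<\rat(f_\beta(n_0))-2^{-n_0}<\beta$. Finally, $\RatInterpolation$ is recursive because it is assembled from $\varphi^1$, the primitive recursive rational operations, primitive recursive functions producing the constant codes, and one unbounded minimization, all of which preserve recursiveness (Lemma \ref{lemma:effectiveness}). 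There is no serious obstacle here; the only points requiring care are the termination argument for the minimization and keeping every inequality strict, and it is the cancellation of the two $2^{-n_0}$ terms in the midpoint that makes the output come out cleanly rational.
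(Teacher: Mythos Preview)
Your proof is correct and follows essentially the same approach as the paper: search by unbounded minimization for a precision $n$ at which the two rational approximations are more than $2\cdot 2^{-n}$ apart, then return their midpoint. Your presentation is in fact slightly cleaner, since you write the threshold as $\langle 0,2,2^n\rangle$ (avoiding the awkward $2^{n-1}$ at $n=0$) and you take the subtraction in the correct order $f_\beta-f_\alpha$, whereas the paper's displayed minimization has these reversed.
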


\begin{proof}
    Let \(c_\alpha\), \(c_\beta\) be as above and set \(\alpha = \com(c_\alpha)\), \(\beta = \com(c_\beta)\). Define \(f_\alpha(n) = \varphi^1(c_\alpha, n)\), \(f_\beta(n) = \varphi^1(c_\beta, n)\) and \(g(n) = (f_\alpha(n) +_\mathbb{Q} f_\beta(n)) /_\mathbb{Q} \langle 0, 2, 1 \rangle\). We will show that if:
    \begin{equation}\label{eq:rat_int_1}
    \rat(f_\beta(n)) - \rat(f_\alpha(n)) > \frac{1}{2^{n-1}}\
    \end{equation}
    then the value \(g(n)\) satisfies the desired inequality \(\alpha < \rat(g(n)) < \beta\). Suppose that the inequality \eqref{eq:rat_int_1} is true. Then, since \(\rat(g(n)) = \frac{\rat(f_\alpha(n)) + \rat(f_\beta(n))}{2}\), we have:
    \begin{align}
        \alpha < \rat(f_\alpha(n)) + \frac{1}{2^n} < \rat(g(n)) < \rat(f_\beta(n)) - \frac{1}{2^n} < \beta
    \end{align}
    
    Therefore, if the minimization:
    \begin{align}
        N(c_\alpha, c_\beta) &= \mu n: \left(f_\alpha(n) -_\mathbb{Q} f_\beta(n) >_\mathbb{Q} \left\langle 0, 1, 2^{n-1} \right\rangle\right) \\
        &= \mu n: \left(\varphi^1(c_\alpha, n) -_\mathbb{Q} \varphi^1(c_\beta, n) >_\mathbb{Q} \left\langle 0, 1, 2^{n-1} \right\rangle\right) \label{eq:rat_int_2}
    \end{align}
    converges, then the desired function \(\RatInterpolation\) can be defined by:
    \begin{equation}
        \RatInterpolation(c_\alpha, c_\beta) = g(N(c_\alpha, c_\beta))
    \end{equation}
    It now suffices to show that the minimization \eqref{eq:rat_int_2} converges. Set \(\epsilon = \beta - \alpha > 0\). Note that for \(n > \log(1/\epsilon) + 2 \Leftrightarrow \epsilon- \frac{1}{2^{n-1}} > \frac{1}{2^{n-1}}\) we have:
    \begin{align}
        \rat(f_\beta(n)) - \rat(f_\alpha(n)) &> \beta - \frac{1}{2^n} - \left( \alpha + \frac{1}{2^n} \right) \\
        &= \epsilon - \frac{1}{2^{n-1}} > \frac{1}{2^{n-1}}
    \end{align}
    From the above it follows that \eqref{eq:rat_int_2} converges. This concludes the proof.
\end{proof}

\section{Conclusion}\label{sec:conclusion}

This work has established the existence of a Turing machine that solves the problem of constructing capacity-achieving codes. This machine takes as input a DMC \(p_{Y \mid X}\), an error tolerance \(\epsilon\) and a coding rate \(R\), and in the case where \(R < C(p_{Y \mid X})\), it outputs a block code \(\mathcal{C}\) for \(p_{Y \mid X}\) with rate at least \(R\) and \(\lambda_{\max}(\mathcal{C}, p_{Y \mid X}) < \epsilon\). The construction works for the general case where all the transition probabilities of \(p_{Y \mid X}\) and the tolerance \(\epsilon\) are computable real numbers, and the rate \(R\) is a rational number. Furthermore, we discussed a generalization of this machine that works for \(R \in \mathbb{R}_c\). These results demonstrate that there exist general algorithmic methods for constructing capacity-achieving codes that work for all DMCs.

While the proposed machines do solve the general problem, they rely on exhaustive search techniques and exhibit exponential complexity, rendering them impractical. Nevertheless, several refinements can be applied to slightly reduce the complexity of the resulting algorithms. First, the exponential overhead of the classical encoding \(\langle \rangle\) can be avoided either by employing alternative polynomial-time primitive recursive encodings, such as extentions of the Cantor pairing function discussed in \cite{odifreddi1992classical}, or by dispensing with such encodings altogether and instead implementing data structures that support efficient list operations. Second, the search space of block codes \(\Codes(M, N, m, n)\) can be significantly reduced by restricting attention to codes \(\mathcal{C} = (E, D)\) in which the encoding function \(E : \mathcal{M} \rightarrow X^n\) is injective. It is straightforward to show that any non-injective encoding function yields maximum error probability at least \(\frac{1}{2}\); hence, since all relevant cases satisfy \(\epsilon < \frac{1}{2}\), no generality is lost. Furthermore, if \(\mathcal{M} = \{s_1, s_2, \dots s_m\}\), we may assume without loss of generality that the codewords \(E(s_1), E(s_2), \dots E(s_m)\) are ordered lexicographically, reducing the search space by an additional factor of \(m!\). Finally, for many practical channels---including the binary symmetric channel and the binary erasure channel with crossover or erasure probability less than \(\frac{1}{2}\)---the optimal decoding function \(D : Y^n \rightarrow \mathcal{M}\) satisfies the minimum Hamming distance rule:
\begin{equation}
D(\bar{y}) \in \arg\min_{s \in \mathcal{M}} d_H(\bar{y}, E(s)),
\end{equation}
where \(d_H\) denotes the Hamming distance. Thus, it suffices to restrict attention to codes consistent with this property. Nonetheless, even with all of these reductions, the search space remains exponential, and the resulting algorithms remain computationally infeasible.

The exponential nature of the algorithms does not diminish the significance of this work, since our goal is to establish the existence of a universal code-construction method, rather than to propose an efficient one. Having rigorously demonstrated the existence of such a method, future work may focus on identifying more efficient approaches or on characterizing possible trade-offs between algorithmic complexity and the generality of the channel models to which the method applies. For instance, although the consideration of computable real parameters is of theoretical interest, it is unnecessary in practice, since by the density of \(\mathbb{Q}\) in \(\mathbb{R}\), any channel can be approximated to arbitrary precision using a transition matrix with rational entries. Another natural direction is the algorithmic study of more general channel models, such as finite-state channels (FSCs).

\appendices

\section{Discussion for Capacity-Achieving Sequences of Codes}\label{app:extension3}

In this appendix, we compare the framework adopted in this work with that presented in \cite{boche2022turing}. The framework of \cite{boche2022turing} considers Turing machines that take as input a DMC and a parameter \(n\), and output a block code of length \(n\), whose error probability tends to zero and rate tends to the channel capacity as \(n \to \infty\). In contrast, our machine takes as input a DMC together with explicit bounds on the code rate and error probability. Our approach, however, can be adapted to align with the framework of \cite{boche2022turing} by providing our machine with input parameters \(\epsilon\) and \(R\) that asymptotically approach \(0\) and \(C(p_{Y \mid X})\), respectively. Specifically, we describe the construction of a Turing machine that generates a sequence of block codes \(\{\mathcal{C}_k\}\) for a given DMC \(p_{Y \mid X}\), such that, as \(k \to \infty\), the rate of \(\mathcal{C}_k\) approaches the capacity \(C(p_{Y \mid X})\) and its error probability tends to zero. This machine takes as input encodings of a DMC \(p_{Y \mid X}\) with computable transition probabilities and a parameter \(k \in \mathbb{N}\), and operates as follows:
\begin{enumerate}
    \item It calculates the capacity of \(p_{Y \mid X}\) using the function \(\Capacity\) discussed in Subsection \ref{sec:extension2}.
    \item It computes the computable real number \(C(p_{Y \mid X}) - \frac{1}{k}\).
    \item It employs the function \(\RatInterpolation\) to determine a rational rate \(R\) satisfying \(C(p_{Y \mid X}) - \frac{1}{k} < R < C(p_{Y \mid X})\).
    \item It calls the function \(\FindCode\) with inputs \(p_{Y \mid X}\), the rate \(R\), and an error tolerance \(\epsilon_k = \frac{1}{k}\), and returns the resulting block code.
\end{enumerate}

Since \(R < C(p_{Y \mid X})\), the proposed machine halts and outputs a block code \(\mathcal{C}_k\) with rate \(R_k \geq R > C(p_{Y \mid X}) - \frac{1}{k}\) and maximum block error probability \(\lambda_{\max}(\mathcal{C}_k, p_{Y \mid X}) < \frac{1}{k}\). It follows that the sequence \(\{\mathcal{C}_k\}\) satisfies
\begin{gather}
\lim_{k \to \infty} R_k = C(p_{Y \mid X}) \\
\lim_{k \to \infty} \lambda_{\max}(\mathcal{C}_k, p_{Y \mid X}) = 0
\end{gather}

A key difference between the machine described above and the one proven impossible in \cite{boche2022turing} is that the latter requires the constructed code to have a block length exactly equal to \(n\). In contrast, in our construction the parameter \(k\) is not directly tied to the codeword length. In fact, within the scope of this work, there are no evident bounds on the block length as a function of the parameter \(k\). Furthermore, we employ a different formulation for describing input DMCs. In \cite{boche2022turing} the notion of a \emph{computable family of channels} is used, which can represent a broader class of DMCs than those with merely computable real parameters. However, this formulation requires that the input space of DMCs have fixed input and output alphabet sizes \(M\) and \(N\), and that the corresponding family of channels be expressible as a \(M \times N\) matrix of \emph{computable continuous} functions. In contrast, our formulation imposes no such restrictions, allowing arbitrary DMCs of any dimension to be provided as input to the same Turing machine.

\bibliographystyle{IEEEtran}
\bibliography{bib}

\end{document}